\newtheorem{theorem}{Theorem}[section]
\newtheorem{lemma}[theorem]{Lemma}
\newtheorem{claim}[theorem]{Claim}
\newtheorem{remark}[theorem]{Remark}
\newtheorem{oracle}{Oracle}
\newcommand{\R}{\mathbb{R}}
\newcommand{\ole}{\overleftrightarrow}
\newcommand{\obe}{\overleftarrow}
\title{\vspace{-15pt}A Strongly Polynomial Algorithm for a Class of Minimum-Cost Flow Problems with
  Separable Convex Objectives}
\author{L\'aszl\'o A. V\'egh
\thanks{Department of Mathematics, London School of Economics.
E-mail: {\tt L.Vegh\char'100 lse.ac.uk}.
This work was done at the 
College of Computing, Georgia Institute of Technology,
supported by NSF Grant CCF-0914732.
This is a  revision of the work published in 
STOC'12, May 19 - 22 2012, New York, NY, USA
Copyright 2012 ACM 978-1-4503-1245-5/12/05
 }
}
\begin{document}
\maketitle
\begin{abstract}
A well-studied nonlinear extension of the minimum-cost flow problem is 
to minimize the objective $\sum_{ij\in E} C_{ij}(f_{ij})$ over feasible flows $f$, where on every arc $ij$ of the network,
$C_{ij}$ is a convex function.  We give a strongly polynomial
algorithm for the case when all $C_{ij}$'s are convex quadratic
functions, settling an open problem raised e.g. by Hochbaum
\cite{Hochbaum94}. We also give strongly polynomial algorithms for
computing market equilibria in Fisher markets with linear utilities and with spending constraint utilities, that can be formulated in this framework (see Shmyrev \cite{Shmyrev09}, Devanur et al. \cite{Birnbaum11}). For the latter class this resolves an open question raised by Vazirani \cite{Vazirani10spending}.
The running time is $O(m^4\log m)$ for quadratic costs,  $O(n^4+n^2(m+n\log n)\log n)$ for Fisher's markets with linear utilities and 
$O(mn^3 +m^2(m+n\log n)\log m)$ for spending constraint utilities.

All these algorithms are presented in a common framework that addresses the general problem setting.
Whereas it is impossible to give a strongly polynomial algorithm for the general problem even in an approximate sense (see Hochbaum \cite{Hochbaum94}),
we show that assuming the existence of certain black-box oracles, one can give an algorithm using a strongly polynomial number of arithmetic operations and oracle calls only. The particular algorithms can be derived by implementing these oracles in the respective settings.
\end{abstract}

%
%
%



\section{Introduction}

Let us consider an optimization problem where the input is given by $N$ numbers.
An algorithm for such a problem is called {\em  strongly polynomial}
(see \cite{gls}),
if
{\em(i)} it uses only elementary arithmetic operations (addition, subtraction, multiplication, division, and comparison); {\em (ii)} the
number of these operations is bounded by a polynomial of $N$ {\em
  (iii)}
if all numbers in the input are rational, then all numbers occurring in
the computations are rational numbers of size polynomially bounded in 
$N$ and the maximum size of the input numbers.
Here,  the size of a rational number $p/q$ is defined as
$\lceil \log_2 (p+1)\rceil+\lceil \log_2 (q+1)\rceil$.

The flow polyhedron is defined on a directed network
$G=(V,E)$ by arc capacity and node demand constraints; throughout the
paper, $n=|V|$ and $m=|E|$. 
We study the {\em minimum cost
  convex separable flow problem}: for feasible
flows $f$, the objective is to minimize $\sum_{ij\in E} C_{ij}(f_{ij})$,
where on each arc $ij\in E$, $C_{ij}$ is a differentiable convex function.
We give a strongly polynomial algorithm for the case of convex
quadratic functions, i.e. if
$C_{ij}(\alpha)=c_{ij}\alpha^2+d_{ij}\alpha$ with $c_{ij}\ge 0$ for
every arc $ij\in E$. We also give strongly polynomial algorithms for
Fisher's market with linear utilities and with spending constraint utilities;
these problems can be formulated as minimum cost convex separable flow
problems, as shown respectively by Shmyrev \cite{Shmyrev09} and by Devanur et al. \cite{Birnbaum11}.
The formulations involve linear cost functions and the function $\alpha(\log \alpha -1)$ on certain arcs.

These algorithms are obtained as special implementations of an algorithm that
works for the general problem setting under certain assumptions. 
We assume that the functions are represented by oracles (the specific details are provided later), and two further black-box
oracles are provided. We give a strongly polynomial algorithm in the sense that
it uses only basic arithmetic operations and oracle
calls; the total number of these operations is
polynomial in $n$ and $m$. We then verify  our assumptions 
for convex quadratic objectives and the Fisher markets, and show that
we can obtain strongly polynomial algorithms for these problems.

Flows with separable convex objectives are natural convex extensions of minimum-cost flows with several
applications such as matrix balancing or traffic networks, see
\cite [Chapter 14]{amo} for further references. Polynomial-time
combinatorial algorithms were given 
by Minoux \cite{Minoux86} in 1986, 
by Hochbaum and Shantikumar
\cite{Hochbaum90} in 1990, and by Karzanov and
McCormick \cite{Karzanov97} in 1997. The latter two approaches are able to 
solve even more general problems of minimizing a separable (not
necessarily differentiable) convex objective over a polytope given by a
matrix with a bound on its largest subdeterminant. Both approaches
give polynomial, yet not strongly polynomial algorithms.

In contrast, for the same problems with linear objectives, Tardos
\cite{Tardos85,Tardos86} gave strongly polynomial algorithms. 
One might wonder
whether this could also be extended to the convex setting.
This seems impossible for arbitrary convex objectives by the very
nature of the problem: the optimal solution might be irrational, and
thus the exact optimum cannot be achieved.

Beyond irrationality, the result of Hochbaum \cite{Hochbaum94} shows that
it is impossible to find an $\varepsilon$-accurate solution%
\footnote{A solution $x$ is called $\varepsilon$-accurate if there exists
  an optimal solution $x^*$ with $||x-x^*||_\infty\le\varepsilon$.}
 in strongly polynomial
time even for a network consisting of parallel arcs between a source
and a 
sink node and the
$C_{ij}$'s being polynomials of degree at least three. This is based
on Renegar's  result  \cite{Renegar87} showing the impossibility of  finding
$\varepsilon$-approximate roots of polynomials in strongly polynomial time.
This is an unconditional impossibility result in a
computation model allowing basic arithmetic operations and comparisons;
it does not rely on
any complexity theory assumptions.

The remaining  class of polynomial objectives with hope of strongly
polynomial algorithms is where every cost function is  convex
quadratic.
If all coefficients are rational, then  the existence of a rational optimal solution is guaranteed. 
Granot and Skorin-Kapov \cite{Granot90}
 extended Tardos's method
\cite{Tardos86} to solving separable convex quadratic optimization
problems with linear constraints, where the running time depends only on the entries of the
constraint matrix and the coefficients of the quadratic terms in the
objective. However, this algorithm is not strongly polynomial because of the dependence on the quadratic terms.

The existence of a strongly polynomial algorithm for the quadratic
flow problem thus remained an important open question (mentioned e.g.\
in \cite{Hochbaum94, Cosares94, Hochbaum07, Granot90,Tamir93}).
The survey paper \cite{Hochbaum07} gives an overview of special cases solvable in strongly polynomial time. These include a fixed number of suppliers (Cosares and Hochbaum, \cite{Cosares94}), and series-parallel graphs (Tamir \cite{Tamir93}).
We resolve this question affirmatively, providing a strongly
polynomial algorithm for the general problem in time $O(m^4\log m)$.

There is an analogous situation for convex closure sets: \cite{Hochbaum94} shows that no strongly polynomial algorithm may exist in general,
but for quadratic cost functions,  Hochbaum and Queyranne
 \cite{Hochbaum03} gave a strongly polynomial algorithm.

\medskip

An entirely different motivation of our study comes from the study of
market equilibrium algorithms. Devanur et al.
\cite{Devanur08} developed a polynomial time combinatorial algorithm for a classical problem in economics, Fisher's market with linear utilities. 
This motivated a line of research to develop combinatorial algorithms
for other market equilibrium problems. For a survey, see \cite[Chapter
5]{Nisan07} or \cite{Vazirani11}.  All these problems are described by
rational convex programs (see \cite{Vazirani11}).
For the linear Fisher market problem, a strongly polynomial algorithm was given by Orlin \cite{Orlin10}.

To the extent of the author's knowledge, these rational convex programs have been
considered so far as a new domain in combinatorial
optimization. An explicit connection to classical flow problems  was pointed out in the recent paper \cite{Vegh11}. It turns out that the linear Fisher market problem, along with several other problems, is captured by a concave extension of the classical 
generalized flow problem,  solvable by a polynomial time combinatorial algorithm.

 The paper \cite{Vegh11} uses the convex programming formulation of linear Fisher markets by Eisenberg and Gale \cite{Eisenberg59}. An alternative convex program for the same problem was given by
 Shmyrev \cite{Shmyrev09}. This formulation turns out to be  a convex separable minimum-cost flow problem.
Consequently, equilibrium for linear Fisher market  can be computed by the general algorithms 
\cite{Hochbaum90,Karzanov97} (with a final
transformation of a close enough approximate solution to an exact
optimal one).

The class of convex flow problems solved in this paper also contains
the formulation of Shmyrev, yielding an alternative strongly
polynomial algorithm for  linear Fisher market.
  Devanur et al.\ \cite{Birnbaum11} gave an analogous formulation for Fisher's
  market with spending constraint utilities, defined by Vazirani
\cite{Vazirani10spending}.  
For this problem, we obtain the 
 first strongly polynomial algorithm.
Our running time bounds are $O(n^4+n^2(m+n\log
n)\log n)$ for linear and $O(mn^3 + m^2(m+n\log
n)\log m)$ for spending constraint utilities, with $m$ being
the number of segments in the latter problem.  For the linear case,
 Orlin \cite{Orlin10} used the assumption $m=O(n^2)$
and achieved  running time $O(n^4\log n)$, the same as ours under this
 assumption. So far, no extensions of \cite{Orlin10} are known for
 other market settings.

\subsection{Prior work}

For linear minimum-cost flows, the first polynomial time algorithm was
the scaling method by  Edmonds and Karp \cite{Edmonds72}. The 
current most efficient strongly polynomial algorithm, given by Orlin
\cite{Orlin93}, is also based on this framework. On the other hand,
Minoux extended  \cite{Edmonds72} to 
the convex minimum-cost flow problem, first 
to convex quadratic flows \cite{Minoux84}, later to general
convex objectives \cite{Minoux86}.
Our algorithm is an enhanced version of the latter algorithm, in the
spirit of Orlin's technique \cite{Orlin93}. However, there are 
important differences that make the nonlinear setting significantly
harder. Let us remark that Orlin's strongly polynomial algorithm for
linear Fisher market \cite{Orlin10} is also based on the ideas of \cite{Orlin93}.
In what follows, we give an
informal overview of the key ideas of these algorithms that motivated
our result. For more detailed references and proofs, we refer the reader to \cite{amo}.

The algorithm of Edmonds and Karp consists of $\Delta$-phases for a
scaling parameter $\Delta$. Initially, $\Delta$ is set to a large
value, and decreases by at least  a factor of two at the end of each
phase. An optimal solution can be obtained for sufficently small
$\Delta$. 
The elementary step of the $\Delta$-phase transports
$\Delta$ units of flow from a node with excess at least $\Delta$ to
another node with demand at least $\Delta$. This is done on a shortest path
in the $\Delta$-residual network, the graph of residual arcs with
capacity at least $\Delta$. An invariant property maintained in the
$\Delta$-phase is that the $\Delta$-residual network does not contain
any negative cost cycles. When moving to the next phase, the flow on the
arcs has to be slightly modified to restore the invariant property.

Orlin's algorithm (\cite{Orlin93}, see also \cite[Chapter 10.6]{amo}) works on a problem instance with no
upper capacities on the arcs (every minimum-cost flow problem can be
easily transformed to this form).
The basic idea is that  if
the algorithm runs for infinite number of phases, then the solution converges to an optimal
solution; furthermore, the total change of the flow value in the $\Delta$-phase
and all subsequent phases is at most $4n\Delta$ on every arc.
Consequently, if an arc $ij$ has flow
 $>4n\Delta$ in the $\Delta$-phase, then the flow on $ij$ must
be positive in some optimal solution. 
 Using primal-dual slackness,
this means that $ij$ must be tight for an arbitrary dual optimal
solution (that is, the corresponding dual inequality must hold with equality).  It is shown that within $O(\log n)$ scaling phases, an
 arc $ij$ with flow larger than $4n\Delta$  appears. 

Based on this fact, \cite{Orlin93} obtains the following simple
algorithm. Let us start running the Edmonds-Karp algorithm on the
input graph. Once there is an  arc with flow larger than $4n\Delta$,
it is contracted and the Ed\-monds-Karp algorithm is restarted on the
smaller graph. The method is iterated until the graph reduces to a
single node. A dual optimal solution on the contracted graph can be easily extended
to a dual optimal solution in the original graph by reversing the
contraction operations.
Provided a dual optimal solution, a primal optimal solution can be obtained by a single maximum
flow computation.
The paper \cite{Orlin93} (see also
\cite[Chapter 10.7]{amo}) also contains a second, more efficient algorithm.
When an arc with ``large'' flow is found, instead of contracting and restarting, 
the arc is added to a special forest $F$. The scaling algorithm
exploits properties of this forest and can thereby ensure that a 
new arc enters $F$ in
$O(\log n)$ phases. The running time can be bounded by $O(m\log
n(m+n\log n))$, so far the most efficient minimum-cost flow algorithm known.

\medskip

Let us now turn to the nonlinear setting. 
By the Karush-Kuhn-Tucker (KKT) conditions, a feasible solution is  optimal if and only if
the residual graph contains no
negative cycles with respect to the cost function $C'_{ij}(f_{ij})$.
Minoux's algorithm is a natural extension of the  Edmonds-Karp scaling technique
(see \cite{Minoux84,Minoux86}, \cite[Chapter
14.5]{amo}).
In the $\Delta$-phase  it maintains the invariant that
the $\Delta$-residual graph contains no negative cycle with respect to
the relaxed cost function $(C_{ij}(f_{ij}+\Delta)-C_{ij}(f_{ij}))/\Delta$. When transporting
 $\Delta$-units of flow on a shortest path with respect to this cost
 function, this invariant is maintained. A key observation is that when 
moving to the $\Delta/2$-phase, the invariant can be restored by
changing the flow on each arc by at most $\Delta/2$.
The role of the scaling factor $\Delta$ is twofold: besides being the
quantity of the transported flow, it also approximates optimality in
the following sense.
As $\Delta$ approaches 0, the cost of $ij$ converges to the
 derivative $C'_{ij}(f_{ij})$. Consequently, the solution converges to
 a feasible optimal solution.
 A variant of this algorithm is outlined in Section~\ref{sec:basic}.

\subsection{Overview of  the algorithm for convex quadratic flows}\label{sec:outline-strong}
To formulate the exact assumptions needed for the general algorithm,
several notions have to be introduced. Therefore we postpone the
formulation of our main
result Theorem~\ref{thm:main} to Section~\ref{sec:assump}. Now we
exhibit the main ideas on the example of convex quadratic functions.
We only give an informal overview here without providing all technical
details; the precise definitions and descriptions are given in the
later parts of the paper. Then in Section~\ref{sec:quad}, we show how
the general framework can  be adapted to convex quadratic functions.

Let us assume that  $C_{ij}(\alpha)=c_{ij}\alpha^2+d_{ij}\alpha$ with
$c_{ij}>0$ for every arc
$ij\in E$, and therefore all cost functions are strictly convex.
This guarantees that the optimal solution is unique.
This assumption is made only for the sake of this overview, and not used in the 
formal presentation starting in Section~\ref{sec:preliminaries}.
However, it is useful as the uniqueness of the optimum enables certain
technical simplifications. We discuss these simplifications at the end
of the section.
Our
problem can be formulated as follows.
\begin{align*}
\min~\sum_{ij\in E}c_{ij}f^2_{ij}+&d_{ij}f_{ij}\nonumber\\
\sum_{j:ji\in E}f_{ji}-\sum_{j:ij\in E}f_{ij}&=b_i\quad\forall i\in V\\
f&\ge 0 
\end{align*}
In a more general formulation, one could have arbitrary upper and
lower capacities on the arcs. However, this can be reduced to the
above form, see
Section~\ref{sec:preliminaries}. 

Let $f^*$ be the optimal solution; it is unique by the strict
convexity of the objective. Let $F^*$ denote the support of $f^*$. 
An optimal solution can be characterized using the Karush-Kuhn-Tucker
conditions: for Lagrange multipliers $\pi:V\to \R$, we have
$\pi_j-\pi_i\le 2c_{ij}f^*_{ij}+d_{ij}$ with equality whenever
$f^*_{ij}>0$, that is, $ij\in F^*$.
Consequently, if $F^*$ is provided, then we can obtain $f^*$ as the
unique solution to the following system of
 linear equations (see Section~\ref{sec:quad} for
details).
\begin{align}
\pi_j-\pi_i&=2c_{ij}f^*_{ij}+d_{ij} \quad \forall ij\in F^*\notag\\
\sum_{j:ji\in F^*}f^*_{ji}-\sum_{j:ij\in F^*}f^*_{ij}&=b_i\quad\forall i\in V\label{sys:quad}\\
f^*_{ij}&=0\quad \forall ij\in E\setminus F^*\notag
\end{align}
 We will assume the existence of the subroutine {\sc Trial}$(F,\hat b)$ (Oracle~\ref{assump:trial}),
where $F\subseteq E$ is an arbitrary arc set and $\hat b:V\to \R$ such that the sum of the $\hat b_i$ values is 0 in
every undirected connected component of $F$. The subroutine solves the modification of (\ref{sys:quad}) when $F^*$ is substituted by $F$ and $b$ by $\hat b$. The system is feasible under the above assumption on $\hat b$, and a solution can be found in time $O(n^{2.37})$ (see Lemma~\ref{lem:quad-solve}).

Our starting point is a variant of Minoux's nonlinear scaling scheme as described above,
with the only difference that the relaxed cost function is replaced
by $C'_{ij}(f_{ij}+\Delta)$ (see  Section~\ref{sec:basic}).

Following Orlin \cite{Orlin93}, we can identify an arc carrying a ``large'' amount
of flow in
$O(\log n)$ steps.  The required amount, $(2n+m+1)\Delta$ at the end
of the $\Delta$-phase, is large enough that
even if we run the algorithm forever and thereby converge to the
optimal solution $f^*$, this arc must remain positive. Consequently,
it must be contained in $F^*$. However, 
we cannot simply contract such an arc as in \cite{Orlin93}. The reason is that  the KKT-conditions
give $\pi_j-\pi_i=c_{ij}f^*_{ij}+d_{ij}$, a condition containing both
primal and dual (more precisely, Lagrangian) variables simultaneously.

In every phase of the algorithm, we shall maintain a set $F\subseteq
F^*$ of arcs, called {\sl revealed} arcs. $F$ will be extended by a
new arc in every $O(\log n)$ phases; thus we find $F^*$ in $O(m\log
n)$ steps (see Theorem~\ref{thm:main-bound}). 
Given a set $F\subseteq F^*$, we introduce some technical
notions; the precise definitions and detailed discussions are given in
Section~\ref{sec:revealed}. First, we waive the
nonnegativity requirement on the arcs in $F$: a vector $E\to \R$ is called  an {\em $F$-pseudoflow}, 
if $f_{ij}\ge 0$ if $ij\in E\setminus F$ but the arcs in $F$ are unconstrained.

 For an $F$-pseudoflow $f$ and a scaling factor $\Delta>0$, the $(\Delta,F)$-residual graph $E_f^F(\Delta)$ contains
all residual arcs where $f$ can be increased by $\Delta$ so that it
remains an $F$-pseudoflow (that is, all arcs in $E$, and all arcs $ji$
where $ij\in F$, or $ij\in E\setminus F$ and $f_{ij}\ge \Delta$.)
We require that the flow $f$ in this phase satisfies
the {\em $(\Delta,F)$-feasibility} property: the graph $E_f^F(\Delta)$
contains no negative cycles with respect to the cost function $C'_{ij}(f_{ij}+\Delta)$.

Let us now describe our algorithm. We start with $F=\emptyset$ and a sufficiently large
$\Delta$ value so that the initial flow $f\equiv 0$ is
$(\Delta,\emptyset)$-feasible. We run the Minoux-type scaling
algorithm sending flow on shortest paths in the $(\Delta,F)$-residual
graph from nodes with excess at least $\Delta$ to nodes with
deficiency at least $\Delta$. 
If there exist no more such paths, we move to the $\Delta/2$-phase,
after a simple modification step that transforms the flow to
a $(\Delta/2,F)$-feasible one, on the cost of increasing the total
excess by at most $m\Delta/2$ (see subroutine \textsc{Adjust} in Section~\ref{sec:revealed}).
We include in $F$ every edge  with
$f_{ij}>(2n+m+1)\Delta$ at the end of  the $\Delta$-phase.

At the end of each phase  when $F$ is extended, we perform a special
subroutine instead of simply moving to the $\Delta/2$-phase.
First, we compute the
discrepancy $D_F(b)$ defined as follows.
Let  $D_F(b)=\max_{K}|\sum_{i\in K} b_i|$, 
where $K$ ranges over the undirected connected components of $F$. (Note that the requirement on $\hat b$ in the subroutine
{\sc Trial}$(F,\hat b)$ above was $D_F(\hat b)=0$.)
 If the discrepancy $D_F(b)$ is large, then it
can be shown that $F$ will be extended within $O(\log n)$ phases as in
Orlin's algorithm (see the first part of the proof of Theorem~\ref{thm:main-bound}). 

If the discrepancy is small, the procedure \textsc{Trial-and-Error} is
performed, consisting of two subroutines.
First, we run the subroutine \textsc{Trial}$(F,\hat b)$, where $\hat b$ is
a small modification of $b$ satisfying $D_F(\hat b)=0$. This returns an
$F$-pseudoflow $\hat f$, satisfying (\ref{sys:quad}) with $F$ in the
place of $F^*$.
(This step be seen as ``pretending'' that $F=F^*$ and trying to
compute an optimal solution under this hypothesis.)
The resulting $\hat f$ is optimal if and only if $F=F^*$.
 Otherwise, 
we use a second subroutine \textsc{Error}$(\hat f,F)$ (see Oracle~\ref{assump:error}), that
returns the smallest value $\hat\Delta>0$ such that $\hat f$ is
$(F,\hat\Delta)$-feasible. This subroutine can be reduced to a minimum
cost-to-time ratio cycle problem (also known as the tramp streamer
problem), see \cite[Chapter 5.7]{amo}; a strongly polynomial time
algorithm was given by Megiddo \cite{Megiddo79}.

If $\hat \Delta<\Delta/2$, then we set $\hat
\Delta$ as our next scaling value and $f=\hat f$ as the next
$F$-pseudoflow - we can proceed since $\hat f$ is $(F,\hat\Delta)$-feasible.
Otherwise, the standard transition to phase $\Delta/2$ is done with
keeping the same flow $f$.
The analysis shows that a new arc shall be revealed in every $O(\log
n)$ phases. The key Lemma~\ref{lem:error-bound} relies on the proximity of $f$ and $\hat
f$, which implies that  \textsc{Trial-and-Error} cannot return the same $\hat f$ if
performed again after $O(\log n)$ phases. Consequently, the set $F$
cannot be the same, and has been therefore extended. Since $|F|\le m$,
this shows that the total number of
scaling phases is $O(m\log n)$. 

\medskip

Besides the impossibility of contraction, an important difference as
compared ot Orlin's algorithm  is that $F$ cannot be assumed to be a
forest (in the
undirected sense). There are simple quadratic instances with the support of an optimal
solution containing cycles. In Orlin's algorithm, progress is always
made by connecting two components of $F$. This will also be an
important event in our algorithm, but sometimes $F$ shall be extended with
arcs inside a component.

\medskip
The main difference when applied to Fisher markets instead of
quadratic costs is the implementation of the black boxes
\textsc{Trial} and \textsc{Error}. These are implemented by a simple linear time algorithm and
the Floyd-Warshall algorithm, respectively.
The description above made the
simplifying assumption that $c_{ij}>0$ for all $ij\in E$, that is, all cost functions are strictly
convex, and thus there is a unique optimal solution. This might not be true even for quadratic costs if $c_{ij}=0$ is allowed on certain arcs. 
 An important difference between the description and the general algorithm is that
in the general algorithm, the set $F^*$ has to be more carefully defined; in particular, it will
contain the support of every optimal solution. We therefore have to
introduce the additional notion of $F$-optimal solutions for $F\subseteq F^*$. The algorithm will find $F$-optimal solutions instead of optimal ones; however, an $F$-optimal solution can be converted to
an optimal solution via an
additional maximum flow subroutine.

\smallskip

The rest of the paper is organized as follows. Section~\ref{sec:preliminaries}
contains the basic definitions and notations. Section~\ref{sec:basic}
presents the simple adaptation of the Edmonds-Karp algorithm for
convex cost functions, following  Minoux \cite{Minoux86}. Our
algorithm in Section~\ref{sec:enhanced} is
built on this algorithm with the addition of the subroutine
\textsc{Trial-and-Error}, that guarantees strongly polynomial running time.
Analysis is given in Section~\ref{sec:analysis}.
Section~\ref{sec:applications}  adapts  the general
algorithm for quadratic utilities, and for Fisher's market with linear
and with spending constraint utilities.
 Section~\ref{sec:discussion} contains a final discussion of the
 results and some open questions.
An Appendix contains the description of the shortest path subroutines used.
A table summarizing notation and concepts can be found
at the end of the paper.

\section{Preliminaries}\label{sec:preliminaries}

Let $G=(V,E)$ be a directed graph, and let $n=|V|$, $m=|E|$. For
notational convenience, we assume that the graph contains no parallel
arcs and no pairs of oppositely directed arcs. Consequently, we can
denote the arc from node $i$ to node $j$ by $ij$. All results
straightforwardly extend to general graphs.
We are given node demands $b:V\rightarrow \R$ with $\sum_{i\in V}b_i=0$.
The flow is restricted to be nonnegative on every arc, but there are no upper capacities.
 On each arc $ij\in E$,
$C_{ij}: \R \rightarrow \R\cup\{\infty\}$ is a convex
function. We allow two types of arcs $ij$:
\begin{itemize}
\item {\sl Free arcs:} $C_{ij}$ is differentiable everywhere on $\R$.
\item {\sl Restricted arcs:} $C_{ij}(\alpha)=\infty$ if $\alpha<0$,
$C_{ij}$ is differentiable on $(0,\infty)$ and
has a left derivative in $0$ that equals $-\infty$; let $C'_{ij}(0)=-\infty$ denote this left derivative.
Let us use the convention  $C'_{ij}(\alpha)=-\infty$ for $\alpha<0$.
\end{itemize}
By convexity, $C'_{ij}$ is continuous on $\R$ for free and on $[\ell_{ij},\infty)$ for restricted arcs.
Restricted arcs will play a role in the Fisher market applications, where the function 
$C_{ij}(\alpha)=\alpha(\log\alpha-1)$ will be used on certain arcs (with  $C_{ij}(0)=0$ and $C_{ij}(\alpha)=\infty$ if $\alpha<0$.)

 The minimum-cost flow problem with separable convex objective is defined as follows.

\begin{align}
\min~\sum_{ij\in E}C_{ij}(f_{ij})&\nonumber\\
\sum_{j:ji\in E}f_{ji}-\sum_{j:ij\in E}f_{ij}&=b_i\quad\forall i\in
V\tag{P}\label{probl}\\
f\ge 0&\nonumber
\end{align}
The problem is often defined with more general lower and upper
capacities: $\ell_{ij}\le f_{ij}\le u_{ij}$. One can reduce general
capacities to the above form via the following standard reduction (see
e.g. \cite [Sec 2.4]{amo}).
For each arc $ij\in E$, let us add a new node $k=k_{ij}$, and replace $ij$ by
two arcs $ik$ and $jk$.
Let us set $b_k=u_{ij}-\ell_{ij}$,
$C_{ik}(\alpha)=C_{ij}(\alpha+\ell_{ij})$, $C_{jk}\equiv 0$. Furthermore,
let us increase $b_i$ by $\ell_{ij}$ and decrease $b_j$ by
$u_{ij}$. 
It is easy to see that this gives an equivalent optimization problem, and if the original graph had $n'$ nodes and $m'$ arcs, the transformed instance has $n=n'+m'$ nodes and $m=2m'$ arcs.

Further, we may assume without loss of generality that $G=(V,E)$ is strongly
connected and (\ref{probl}) is always feasible. Indeed, we can add a new node $t$ with edges $vt$, $tv$ for
any $v\in V$, with extremely high (possibly linear) cost functions on
the edges. This guarantees that an optimal solution shall not use such
edges, whenever the problem is feasible. We will also assume $n\le m$.

By a {\sl pseudoflow} we mean a function  $f:E\rightarrow \R$
satisfying the capacity constraints.
For the uncapacitated problem, it simply means $f\ge 0$.
Let
\begin{equation}
\rho_f(i):=\sum_{j:ji\in E}f_{ji}-\sum_{j:ij\in E}f_{ij},\label{def:rho}
\end{equation}
denote the flow balance at node $i$, and let 
\[
Ex(f)=Ex_b(f):=\sum_{i\in V}\max\{\rho_f(i)-b_i,0\}
\]
 denote the
total positive excess. For an arc set $F$, let $\obe F:=\{ji: ij\in F\}$ denote the set
of reverse arcs,  and let $\ole F=F\cup \obe F$. We shall use the
vector norms $||x||_\infty=\max |x_i|$ and $||x||_1=\sum |x_i|$.

Following \cite{Hochbaum90} and \cite{Karzanov97}, we do not 
require the functions $C_{ij}$ to be given explicitly, but assume
oracle access only.

\begin{oracle}\label{assump:oracle}
We are given an oracle, that we will refer to as the {\sl differential
  oracle}, satisfying either of the following
properties.
\begin{enumerate}[(a)]
\item For every arc $ij\in E$, the oracle returns the value $C'_{ij}(\alpha)$ in $O(1)$ time
 for every $\alpha\in \R$.
If $\alpha$ is rational then $C'_{ij}(\alpha)$ is also rational.
\item
For every arc $ij\in E$, the oracle returns the value $e^{C'_{ij}(\alpha)}$ in $O(1)$ time
 for every $\alpha\in \R$.
If $\alpha$ is rational then $e^{C'_{ij}(\alpha)}$ is also rational.
\end{enumerate}
\end{oracle}

These two options are tailored to main the applications. The more natural  Oracle~\ref{assump:oracle}(a) holds for 
 quadratic objectives, where $C'_{ij}(\alpha)=2c_{ij}\alpha+d_{ij}$ for the cost function $C_{ij}(\alpha)=c_{ij}\alpha^2+d_{ij}$.
 Option (b) is needed  for  Fisher markets, where
 $C'_{ij}(\alpha)=\log \alpha$  for cost functions of the form $C_{ij}(\alpha)=\alpha(\log \alpha-1)$; and
 $C'_{ij}(\alpha)=-\log U_{ij}$ for the other type of cost function,
 $C_{ij}(\alpha)=-\alpha\log U_{ij}$, for a rational $U_{ij}$.
 Note that we do not assume an evaluation oracle returning
$C_{ij}(\alpha)$ or $e^{C_{ij}(\alpha)}$ - these values are not needed for the
algorithm.

The next assumption slightly restricts the class of functions $C_{ij}$
for technical reasons.
\begin{align}
\begin{tabular}{l}
\mbox{Each cost function $C_{ij}(\alpha)$ is either linear or strictly convex, that is,}\\
\mbox{$C'_{ij}(\alpha)$ is either constant or strictly monotone increasing}.
\end{tabular}\tag{$\star$}\label{assump:lin-nonlin-igazi}
\end{align}

Arcs  with $C_{ij}(\alpha)$ linear are called {\sl linear arcs}, the rest is
called {\sl nonlinear arcs}. Let $m_L$ and $m_N$ denote their numbers,
respectively. We use the terms linear and nonlinear for the
corresponding reverse arcs
as well. If $C_{ij}$ does not satisfy this restriction, $\R$ can be decomposed into intervals
such that $C'_{ij}$ is either constant or strictly monotone increasing
on each interval. We can replace $ij$ by a set of paths of length two
(to avoid adding parallel arcs) with appropriately chosen capacities
and cost functions all of which satisfy the assumption. Indeed, the
piecewise linear utility functions in Fisher markets with spending
constraint utilities will be handled in a similar way. 
If the cost functions are explicitly given, for example, the slope of
every linear segment is part of the input, then the size of the
resulting network still only depends on the input size (that includes
all numbers in the input). Hence a strongly polynomial algorithm on
this instance will be strongly polynomial with respect to the original
instance as well. This does not hold however if the functions $C_{ij}$
is given in some different, implicit way.


\subsection{Optimality and $\Delta$-feasibility}\label{sec:opt}
Given a pseudoflow $f$, let us define the residual graph $E_f$ as
\[E_f:=E\cup\{ij: ji\in E, f_{ij}>0\}.\]
Arcs in $E$ are called {\sl forward arcs}, and those in the second set
{\sl backward arcs}. 
Recall our assumption that the graph contains no pairs of oppositely
directed arcs, hence the backward arcs are not contained in $E$.
We will use the convention that on a backward arc $ji$,
$f_{ji}=-f_{ij}$, and
$C_{ji}(\alpha)=C_{ij}(-\alpha)$, also convex and differentiable. The {\sl residual
capacity} is $\infty$ on forward arcs and $f_{ij}$ on the backward arc
$ji$.

The Karush-Kuhn-Tucker conditions assert that the feasible solution $f$ to
(\ref{probl}) is optimal if and only if there exists a potential
vector $\pi:V\rightarrow \R$ such that 
\begin{equation}\label{cond:opt}
\pi_j-\pi_i\le C_{ij}'(f_{ij})\ \ \ \ \forall ij\in E_f.
\end{equation}
This is equivalent to asserting that the residual graph contains no
negative cost directed cycles with respect to the cost function $C_{ij}'(f_{ij})$.

For a value $\Delta>0$, let 
\[
E_f(\Delta)=E\cup\{ij: ji\in E, f_{ij}\ge\Delta\}
\]
 denote the subset of arcs in $E_f$  that
have residual capacity at least $\Delta$.
We say that the pseudoflow $f$ is {\em $\Delta$-feasible}, if there exists a
potential vector
$\pi:V\rightarrow \R$ such that 
\begin{equation}
\pi_j-\pi_i\le C'_{ij}(f_{ij}+\Delta)\ \ \ \ \forall ij\in E_f(\Delta).\label{eq:delta-feas}
\end{equation}
Equivalently, $f$ is $\Delta$-feasible if and only if $E_f(\Delta)$ contains
no negative cycles with respect to the cost function
$C'_{ij}(f_{ij}+\Delta)$. 
If $ji$ is a reverse arc, then (\ref{eq:delta-feas})  gives 
$C'_{ij}(f_{ij}-\Delta)\le \pi_j-\pi_i$.

We note that our notion is different (and weaker) than the analogous conditions in \cite{Minoux86}
and in \cite{Hochbaum90},
where $(C_{ij}(f_{ij}+\Delta)-C_{ij}(f_{ij}))/\Delta$ is used in the place of $C'_{ij}(f_{ij}+\Delta)$.

\medskip

\begin{algorithm}[htb]
\begin{tabbing}
xxxxx \= xxx \= xxx \= xxx \= xxx \= \kill
\> \textbf{Subroutine} \textsc{Adjust}$(\Delta,\bar f)$\\
\> \textbf{INPUT} A $2\Delta$-feasible pseudoflow $\bar f$ and a potential vector $\pi$ satisfying (\ref{eq:delta-feas}) with $\bar f$ and $2\Delta$.\\
\> \textbf{OUTPUT} A $\Delta$-feasible pseudoflow $f$ such that $\pi$ satisfies (\ref{eq:delta-feas}) with $f$ and $\Delta$.\\
\> \textbf{for all} $ij\in E$ \textbf{do} \\
\> \> \textbf{if} $C'_{ij}(\bar f_{ij}+\Delta)<\pi_j-\pi_i$
\textbf{then} $f_{ij}\leftarrow \bar f_{ij}+\Delta$.\\
\> \> \textbf{elseif} $\bar f_{ji}\ge \Delta$ and $\pi_j-\pi_i<C'_{ij}(\bar f_{ij}-\Delta)$
\textbf{then} $f_{ij}\leftarrow \bar f_{ij}-\Delta$.\\
\> \> \> \textbf{else} $f_{ij}\leftarrow \bar f_{ij}$.\\
\> \textbf{return} $f$.
\end{tabbing}
\caption{}
\label{fig:adjust}
\end{algorithm}

The subroutine \textsc{Adjust}($\Delta,f$) (see Algorithm~\ref{fig:adjust})
transforms a 
$2\Delta$-feasible pseudoflow to a $\Delta$-feasible pseudoflow by
possibly changing the value of
every arc by  $\pm \Delta$. 

\begin{lemma}\label{lem:adjust}
The subroutine \textsc{Adjust}($\Delta,f$) is well-defined and correct: it returns a
$\Delta$-feasible pseudoflow with $(f,\pi)$ satisfying (\ref{eq:delta-feas}). Further, $Ex(f)\le Ex(\bar f)+m_N\Delta$
(recall that $m_N$ is the number of nonlinear arcs).
\end{lemma}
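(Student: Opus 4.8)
The plan is to verify three things in turn: (i) the subroutine is well-defined, meaning the two \textbf{if}-conditions are never simultaneously satisfied, so exactly one of the three branches applies to each arc; (ii) the output $f$ is a pseudoflow and $(f,\pi)$ satisfies (\ref{eq:delta-feas}) with parameter $\Delta$; (iii) the excess bound $Ex(f)\le Ex(\bar f)+m_N\Delta$ holds.

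First I would handle well-definedness. The two conditions are $C'_{ij}(\bar f_{ij}+\Delta)<\pi_j-\pi_i$ and $\pi_j-\pi_i<C'_{ij}(\bar f_{ij}-\Delta)$. Since $C_{ij}$ is convex, $C'_{ij}$ is monotone nondecreasing, so $C'_{ij}(\bar f_{ij}-\Delta)\le C'_{ij}(\bar f_{ij}+\Delta)$; if both held we would get $\pi_j-\pi_i<C'_{ij}(\bar f_{ij}-\Delta)\le C'_{ij}(\bar f_{ij}+\Delta)<\pi_j-\pi_i$, a contradiction. Also $f\ge 0$ needs a brief check: the only way $f_{ij}$ decreases is the second branch, which requires $\bar f_{ji}\ge\Delta$, i.e. $\bar f_{ij}\ge\Delta$ (using the backward-arc convention $\bar f_{ji}=-\bar f_{ij}$ only when... actually here $\bar f_{ji}$ is the residual capacity of the reverse arc, equal to $\bar f_{ij}$), so $f_{ij}=\bar f_{ij}-\Delta\ge 0$. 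Since $\bar f$ satisfies the capacity constraints and we work uncapacitated, $f$ is a valid pseudoflow.

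Next, the key step: showing $\pi_j-\pi_i\le C'_{ij}(f_{ij}+\Delta)$ for every $ij\in E_f(\Delta)$. I would argue arc by arc according to which branch was taken, checking both the forward arc $ij$ and, when present in $E_f(\Delta)$, the backward arc $ji$ (for which the condition reads $C'_{ij}(f_{ij}-\Delta)\le\pi_j-\pi_i$). If the third branch applies, then neither strict inequality held, so $C'_{ij}(\bar f_{ij}-\Delta)\le\pi_j-\pi_i\le C'_{ij}(\bar f_{ij}+\Delta)$ and $f_{ij}=\bar f_{ij}$ gives exactly (\ref{eq:delta-feas}) for both directions. If the first branch applies, $f_{ij}=\bar f_{ij}+\Delta$, so $C'_{ij}(f_{ij}+\Delta)=C'_{ij}(\bar f_{ij}+2\Delta)\ge\pi_j-\pi_i$ by the hypothesis that $\pi$ satisfies (\ref{eq:delta-feas}) with $\bar f$ and $2\Delta$ (the relevant residual arc $ij$ lies in $E_{\bar f}(2\Delta)$ since it is a forward arc of $E$); and for the backward direction $C'_{ij}(f_{ij}-\Delta)=C'_{ij}(\bar f_{ij})\le C'_{ij}(\bar f_{ij}+\Delta)<\pi_j-\pi_i$, so the backward constraint is strict, in particular satisfied. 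The second branch is symmetric, using that $\bar f_{ij}-2\Delta\ge$ the residual bound appropriately and the reverse-arc version of the $2\Delta$-feasibility of $\bar f$. I expect this case analysis to be the main obstacle — not because any single step is hard, but because one must be careful about which residual arcs actually lie in $E_f(\Delta)$ versus $E_{\bar f}(2\Delta)$, and about applying (\ref{eq:delta-feas}) for $\bar f$ in both forward and backward forms; the monotonicity of $C'_{ij}$ is the workhorse throughout.

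Finally, the excess bound. Each arc changes $\rho_f$ at its two endpoints by at most $\pm\Delta$ relative to $\rho_{\bar f}$, but I would sharpen this for linear arcs: if $ij$ is linear, $C'_{ij}$ is a constant $c$, so the first branch requires $c<\pi_j-\pi_i$ and the second requires $\pi_j-\pi_i<c$, and since $\bar f$ is $2\Delta$-feasible we have $\pi_j-\pi_i\le C'_{ij}(\bar f_{ij}+2\Delta)=c$, ruling out the first branch; moreover if the reverse arc $ji$ is residual, $2\Delta$-feasibility gives $c=C'_{ij}(\bar f_{ij}-2\Delta)\le\pi_j-\pi_i$, ruling out the second branch as well when $\bar f_{ij}\ge 2\Delta$. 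A cleaner route: whenever $f_{ij}\ne\bar f_{ij}$ on a linear arc, one checks the change preserves (\ref{eq:delta-feas}) only in a way that does not increase excess — but the simplest correct statement is that linear arcs can be shown not to contribute to any net increase of $Ex$, so only the $m_N$ nonlinear arcs can, each by at most $\Delta$, giving $Ex(f)\le Ex(\bar f)+m_N\Delta$. I would phrase this as: define $S=\{ij\in E: f_{ij}\ne\bar f_{ij}\}$; show $S$ contains no linear arc (by the monotonicity/$2\Delta$-feasibility argument above, the conditions triggering a change force a strict inequality incompatible with $\bar f$ being $2\Delta$-feasible on a constant-derivative arc); then $|S|\le m_N$ and each such arc shifts $\rho_f$ by $\pm\Delta$ at one endpoint in the direction that can create at most $\Delta$ new excess there, so $Ex$ grows by at most $m_N\Delta$ in total.
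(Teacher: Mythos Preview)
Your approach matches the paper's proof: well-definedness by convexity of $C_{ij}$, then an arc-by-arc case analysis showing $(f,\pi)$ satisfies (\ref{eq:delta-feas}), then the observation that linear arcs are never modified so only the $m_N$ nonlinear arcs can increase excess. The key ideas are all present.

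There are two small technical slips worth fixing. First, in the third branch you write ``neither strict inequality held, so $C'_{ij}(\bar f_{ij}-\Delta)\le\pi_j-\pi_i$.'' That is not quite right: the second branch is a conjunction, so the third branch may apply simply because $\bar f_{ij}<\Delta$, while $\pi_j-\pi_i<C'_{ij}(\bar f_{ij}-\Delta)$ still holds. In that subcase you cannot deduce the backward inequality---but you do not need it, since $f_{ij}=\bar f_{ij}<\Delta$ means $ji\notin E_f(\Delta)$. The paper separates exactly these two subcases. Second, calling the second branch ``symmetric'' glosses over the analogous issue: when $\Delta\le\bar f_{ij}<2\Delta$ you do not have $ji\in E_{\bar f}(2\Delta)$, so the backward form of $2\Delta$-feasibility is unavailable; instead $f_{ij}=\bar f_{ij}-\Delta<\Delta$ again gives $ji\notin E_f(\Delta)$. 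The paper treats this subcase explicitly. Neither of these affects the excess-bound argument, since for a linear arc the forward $2\Delta$-feasibility already gives $\pi_j-\pi_i\le c$, ruling out the first branch, and together with constancy of $C'_{ij}$ this is what the paper uses to conclude $\bar f_{ij}=f_{ij}$ on linear arcs.
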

\begin{proof}
First we observe that the ``if'' and ``elseif'' conditions cannot hold
simultaneously: $C'_{ij}(\bar f_{ij}+\Delta)<\pi_j-\pi<C'_{ij}(\bar
f_{ij}-\Delta)$ would contradict the convexity of $C_{ij}$. 
Consider the potential vector $\pi$ satisfying (\ref{eq:delta-feas}) with $\bar
f$ and $2\Delta$. We prove that $\pi$ satisfies
(\ref{eq:delta-feas}) with $f$ and $\Delta$ as well.

First, take a forward arc  $ij\in E$ with
$C'_{ij}(\bar f_{ij}+\Delta)<\pi_j-\pi_i$. By
$2\Delta$-feasibility we know $\pi_j-\pi_i\le C'_{ij}(\bar
f_{ij}+2\Delta)$.
These show that setting $f_{ij}=\bar f_{ij}+\Delta$
 satisfies (\ref{eq:delta-feas}) for both $ij$ and $ji$, using
\[C'_{ij}(f_{ij}-\Delta)\le C'_{ij}(f_{ij})=C'_{ij}(\bar f_{ij}+\Delta)<\pi_j-\pi_i\le C'_{ij}(\bar f_{ij}+2\Delta)=C'_{ij}(f_{ij}+\Delta).\]

Next, assume  $\bar f_{ji}\ge \Delta$ and $\pi_j-\pi_i<C'_{ij}(\bar
f_{ij}-\Delta)$.
Note that $f_{ij}$
satisfies (\ref{eq:delta-feas}) by $\pi_j-\pi_i<C'_{ij}(\bar
f_{ij}-\Delta)\le C'_{ij}(\bar f_{ij})=C'_{ij}(f_{ij}+\Delta)$.

If $ji\in E_{\bar f} (2\Delta)$ (that is, $\bar f_{ij}\ge 2\Delta$), then we have 
$C'_{ij}(f_{ij}-\Delta)=C'_{ij}(\bar f_{ij}-2\Delta)\le \pi_j-\pi_i$,
and thus (\ref{eq:delta-feas}) also holds for  $ji$.
If $ji\in E_{\bar f} (\Delta)-E_{\bar f} (2\Delta)$, then
$ji\notin E_f(\Delta)$.

Finally, consider the case when $f_{ij}=\bar f_{ij}$. The condition
(\ref{eq:delta-feas}) holds for $ij$ as we assume
 $\pi_j-\pi_i\le C'_{ij}(\bar f_{ij}+\Delta)$. Also, either
 $f_{ij}=\bar f_{ij}<\Delta$ and thus $ji\notin E_f(\Delta)$, or
$f_{ij}=\bar f_{ij}\ge \Delta$ and  (\ref{eq:delta-feas}) holds for
$ji$ by the assumption  $C'_{ij}(\bar
f_{ij}-\Delta)\le \pi_j-\pi_i$.

To verify the last claim, observe that $C'_{ij}$ is constant on every
linear arc and therefore $\bar f_{ij}=f_{ij}$ will be set on every
linear arc. The flow change is $\pm\Delta$ on every nonlinear arc;
every such change may increase the excess of one of the endpoints of
the arc by $\Delta$. Consequently,  $Ex(f)\le Ex(\bar f)+m_N\Delta$ follows.
\end{proof}

\section{The  basic algorithm}\label{sec:basic}
Algorithm~\ref{fig:basic} outlines a simple algorithm for minimum cost
flows with separable convex objectives, to be referred as the
``Basic algorithm''. This is a modified version of Minoux's algorithm \cite{Minoux86}.
The algorithm returns a $\varepsilon$-accurate solution for a required precision $\varepsilon>0$. That is, for output $f$, there is an optimal solution $f^*$ such that $\|f-f^*\|_\infty<\varepsilon$.

\begin{algorithm}[htb]

\begin{tabbing}
xxxxx \= xxx \= xxx \= xxx \= xxx \= \kill
\> \textbf{Algorithm} \textsc{Basic}\\
\> $f\leftarrow {\bf 0}$; $\Delta\leftarrow \Delta_0$;\\
\> \textbf{do}  \ \ \ {\sl\small  \slash \slash $\Delta$-phase}\\
\> \> \textbf{do} \ \ \ {\sl\small  \slash \slash main part}\\
\> \> \> $S(\Delta)\leftarrow\{i\in V: \rho_f(i)-b_i\ge \Delta\}$;\\
\> \> \> $T(\Delta)\leftarrow\{i\in V: \rho_f(i)-b_i\le -\Delta\}$;\\
\> \> \> $P\leftarrow$ shortest $s-t$ path in $E_f(\Delta)$ for the
cost $C'_{ij}(f_{ij}+\Delta)$ with $s\in S(\Delta)$,
$t\in T(\Delta)$;\\
\> \> \> send $\Delta$ units of flow on $P$ from $s$ to $t$;\\
\> \>  \textbf{while} $S(\Delta),T(\Delta)\neq \emptyset$;\\
\> \> \textsc{Adjust}$(\Delta/2,f)$;\\
\> \>  $\Delta\leftarrow\Delta/2$;\\
\> \textbf{while} $\Delta>\varepsilon/(2n+m_N+1)$;\\
\> \textbf{Return} $f$.
\end{tabbing}
\caption{}\label{fig:basic}
\end{algorithm}

We start with the pseudoflow $f\equiv {\bf 0}$ and an initial value $\Delta=\Delta_0$.
We assume that the value $\Delta_0$ is provided in the input so that ${\bf 0}$ is a $\Delta_0$-feasible and $Ex({\bf 0})\le (2n+m)\Delta_0$;
in the enhanced algorithm we shall specify how such a  $\Delta_0$ value can be determined.
The algorithm consists of $\Delta$-phases, with $\Delta$ decreasing by  a
factor of two between two phases; the algorithm terminates once $\Delta<\varepsilon/(2n+m_N+1)$.

In the main part of phase $\Delta$, 
let $S(\Delta)=\{i\in V: \rho_f(i)-b_i\ge \Delta\}$ and
$T(\Delta)=\{i\in V: \rho_f(i)-b_i\le -\Delta\}$, the set of nodes
with excess and deficiency at least $\Delta$. As long as $S(\Delta)\neq \emptyset$,
$T(\Delta)\neq\emptyset$, send $\Delta$ units of flow from a node
$s\in S(\Delta)$ to a node $t\in T(\Delta)$ on a shortest path in
$E_f(\Delta)$ with respect to the cost function $C'_{ij}(f_{ij}+\Delta)$.
(Note that there must be a path connecting nodes in $S(\Delta)$ and
$T(\Delta)$, due to our assumption that the graph $G=(V,E)$ is
strongly connected, and $E\subseteq E_f(\Delta)$.)

The main part finishes once $S(\Delta)=\emptyset$ or
$T(\Delta)=\emptyset$. The $\Delta$-phase terminates by 
performing  \textsc{Adjust}$(\Delta/2,f)$ and proceeding to the
next phase with scaling factor $\Delta/2$.

In the main part, we need to compute shortest paths in the graph
$E_f(\Delta)$ for the cost function $C'_{ij}(f_{ij}+\Delta)$. This can
be done only if there is no negative cost cycle.  $\Delta$-feasibility
is exactly this property and is maintained throughout (see
Lemma~\ref{lem:basic-alg} below). Details of the shortest path
computation will be given in Section~\ref{sec:shortest}(ii),
for the enhanced algorithm.


\subsection{Analysis}
\begin{theorem}\label{thm:basic-running}
The Basic algorithm delivers an $\varepsilon$-accurate solution in $O(\log((2n+m_N+1)\Delta_0/\varepsilon)$ phases, and every phase comprises at most $O(2n+m_N)$ flow augmentations.
\end{theorem}
An appropriate $\Delta_0$ can be chosen to be polynomial in the input size, hence this gives a weakly polynomial running time bound.
We now state the two simple lemmas needed to prove this theorem.
The first lemma verifies the correctness and efficiency of the algorithm, showing that $\Delta$-feasibility is maintained throughout
and the number of flow augmentations is linear in every $\Delta$-phase.
We omit the proof; its analogous counterpart for the enhanced algorithm will be proved in Lemma~\ref{lem:enhanced}.
\begin{lemma}\label{lem:basic-alg}
\begin{enumerate}[(i)]
\item
In the main part of the $\Delta$-phase, the pseudoflow is an integer
  multiple of $\Delta$ on each arc, and consequently,
  $E_f(\Delta)=E_f$.
\item $\Delta$-feasibility is maintained when augmenting on a shortest
  path. 
\item 
 At the beginning of the main part, $Ex(f)\le (2n+m_N)\Delta$, and at
 the end, $Ex(f)\le n\Delta$. 
\item The main part consists of at most
 $2n+m_N$ flow augmentation steps.
\end{enumerate}
\end{lemma}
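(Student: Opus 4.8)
The plan is to prove the four items essentially in order, since each builds on the previous.

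For (i), I would argue by induction on the augmentation steps within the $\Delta$-phase, and on the phases themselves. At the start of the very first phase $f\equiv\mathbf 0$, which is a multiple of $\Delta_0$. When the $\Delta$-phase ends, the subroutine \textsc{Adjust}$(\Delta/2,f)$ changes each arc by $0$ or $\pm\Delta/2$ (see Figure~\ref{fig:adjust} and Lemma~\ref{lem:adjust}), so if $f$ was a multiple of $\Delta$ it becomes a multiple of $\Delta/2$ — the correct granularity for the next phase. Within the main part, each augmentation sends exactly $\Delta$ units along a path, changing the flow on each arc by $0$ or $\pm\Delta$, so the ``integer multiple of $\Delta$'' property is preserved. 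The consequence $E_f(\Delta)=E_f$ is then immediate: a residual arc has residual capacity either $\infty$ (forward arcs) or $f_{ij}$ (backward arc $ji$), and in the latter case a residual arc $ji\in E_f$ means $f_{ij}>0$, hence $f_{ij}\ge\Delta$ since $f_{ij}$ is a positive multiple of $\Delta$; thus every residual arc has residual capacity $\ge\Delta$.

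For (ii), I would invoke the standard shortest-path argument. Suppose $\pi$ is a potential certifying $\Delta$-feasibility of $f$, i.e. $\pi_j-\pi_i\le C'_{ij}(f_{ij}+\Delta)$ for all $ij\in E_f(\Delta)$. Let $P$ be a shortest $s$-$t$ path for the cost $C'_{ij}(f_{ij}+\Delta)$ and let $d$ be the resulting shortest-distance labels from $s$ in $E_f(\Delta)$. Replacing $\pi$ by $\pi+d$ and augmenting $\Delta$ units along $P$, one checks: arcs off $P$ keep their cost and still satisfy the inequality with the new potential (this is exactly the fact that reduced costs w.r.t.\ shortest-path potentials are nonnegative); arcs on $P$ are tight before augmentation, and after pushing $\Delta$ the forward cost can only increase (convexity of $C_{ij}$, since $f_{ij}$ went up by $\Delta$ so $C'_{ij}(f_{ij}+\Delta)$ is evaluated at a larger point), while for a newly-created reverse arc $ji$ the relevant bound $C'_{ij}(f_{ij}-\Delta)\le\pi_j-\pi_i$ holds because after augmentation the old forward value at the old argument becomes the new $C'_{ij}(f_{ij}^{\mathrm{new}}-\Delta)$, which was the tight value. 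The one subtlety is that augmenting may change which arcs are in $E_f(\Delta)$; but by part~(i) we have $E_f(\Delta)=E_f$ throughout, and arcs leaving the residual graph impose no constraint, while arcs entering it (the reverses of $P$-arcs) are handled as just described. This is the step I expect to require the most care, though it is routine and its analogue is proved in full in Section~\ref{sec:analysis}.

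For (iii) and (iv), I would reason about the excess. At the beginning of the main part of phase $\Delta$: for the first phase this is the hypothesis $Ex(\mathbf 0)\le(2n+m)\Delta_0$; for a later phase, the flow entering phase $\Delta$ is the output of \textsc{Adjust}$(\Delta,\cdot)$ applied to the flow at the end of phase $2\Delta$, which had $Ex\le n\cdot 2\Delta=2n\Delta$ by the ``end of main part'' bound for the previous phase, and \textsc{Adjust} increases the excess by at most $m_N\Delta$ (Lemma~\ref{lem:adjust}), giving $Ex(f)\le(2n+m_N)\Delta$. For the end of the main part: the loop terminates when $S(\Delta)=\emptyset$ or $T(\Delta)=\emptyset$; since $\sum_i(\rho_f(i)-b_i)=0$, if say $S(\Delta)=\emptyset$ then every node has $\rho_f(i)-b_i<\Delta$, so $Ex(f)=\sum_i\max\{\rho_f(i)-b_i,0\}<n\Delta$ (the case $T(\Delta)=\emptyset$ is symmetric, bounding the total negative part hence also the positive part). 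Finally, each augmentation step moves $\Delta$ units from a node in $S(\Delta)$ to a node in $T(\Delta)$, strictly decreasing $Ex(f)$ by $\Delta$ (the source's surplus drops by $\Delta$ and it had surplus $\ge\Delta$; the sink can absorb $\Delta$ without creating positive excess since it had deficiency $\ge\Delta$; intermediate nodes are unaffected as flow is conserved there). Hence the number of steps is at most $(Ex(f)_{\text{start}}-Ex(f)_{\text{end}})/\Delta \le ((2n+m_N)\Delta - 0)/\Delta = 2n+m_N$.
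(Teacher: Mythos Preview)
Your proposal is correct and follows essentially the same approach as the paper, which omits this proof but gives the analogous argument for Lemma~\ref{lem:enhanced}. The only cosmetic difference is in part~(ii): you explicitly build the new potential as $\pi+d$ from Dijkstra distances, whereas the paper simply takes $\pi$ to already be the post-shortest-path potential satisfying $\pi_j-\pi_i=C'_{ij}(f_{ij}+\Delta)$ on $P$ and verifies the three cases (arc off $P$, forward arc on $P$, reverse arc on $P$) with that same $\pi$; the content is identical.
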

Our second lemma asserts the proximity of a current flow to all later flows during the algorithm. If we let the algorithm run without ever terminating, it will converge to an optimal solution. Hence the lemma justifies that the algorithm obtains an $\varepsilon$-accurate solution as claimed in Theorem~\ref{thm:basic-running}. Moreover, it also helps to identify edges which
must be contained in the support of an optimal solution. The proof is also omitted; see Lemma~\ref{lem:F-in-F} and the first part of the proof of Theorem~\ref{thm:main-bound}. This is essentially the same argument that was used by Orlin (e.g. \cite[Lemma 10.21]{amo}). 

\begin{lemma}\label{lem:f-change}
Let $f$ be the pseudoflow at the end of the main part of the $\Delta$-phase and $f'$ in an
arbitrary later phase. Then 
$||f-f'||_\infty\le (2n+m+1)\Delta$. If $f_{ij}> (2n+m+1)\Delta$ at
the end of the $\Delta$-phase, then this property is maintained in all
later phases, and there exists an optimal solution
$f^*$ with $f^*_{ij}>0$.
\end{lemma}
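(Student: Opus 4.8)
The plan is to establish Lemma~\ref{lem:f-change} in two parts: first the uniform bound $\|f-f'\|_\infty \le (2n+m+1)\Delta$, and then the consequence about abundant arcs.

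\medskip

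\textbf{Part 1: the proximity bound.}
First I would recall from Lemma~\ref{lem:basic-alg}(iii) that at the end of the main part of the $\Delta$-phase we have $Ex(f) \le n\Delta$, and that after \textsc{Adjust}$(\Delta/2,f)$ (which by Lemma~\ref{lem:adjust} adds at most $m_N(\Delta/2)$ to the excess) and subsequent halvings, the excess at the start of the main part of the $(\Delta/2)$-phase is at most $(2n+m_N)(\Delta/2)$. So the key quantity to control is how the flow can move between the end of the $\Delta$-phase and any later phase. Let $f$ be the flow at the end of the main part of the $\Delta$-phase and $f'$ the flow at some later point. Consider the flow difference $g = f' - f$. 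Since both $f$ and $f'$ satisfy the same demand constraints up to their respective excesses, $g$ is (after correcting for excess nodes) a circulation-like object; more precisely, $\rho_g(i) = (\rho_{f'}(i) - b_i) - (\rho_f(i) - b_i)$, and the total of $|\rho_g(i)|$ is bounded by $Ex(f) + Ex(f')$ plus the contributions at deficiency nodes, which altogether is $O(n\Delta)$. Decompose $g$ into paths and cycles in the standard way. The cycle components of $g$ must be controlled: I would argue that a cycle in the support of $g$ would, by the structure of the algorithm, have to be a negative cycle in one of the residual graphs $E_f(\Delta'')$ with respect to the relevant cost, contradicting $\Delta''$-feasibility, or else be "useless" and removable. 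The path components each carry flow bounded by the total excess/deficiency moved, which is $O(n\Delta)$ summed over the $\Delta, \Delta/2, \Delta/4, \ldots$ phases — a geometric series summing to $O(n\Delta)$ again. Summing the bound over all phases (flow augmentations move at most the running excess on each arc, and the sum of excesses over phases is a geometric series dominated by the first term, scaled up by the \textsc{Adjust} contributions $m_N\Delta/2 + m_N\Delta/4 + \cdots \le m_N\Delta$) gives a total per-arc change of at most roughly $2n\Delta$ from the path movement plus $m\Delta$ from the successive \textsc{Adjust} steps plus a slack term $\Delta$, i.e. $(2n+m+1)\Delta$. I would present this as: $\|f - f'\|_\infty \le \sum_{k\ge 0}(\text{flow change in phase } \Delta/2^k) \le \sum_{k\ge 0}\big(n\tfrac{\Delta}{2^k} \cdot (\text{something}) + m_N\tfrac{\Delta}{2^{k+1}}\big)$, and carefully bookkeep so that the geometric sums collapse to the claimed constant. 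The cleanest route is probably to bound, for a single phase $\Delta''$, the total $\ell_\infty$-change of $f$ during that phase's main part plus its closing \textsc{Adjust} by something like $(2n+m_N)\Delta'' + m_N\Delta''/2$, and then sum over $\Delta'' = \Delta, \Delta/2, \ldots$.

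\medskip

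\textbf{Part 2: abundant arcs stay positive, and there is an optimal $f^*$ with $f^*_{ij}>0$.}
Suppose $f_{ij} > (2n+m+1)\Delta$ at the end of the $\Delta$-phase. By Part 1, every later flow $f'$ satisfies $f'_{ij} \ge f_{ij} - (2n+m+1)\Delta > 0$, so positivity of $ij$ is maintained forever. To conclude the existence of an optimal $f^*$ with $f^*_{ij}>0$, I would invoke that the Basic algorithm, run indefinitely, converges to an optimal solution: as $\Delta \to 0$, the relaxed cost $C'_{ij}(f_{ij}+\Delta)$ converges to $C'_{ij}(f_{ij})$ (by continuity of $C'_{ij}$, Assumption on the arc types), the excess $Ex(f)$ converges to $0$, and the $\Delta$-feasibility inequalities pass to the optimality conditions~(\ref{cond:opt}) in the limit. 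The flows $f^{(k)}$ at the ends of successive phases form a Cauchy sequence by Part 1 (the tail differences are $O(\Delta_k)\to 0$), hence converge to some feasible $f^*$; the limit satisfies the KKT conditions, so $f^*$ is optimal. Since $f^{(k)}_{ij} \ge f_{ij} - (2n+m+1)\Delta > 0$ for all later $k$, the limit has $f^*_{ij} \ge f_{ij} - (2n+m+1)\Delta > 0$.

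\medskip

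\textbf{Main obstacle.} The delicate point is making the constant $(2n+m+1)$ come out exactly, rather than just $O(n+m)$. This requires being precise about (a) the excess bounds $(2n+m_N)\Delta''$ at the start of a phase versus $n\Delta''$ at the end, (b) exactly how much each flow-augmentation step can change $f$ on a given arc (each step sends $\Delta''$ units along one path, but the number of steps times $\Delta''$ is controlled by the total excess, not naively by $(2n+m_N)$), and (c) handling the \textsc{Adjust} contributions, which are the source of the additive $m$ (actually $m_N \le m$) term, and the extra $+1$ slack. I expect the path/cycle decomposition argument to require care: one must argue the cycle parts of the flow difference do not contribute to the $\ell_\infty$-bound, which needs the $\Delta$-feasibility invariant to rule out "wasteful" cyclic flow, or alternatively one chooses the decomposition to avoid cycles entirely. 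The rest is routine geometric-series bookkeeping.
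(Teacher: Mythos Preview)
Your path/cycle decomposition approach in Part~1 contains a genuine gap: the claim that cycles in the difference $g = f' - f$ can be ruled out via $\Delta$-feasibility does not hold. A cycle in the difference graph $D_{f',f}$ need not correspond to a negative cycle in any residual graph with respect to any of the relevant costs --- the cumulative effect of several shortest-path augmentations (each along an acyclic path) can perfectly well produce a cyclic difference without any feasibility violation at any stage. Without ruling out cycles, the decomposition gives no control on $\|g\|_\infty$, so this line of attack does not go through as stated.

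The ``cleanest route'' you yourself mention at the end of Part~1 is exactly the right approach, and is the one the paper uses (the proof is omitted here but the analogous computation is carried out in Section~\ref{sec:analysis}). One bookkeeping correction: in \textsc{Adjust}$(\Delta''/2,f)$ the \emph{per-arc} change is at most $\Delta''/2$, not $m_N\Delta''/2$; the factor $m_N$ enters the \emph{excess} bound in Lemma~\ref{lem:adjust}, not the $\ell_\infty$ change. With this, for $\Delta_t=\Delta/2^t$ ($t\ge 1$) the main part changes any arc by at most $(2n+m_N)\Delta_t$ (Lemma~\ref{lem:basic-alg}(iv): at most $2n+m_N$ augmentations, each moving $\Delta_t$), and \textsc{Adjust} contributes at most $\Delta_t/2$. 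Summing from the end of the main part of the $\Delta$-phase gives
\[
\|f-f'\|_\infty \le \frac{\Delta}{2}+\sum_{t\ge 1}\Bigl[(2n+m_N)\frac{\Delta}{2^t}+\frac{\Delta}{2^{t+1}}\Bigr]
= \frac{\Delta}{2}+(2n+m_N)\Delta+\frac{\Delta}{2}=(2n+m_N+1)\Delta\le (2n+m+1)\Delta,
\]
and no decomposition is needed at all.

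For Part~2, your convergence argument for the existence of an optimal $f^*$ with $f^*_{ij}>0$ is correct and matches the paper's. One refinement: ``this property'' in the lemma means $f'_{ij}>(2n+m+1)\Delta'$ at the end of each later $\Delta'$-phase, not merely $f'_{ij}>0$. This follows by applying the single-phase bound inductively: from the end of the $\Delta$-phase to the end of the $\Delta/2$-phase the change is at most $\Delta/2+(2n+m_N)(\Delta/2)\le (2n+m+1)(\Delta/2)$, so $f'_{ij}>(2n+m+1)\Delta-(2n+m+1)(\Delta/2)=(2n+m+1)(\Delta/2)$.
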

For all such arcs, we can conclude $\pi_j-\pi_i=C'_{ij}(f^*_{ij})$ for an optimal solution $f^*$. 
It will belong to the set of revealed arcs, defined in the next section. The overall aim of the algorithm is 
to identify a large enough set of revealed arcs containing the support of an optimal solution. The above lemma guarantees that the first such arc can be identified in a strongly polynomial number of steps in the Basic algorithm. We will however need to modify the algorithm in order to guarantee that the set of revealed arcs is always extended in a strongly polynomial number of steps.

\section{The enhanced algorithm}\label{sec:enhanced}
\subsection{Revealed arc sets}\label{sec:revealed}
Let $F^*$ denote the set of arcs  that are tight in every
optimal solution (note that in general, we do not assume the
uniqueness of the optimal solution). This arc set plays a key role in
our algorithm.
Formally, 
\begin{equation}
\begin{aligned}
F^*:=\{ij\in E: \ \pi_j-\pi_i=C'_{ij}(f_{ij})\mbox{ holds }&\forall f \mbox{ optimal to (\ref{probl}), }\forall \pi:V\to\R,\\
&\mbox{ s.t. } (f,\pi) \mbox{ satisfies the inequalities (\ref{cond:opt})}\}. 
\end{aligned} \label{def:Fcs}
\end{equation}
The next lemma
shows that $F^*$ contains the support of every optimal solution.

\begin{lemma}\label{lem:canon-F}
Let $f$ be an arbitrary optimal solution to (\ref{probl}), and
$f_{ij}>0$ for some $ij\in E$.
Then $ij\in F^*$.
\end{lemma}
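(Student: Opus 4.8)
The plan is to argue by contradiction using the characterization of optimality via the KKT conditions \eqref{cond:opt}. Suppose $f$ is an optimal solution with $f_{ij} > 0$ for some arc $ij \in E$, but $ij \notin F^*$. By the definition of $F^*$, this means there exists a pair $(g, \pi)$ satisfying \eqref{cond:opt} with $\pi_j - \pi_i < C'_{ij}(g_{ij})$, i.e.\ the inequality is strict for this particular arc. The key point I would exploit is that all optimal solutions share the \emph{same} set of optimal potentials in the following sense: whenever $(f,\pi)$ and $(g,\pi')$ both satisfy \eqref{cond:opt}, one can take $\pi' = \pi$ (the potentials are interchangeable across optimal primal solutions), or at least the tightness pattern forced on the arcs does not depend on which optimal $(f,\pi)$ one picks. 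So I would first establish a small lemma: if $(f,\pi)$ satisfies \eqref{cond:opt} and $g$ is any other optimal primal solution, then $(g,\pi)$ also satisfies \eqref{cond:opt}.

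To prove that interchangeability lemma, I would use convexity together with the fact that both $f$ and $g$ minimize $\sum_{ij} C_{ij}(\cdot)$ over the (convex) feasible region. Consider the segment $f_t = (1-t)f + tg$; it is feasible for $t \in [0,1]$ and the objective is convex and constant ($= \mathrm{OPT}$) at the two endpoints, hence constant along the segment, which forces $C_{ij}$ to be affine on the interval between $f_{ij}$ and $g_{ij}$ for every arc. Combined with \eqref{cond:opt} for $(f,\pi)$ — namely $\pi_j - \pi_i \le C'_{ij}(f_{ij})$ on all residual arcs of $f$, with equality on arcs where $f_{ij}$ can be decreased — a standard complementary-slackness / flow-decomposition argument shows $\pi_j - \pi_i = C'_{ij}(f_{ij})$ on the support of $f$, and then the affineness transfers the equality to $g$. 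This gives that $(g,\pi)$ satisfies \eqref{cond:opt} as well.

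Now return to the main argument. Fix any $(f,\pi)$ satisfying \eqref{cond:opt}, where $f$ is our given optimal solution with $f_{ij}>0$. Since $f_{ij}>0$, the reverse arc $ji$ lies in the residual graph $E_f$ (backward arc), so \eqref{cond:opt} applied to $ji$ yields $\pi_i - \pi_j \le C'_{ji}(f_{ji}) = C_{ji}'(-f_{ij}) = -C'_{ij}(f_{ij})$ by the convention $C_{ji}(\alpha) = C_{ij}(-\alpha)$. Meanwhile \eqref{cond:opt} applied to the forward arc $ij \in E \subseteq E_f$ gives $\pi_j - \pi_i \le C'_{ij}(f_{ij})$. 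Adding the two inequalities forces $\pi_j - \pi_i = C'_{ij}(f_{ij})$. By the interchangeability lemma, for \emph{any} $(g,\pi')$ satisfying \eqref{cond:opt} we may take $\pi' = \pi$, and the same computation — note $f_{ij} > 0$ is fixed throughout — shows equality holds. Hence $ij \in F^*$ by definition, contradicting our assumption.

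The main obstacle I anticipate is the interchangeability lemma: making rigorous the claim that a single potential vector $\pi$ works for all optimal primal solutions simultaneously, especially in the presence of restricted arcs where $C'_{ij}(\ell_{ij}) = -\infty$ and of linear arcs where strict convexity fails. One must be careful that the affineness-along-the-segment argument still pins down the derivative values correctly at the relevant points, and that the $\pm\infty$ conventions for restricted arcs at their lower boundary are handled (an arc at its lower capacity has a residual direction only one way, so \eqref{cond:opt} imposes no lower bound there, consistent with $C'_{ij}(\ell_{ij}) = -\infty$). A cleaner alternative, if the interchangeability lemma proves delicate, is to observe directly that once $f_{ij} > 0$ we get both $ij$ and $ji$ in $E_f$ for \emph{this} $f$, force equality for \emph{this} $(f,\pi)$, and then only need: equality of $\pi_j - \pi_i$ with $C'_{ij}(f_{ij})$ for one optimal $(f,\pi)$, together with the fact that $f_{ij}$ itself (the primal value) is the same across all optimal solutions on arcs in the support — which is exactly what the segment-affineness argument gives for that coordinate. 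I would aim for whichever of these two routes the author's earlier setup makes shortest.
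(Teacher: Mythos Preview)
Your overall strategy is sound and different from the paper's, but the execution has a real gap. The interchangeability lemma you state --- if $(f,\pi)$ satisfies \eqref{cond:opt} and $g$ is another optimal primal then $(g,\pi)$ does too --- is correct and follows from your segment-affineness observation together with a complementary-slackness computation (write $\sum_{ij}[C'_{ij}(f_{ij})-(\pi_j-\pi_i)](g_{ij}-f_{ij})=0$ and note every summand is nonnegative). However, the sentence ``for any $(g,\pi')$ satisfying \eqref{cond:opt} we may take $\pi'=\pi$'' does not follow: interchangeability lets one potential serve all optimal primals, not the converse that all optimal potentials coincide. What you actually need is to apply your lemma with the roles of $f$ and $g$ swapped: given any $(g,\pi')$ satisfying \eqref{cond:opt}, conclude that $(f,\pi')$ also satisfies \eqref{cond:opt}; then $f_{ij}>0$ forces $\pi'_j-\pi'_i=C'_{ij}(f_{ij})=C'_{ij}(g_{ij})$, the last equality by affineness on the segment. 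With this correction the argument closes. Your proposed ``cleaner alternative'' at the end does not work as stated: the claim that the primal value $f_{ij}$ is the same across all optimal solutions on support arcs is false for linear arcs --- affineness only pins down the derivative, not the value.

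The paper takes a more direct route and never isolates an interchangeability lemma. It fixes an arbitrary optimal $(f^*,\pi^*)$, disposes of the case $f^*_{ij}>0$ immediately, and for $f^*_{ij}=0<f_{ij}$ invokes a difference-graph cycle argument (Claim~\ref{cl:cycle}): since $\rho_f=\rho_{f^*}$, the arc $ij$ lies on a cycle $C$ in the difference graph $E_{f,f^*}$; summing the KKT inequalities for $\pi^*$ around $C$ in $E_{f^*}$ and for $\pi$ around the reverse of $C$ in $E_f$, together with $C'_{ab}(f_{ab})\ge C'_{ab}(f^*_{ab})$ on $C$, forces equality on every arc of $C$, in particular $\pi^*_j-\pi^*_i=C'_{ij}(f^*_{ij})$. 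Your route proves a stronger intermediate statement (full interchangeability) via the affineness-plus-slackness idea; the paper's route is shorter for this specific lemma and reuses the same cycle machinery later in Lemma~\ref{lem:compute-opt}.
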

The proof needs the following notion, also used later.
Let $x,y:E\rightarrow \R$ be two vectors.
Let us define the {\sl  difference graph} $D_{x,y}=(V,E_{x,y})$ with
$ij\in E_{x,y}$ if $ij\in E$ and $x_{ij}>y_{ij}$ or if $ji\in E$ and $x_{ji}<y_{ji}$. Using the
convention $x_{ji}=-x_{ij}$, $y_{ji}=-y_{ij}$ it follows that
$x_{ij}>y_{ij}$ for every $ij\in E_{x,y}$.
We will need the following simple claim.
\begin{claim}\label{cl:cycle}
Assume that for two vectors $x,y:E\rightarrow \R$, $\rho_x=\rho_y$
holds (recall the definition of $\rho$ in (\ref{def:rho})). Then every
arc in the difference graph $E_{x,y}$  must be contained in a cycle in $E_{x,y}$.
\end{claim}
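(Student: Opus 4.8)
The plan is to reduce the statement to a standard fact about circulations. Set $z := x - y$ on $E$ (using the antisymmetry convention $z_{ji} = -z_{ij}$). Since $\rho$ is linear in the flow, $\rho_z = \rho_x - \rho_y \equiv 0$. Interpreting $z$ as a (signed) flow and reorienting each arc in the direction along which $z$ is positive is exactly the construction of $D_{x,y}$: for each $pq\in E$ with $z_{pq}\ne 0$ we obtain one arc of $E_{x,y}$, carrying the positive value $g_{pq} := x_{pq}-y_{pq}>0$ (or $g_{qp}:=-z_{pq}>0$ if $z_{pq}<0$). First I would check the bookkeeping that $\rho_g$, computed in the graph $(V,E_{x,y})$, equals $\rho_z$ and is therefore identically zero; this is a routine sign-chasing over the three cases $z_{pq}>0$, $z_{pq}<0$, $z_{pq}=0$ for each $pq\in E$. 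Thus $g$ is a strictly positive circulation on $D_{x,y}$.

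Now fix an arc $ij\in E_{x,y}$; by the above $x_{ij}>y_{ij}$, so $g_{ij}>0$. I would show that $D_{x,y}$ contains a directed path from $j$ back to $i$, which together with $ij$ yields the desired cycle through $ij$. Let $R\subseteq V$ be the set of vertices reachable from $j$ in $D_{x,y}$ (so $j\in R$). By definition of reachability no arc of $E_{x,y}$ leaves $R$. Summing $\rho_g$ over $R$, all arcs internal to $R$ cancel, arcs leaving $R$ contribute $0$ (there are none), and arcs entering $R$ contribute strictly positive amounts; since $\sum_{v\in R}\rho_g(v)=0$, there can be no arc of $E_{x,y}$ entering $R$ either. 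But if $i\notin R$, the arc $ij$ would enter $R$ (as $j\in R$, $g_{ij}>0$), a contradiction. Hence $i\in R$, giving a path $j\rightsquigarrow i$ in $D_{x,y}$ and closing the cycle.

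The only point requiring care is ensuring the cycle passes through the \emph{specific} arc $ij$ rather than merely asserting that cycles exist somewhere; the reachability/cut argument above is precisely what pins this down, and it is also the step I would expect to be the crux of the write-up. Everything else — the linearity of $\rho$, the identification of $g$ with the circulation on $D_{x,y}$, and the finiteness argument — is mechanical. (One could alternatively invoke the classical flow-decomposition theorem to write $g$ as a nonnegative combination of directed-cycle flows and conclude immediately that every arc with $g>0$ lies on a directed cycle, but the self-contained cut argument is shorter and keeps the excerpt free of external citations.)
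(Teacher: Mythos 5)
Your proof is correct. The construction is the same as the paper's: you form the positive circulation $z=x-y$ (reoriented so it is positive on every arc of $E_{x,y}$) from $\rho_x=\rho_y$. The two arguments diverge only at the last step. The paper invokes the flow decomposition theorem — a positive circulation is a nonnegative combination of incidence vectors of directed cycles, so any arc carrying positive value lies on one of them — and is done in one line. You instead give a self-contained cut/reachability argument: taking $R$ as the set of vertices reachable from $j$, no arc leaves $R$, and summing $\rho_g$ over $R$ forces that no arc enters $R$ either, so $i\in R$ and the $j\rightsquigarrow i$ path closes a cycle through the specific arc $ij$. Your sign-chasing is right ($\sum_{v\in R}\rho_g(v)=0$ equals incoming minus outgoing across the cut, internal arcs cancel), so the argument is sound. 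What each buys: the paper's version is shorter but leans on a standard external theorem; yours is elementary and makes explicit why the cycle can be routed through the given arc $ij$ rather than merely existing somewhere in the support — a point the decomposition argument handles implicitly via $z_{ij}>0$. You even flag the decomposition route as an alternative in your closing parenthesis; that alternative is precisely the paper's proof.
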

\begin{proof}
For $ij\in E_{x,y}$, let us set $z_{ij}=x_{ij}-y_{ij}$ if $x_{ij}>y_{ij}$. The assumption
$\rho_x=\rho_y$ implies that $z_{ij}$ is a circulation in $E_{x,y}$
with positive value on every arc. As such, it can be written as a
nonnegative combination of incidence vectors of cycles.
Therefore every $ij\in E_{x,y}$ must be contained in a cycle.
\end{proof}

\begin{proof}[Proof of Lemma~\ref{lem:canon-F}]
Let $f^*$ be another arbitrary optimal solution, and consider potentials
$\pi$ and $\pi^*$ with both $(f,\pi)$ and $(f^*,\pi^*)$ satisfying (\ref{cond:opt}).
We shall prove that $\pi^*_j-\pi^*_i=C'_{ij}(f^*_{ij})$. Since $(f^*,\pi^*)$ is chosen arbitrarily, this will imply $ij\in F^*$.
If $f^*_{ij}>0$, then
$ji\in E_{f^*}$ and thus $\pi^*_j-\pi^*_i=C'_{ij}(f^*_{ij})$
must hold.

Assume now $f^*_{ij}=0$.
Consider the  difference graph $D_{f,f^*}$.  Since $f_{ij}>f^*_{ij}$,
it follows that $ij\in E_{f,f^*}$. Because of $\rho_{f^*}\equiv
\rho_{f}\equiv b$, Claim~\ref{cl:cycle} is applicable and provides a cycle $C$
in $E_{f,f^*}$ containing $ij$.
For every arc $ab\in C$, $f_{ab}>f^*_{ab}$ and thus $ab\in E_{f^*}$ and $ba\in E_{f}$.
By (\ref{cond:opt}), 
\begin{align*}
0&=\sum_{ab\in C}\pi^*_b-\pi^*_a\le \sum_{ab\in C}
C'_{ab}(f^*_{ab})\quad \mbox{ and}\\
0&=\sum_{ab\in C}\pi_a-\pi_b\le \sum_{ab\in C} C'_{ba}(f_{ba})=-\sum_{ab\in C}C'_{ab}(f_{ab}).
\end{align*}
The convexity of $C_{ab}$ and $f_{ab}>f^*_{ab}$ give
$C'_{ab}(f_{ab})\ge C'_{ab}(f^*_{ab})$.
In the above inequalities,
 equality must hold everywhere, implying
$\pi^*_j-\pi^*_i=C'_{ij}(f^*_{ij})$ as desired.
\end{proof}

We shall see that using Oracle~\ref{assump:trial} (to be described
later), finding the set $F^*$ 
enables us to compute an optimal solution in strongly polynomial time.
In the Basic algorithm,
$F=\{ij\in E: f_{ij}>(2n+m+1)\Delta\}$ is always a subset of $F^*$
according to  Lemmas~\ref{lem:f-change} and \ref{lem:canon-F}.
Furthermore, once an edge enters $F$, it stays there in
all later phases. 
The Enhanced algorithm provides a modification of the basic algorithm with the guarantee that within
every $O(\log n)$  phases, a new arc enters
$F$.

\medskip

In each step of the enhanced algorithm, there will be an arc set $F$, called the
{\sl revealed arc set}, which is
guaranteed to be a subset of $F^*$. We remove the lower capacity 0
from arcs in $F$ and allow also negative values here. 

Formally, for an edge set $F\subseteq E$, a vector 
$f:E\rightarrow\R$ is an {\sl $F$-pseudoflow}, if $f_{ij}\ge 0$ for
$ij\in E\setminus F$ (but it is allowed to be negative on $F$).
For such an $f$, let us define 
\begin{equation}
E^F_f:=E_f\cup \obe F=E\cup \obe F\cup \{ji: ij\in E\setminus F, f_{ij}>0\}.\label{eq:EfF}
\end{equation}
 If $ij\in F$, then
 the residual capacity of $ji$ is $\infty$.
In every phase of the algorithm, we maintain an $F$-pseudoflow $f$ for
a revealed arc set $F\subseteq F^*$.

Provided the revealed arc set $F\subseteq F^*$, we will aim for $F$-optimal solutions as defined
below; we prove that finding an $F$-optimal solution is essentially
equivalent to finding an optimal one.
We say that $f:E\rightarrow \R$ is {\sl $F$-optimal}, if 
it is an $F$-pseudoflow with $\rho_f\equiv b$ and there exists a potential
vector
$\pi:V\rightarrow \R$ with
\begin{equation}\label{cond:F-opt}
\pi_j-\pi_i\le C_{ij}'(f_{ij})\ \ \ \ \forall ij\in E^F_f.
\end{equation}
This is stronger than the optimality condition (\ref{cond:opt}) in
that it also requires the inequality on arcs in $\obe F$. On the other hand,
it does not imply optimality as it allows $f_{ij}<0$ for $ij\in F$.
Nevertheless, it is easy to see that every optimal solution $f^*$ is
also $F$-optimal for every $F\subseteq F^*$. This is due to the
definition of 
$F^*$ as the set of arcs satisfying $\pi_j-\pi_i=
C_{ij}'(f_{ij})$ whenever $(f,\pi)$ satisfies (\ref{cond:opt}). 
Conversely, we shall prove that provided an $F$-optimal solution, we can easily
find an optimal solution by a single feasible circulation algorithm, a
problem equivalent to maximum flows (see
\cite[Chapters 6.2, 7]{amo}).

\begin{lemma}\label{lem:compute-opt}
Assume that for a subset $F\subseteq F^*$, an $F$-optimal solution
$f$ is provided.
Then an optimal solution to (\ref{probl}) can be found by a feasible
circulation algorithm. Further, $ij\in F^*$ whenever $f_{ij}>0$. 
\end{lemma}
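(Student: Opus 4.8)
The plan is to start from the given $F$-optimal pair $(f,\pi)$ and turn $f$ into a genuine feasible flow (i.e.\ one that is nonnegative everywhere, not merely nonnegative off $F$) while preserving the potential $\pi$ and the slackness pattern. First I would record what $F$-optimality gives us: $Ex_b(f)=0$ means $\rho_f\equiv b$, so $f$ already satisfies the conservation constraints of (\ref{probl}); the only possible violation is $f_{ij}<0$ on some arcs of $F$. For the arcs where $\pi_j-\pi_i<C'_{ij}(f_{ij})$ strictly, condition (\ref{cond:F-opt}) combined with convexity pins down the ``target'' value of $f_{ij}$ rather tightly: on a nonlinear (strictly convex) arc there is a unique $\alpha$ with $C'_{ij}(\alpha)=\pi_j-\pi_i$, and we would want to move $f_{ij}$ toward that $\alpha$; on a linear arc any nonnegative value is compatible with the strict inequality as long as we keep $f_{ij}\ge 0$, which will be handled by the circulation step. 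The key structural point is that $\pi$ will remain a valid optimal potential after \emph{any} modification of $f$ on arcs where (\ref{cond:F-opt}) holds with equality, because on those arcs both directions $ij$ and $ji$ are "tight" and moving $f_{ij}$ up or down keeps the KKT inequality (\ref{cond:opt}) satisfied in the residual graph — here strict convexity on nonlinear arcs, and constancy of $C'$ on linear arcs, is what makes "tight for $ij$" survive a perturbation in the allowed direction.

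Next I would set up the feasible-circulation instance explicitly. Define lower and upper bounds on each arc $ij\in E$ that encode "you may change $f_{ij}$ only in a direction consistent with keeping $\pi$ optimal, and you must end up $\ge 0$": on an arc with $\pi_j-\pi_i=C'_{ij}(f_{ij})$ we allow $f_{ij}$ to take any value in $[\max(0,\cdot),\cdot]$ that keeps the equality-or-slack structure (on nonlinear tight arcs the value is forced; on linear tight arcs it is free in $[0,\infty)$); on an arc with strict slack $\pi_j-\pi_i<C'_{ij}(f_{ij})$ we must leave $f_{ij}$ fixed if it is $>0$ (to keep $ji\in E_f$ harmless we actually need $f_{ij}$ not to be decreased into creating a new residual violation — this needs care), and if $f_{ij}<0$ (only possible on $F$) we must raise it, but since $ij\in F\subseteq F^*$ and $(f,\pi)$ is $F$-optimal, the slack $\pi_j-\pi_i<C'_{ij}(f_{ij})$ together with $F^*$-membership forces us back to the equality case. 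Running a feasible-circulation / max-flow algorithm on the difference $f'-f$ (a circulation because $\rho_{f'}=\rho_f=b$ is required of the output) either returns a nonnegative flow $f'$ still satisfying (\ref{cond:opt}) with the same $\pi$ — hence optimal — or certifies infeasibility, which I would argue cannot happen because the true optimum $f^*$ is itself $F$-optimal (as noted in the text preceding the lemma) and $f^*-f$ is an admissible circulation, so the instance is always feasible. Finally, the last sentence "$ij\in F^*$ whenever $f_{ij}>0$" for the produced optimal $f'$ is then immediate from Lemma~\ref{lem:canon-F} applied to the optimal solution $f'$.

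The main obstacle I anticipate is making the bookkeeping of "admissible directions of change" genuinely correct — in particular ensuring that adjusting $f$ on tight arcs does not inadvertently create a \emph{new} residual arc carrying a negative-cost closed walk, so that $(f',\pi)$ really still satisfies (\ref{cond:opt}) on the new residual graph $E_{f'}$. Concretely: decreasing $f_{ij}$ on a tight forward arc is fine, but one must check that the newly relevant reverse arc $ji$ (if $f_{ij}$ drops to $0$ it disappears; if it stays positive it was already present) still satisfies $\pi_i-\pi_j\le C'_{ji}(f'_{ji})=-C'_{ij}(f'_{ij})$, which on a nonlinear tight arc holds with equality by uniqueness of the preimage of $\pi_j-\pi_i$, and on a linear arc holds because $C'_{ij}$ is constant. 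I would isolate this as a short claim ("any $F$-pseudoflow $g$ with $\rho_g=b$, $g\ge 0$ on $E$, and $g_{ij}$ in the same tightness class as $f_{ij}$ for every $ij$ is optimal with potential $\pi$") and then observe that the feasible-circulation instance is exactly the set of such $g$, so any solution is optimal. The rest — feasibility of the instance via $f^*$, and the $F^*$ conclusion via Lemma~\ref{lem:canon-F} — is routine.
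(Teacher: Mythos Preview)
Your overall strategy---keep the potential $\pi$ from $(f,\pi)$ and set up a feasible circulation to repair nonnegativity---is the same as the paper's. But the execution has two genuine gaps.

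First, your case analysis of the arcs is off. Since $\obe F\subseteq E^F_f$ and $ji\in E^F_f$ whenever $f_{ij}>0$, condition (\ref{cond:F-opt}) forces $\pi_j-\pi_i=C'_{ij}(f_{ij})$ on \emph{every} arc with $ij\in F$ or $f_{ij}>0$. So the ``strict slack with $f_{ij}\neq 0$'' cases you spend time on are vacuous; the only non-tight arcs are those in $E\setminus F$ with $f_{ij}=0$, which need no adjustment. What you actually need---and never prove---is the structural fact that any two $F$-optimal solutions (in particular $f$ and the true optimum $f^*$) agree on all nonlinear arcs and can differ only on \emph{linear tight} arcs. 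The paper establishes this via the difference-graph cycle argument of Claim~\ref{cl:cycle}: a cycle in $D_{f,f^*}$ forces $\sum C'_{ab}(f_{ab})=\sum C'_{ab}(f^*_{ab})$ with $f_{ab}>f^*_{ab}$ on every arc, so by Assumption~\ref{assump:lin-nonlin-igazi} all arcs in the cycle are linear, and moreover tight for both potentials. Without this, your feasibility argument (``$f^*-f$ is an admissible circulation'') does not go through, because you have not shown that $f^*$ respects your ``forced'' values on nonlinear arcs. Once you have this structural fact, the circulation instance is simply: fix $h_{ij}=f_{ij}$ on $E\setminus H$ where $H$ is the set of tight linear arcs, require $\rho_h=b$ and $h\ge 0$.

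Second, you misread the ``Further'' clause. It asserts $ij\in F^*$ whenever $f_{ij}>0$ for the \emph{given} $F$-optimal $f$, not for the optimal output $f'$. Invoking Lemma~\ref{lem:canon-F} on $f'$ proves the wrong statement. The paper derives the correct statement directly from the same structural fact: comparing $f$ with an arbitrary optimal $(\bar f,\bar\pi)$, either $f_{ij}=\bar f_{ij}>0$ (so Lemma~\ref{lem:canon-F} applies to $\bar f$), or $ij$ is linear and $\bar\pi_j-\bar\pi_i=C'_{ij}(\bar f_{ij})$ for every optimal $(\bar f,\bar\pi)$, which is the definition of $ij\in F^*$.
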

\begin{proof}
Assume $f$ and $\bar f$ are both $F$-optimal solutions, that is, for some vectors $\pi$ and $\bar \pi$, 
the pairs $(f,\pi)$ and $(\bar f,\bar \pi)$ both satisfy (\ref{cond:F-opt}).
We prove that {\em (i)} $f_{ij}=\bar f_{ij}$ whenever $ij$ is a nonlinear
arc; and {\em (ii)} if $ij$ is a linear arc with $f_{ij}\neq \bar
f_{ij}$, then $\pi_j-\pi_i=C'_{ij}(f_{ij})=C'_{ij}(\bar
f_{ij})=\bar\pi_j-\bar\pi_i$. 

Note that {\em (i)} and {\em (ii)} immediately imply the
second half of the claim as it can be applied for $f$ and an arbitrary
optimal (and consequently, $F$-optimal) solution $\bar f$.

The proof uses  the same argument as for
Lemma~\ref{lem:canon-F}. 
W.l.o.g. assume $f_{ij}>\bar f_{ij}$ for an arc $ij$, and
consider the  difference graph $D_{f,\bar f}$.   Since
$\rho_f\equiv\rho_{\bar f}\equiv b$ and $f_{ij}>\bar
f_{ij}$, Claim~\ref{cl:cycle} is applicable and shows that $ij$ must be contained on a cycle $C\subseteq E_{f,\bar
  f}$. For every arc $ab\in C$, $ab\in E^F_{\bar f}$ and $ba\in E^F_f$
follows (using $\ole F\subseteq   E^F_{\bar f}\cap E^F_f$).
By (\ref{cond:F-opt}), 
\begin{align*}
0&=\sum_{ab\in C}\bar\pi_b-\bar\pi_a \le \sum_{ab\in C} C'_{ab}(\bar f_{ab})\quad\mbox{ and}\\
0&= \sum_{ab\in C}\pi_a-\pi_b\le \sum_{ab\in C} C'_{ba}(f_{ba})=-\sum_{ab\in C}C'_{ab}(f_{ab}).
\end{align*}
Now convexity yields 
$C'_{ab}(f_{ab})=C'_{ab}(\bar f_{ab})$ for all $ab\in C$.
The condition (\ref{assump:lin-nonlin-igazi}) implies that all
arcs in $C$ are linear, in particular, $ij$ is linear. This immediately proves
{\em (i)}. To verify {\em (ii)}, observe that all above inequalities must hold with
equality.

This suggests the following simple method to transform an $F$-optimal
solution $f$ to an optimal $f^*$ of (\ref{probl}). For every nonlinear arc $ij$, we
must have $f^*_{ij}=f_{ij}$. Let $H\subseteq E$ be the set of linear
arcs satisfying $\pi_j-\pi_i=C'_{ij}(f_{ij})$. 
Consider the solutions $h$ of the following feasible circulation problem:
\begin{align*}
h_{ij}&=f_{ij}\quad \forall ij\in E\setminus H\\
\sum_{j:ji\in E}h_{ji}-\sum_{j:ij\in E}h_{ij}&=b_i\quad\forall i\in V\\
h&\ge 0
\end{align*}
%
We claim that the feasible solutions to this circulation problem are
precisely the optimal solutions to (\ref{probl}).
Indeed, if $f^*$ is an optimal solution, then  {\em (i)} 
and {\em (ii)} imply that $f^*_{ij}=f_{ij}$ for all $ij\in E\setminus
H$ and $ij\in H$ for every arc with $f_{ij}\neq f^*_{ij}$. The degree
conditions are satisfied because of $\rho_{f^*}\equiv\rho_f\equiv b$.
Conversely, every feasible
circulation $h$ is an optimal solution to (\ref{probl}), since $(h,\pi)$ satisfies
(\ref{cond:opt}).
\end{proof}

In every step of our algorithm we will have a scaling parameter
$\Delta\ge 0$ and a revealed arc set $F\subseteq F^*$. The Basic algorithm
used the
notion of $\Delta$-feasibility; it has
to be modified according to $F$. 
Let $E_f^F(\Delta)$ denote
the set of arcs in $E_f^F$ with residual capacity at least
$\Delta$. That is,
\begin{equation}\label{def:F-delta}
E_f^F(\Delta):=E_f(\Delta)\cup \obe F=E\cup \obe F\cup \{ji: ij\in E\setminus F, f_{ij}\ge \Delta\}.
\end{equation}
We say that the
$F$-pseudoflow $f$ is {\sl $(\Delta,F)$-feasible},  
if there exists a potential vector
$\pi:V\rightarrow \R$ so that 
\begin{equation}
\pi_j-\pi_i\le C'_{ij}(f_{ij}+\Delta)\ \ \ \ \forall ij\in E^F_f(\Delta).\label{eq:delta-F-feas}
\end{equation} 
This is equivalent to the property that $E_f^F(\Delta)$ contains
no negative cycle with respect to the cost function
$C'_{ij}(f_{ij}+\Delta)$. 

In accordance with $(\Delta,F)$-feasibility, we have to modify the subroutine \textsc{Adjust}. The modified subroutine, denoted by
\textsc{Adjust}$(\Delta,f,F)$ is shown in Algorithm~\ref{fig:adjust-F}. The only difference from  Algorithm~\ref{fig:adjust}
is that the condition (\ref{eq:delta-feas}) is replaced by (\ref{eq:delta-F-feas}), and that in the second condition, ``$\bar f_{ji}\ge \Delta$" is replaced by ``$\bar f_{ji}\ge \Delta$ or $ij\in F$".
The following lemma can be proved by the same argument as Lemma~\ref{lem:adjust}.
\begin{algorithm}[htb]
\begin{tabbing}
xxxxx \= xxx \= xxx \= xxx \= xxx \= \kill
\> \textbf{Subroutine} \textsc{Adjust}$(\Delta,\bar f,F)$\\
\> \textbf{INPUT} A $(2\Delta,F)$-feasible pseudoflow $\bar f$ and a potential vector $\pi$ satisfying (\ref{eq:delta-F-feas}) with $\bar f$ and $2\Delta$.\\
\> \textbf{OUTPUT} A $(\Delta,F)$-feasible pseudoflow $f$ such that $\pi$ satisfies (\ref{eq:delta-F-feas}) with $f$ and $\Delta$.\\
\> \textbf{for all} $ij\in E$ \textbf{do} \\
\> \> \textbf{if} $C'_{ij}(\bar f_{ij}+\Delta)<\pi_j-\pi_i$
\textbf{then} $f_{ij}\leftarrow \bar f_{ij}+\Delta$.\\
\> \> \textbf{elseif}  ($\bar f_{ji}\ge \Delta$ or $ij\in F$) and $\pi_j-\pi_i<C'_{ij}(\bar f_{ij}-\Delta)$
\textbf{then} $f_{ij}\leftarrow \bar f_{ij}-\Delta$.\\
\> \> \> \textbf{else} $f_{ij}\leftarrow \bar f_{ij}$.\\
\> \textbf{return} $f$.
\end{tabbing}
\caption{}\label{fig:adjust-F}
\end{algorithm}

\begin{lemma}\label{lem:adjust-F}
The subroutine \textsc{Adjust}$(\Delta,f,F)$ is well-defined and correct: it returns a
$(\Delta,F)$-feasible pseudoflow with $(f,\pi)$ satisfying (\ref{eq:delta-F-feas}). Further, $Ex(f)\le Ex(\bar f)+m_N\Delta$.
\end{lemma}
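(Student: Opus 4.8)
The plan is to mirror the proof of Lemma~\ref{lem:adjust} almost verbatim, making the small bookkeeping changes forced by the presence of the revealed set $F$. First I would verify \emph{well-definedness}: as before, the two branches of the \textbf{for} loop cannot trigger simultaneously, since $C'_{ij}(\bar f_{ij}+\Delta)<\pi_j-\pi_i<C'_{ij}(\bar f_{ij}-\Delta)$ violates convexity of $C_{ij}$; this argument is insensitive to whether $ij\in F$.

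Next I would check \emph{correctness}, i.e.\ that the unchanged potential $\pi$ satisfies (\ref{eq:delta-F-feas}) with the new $f$ and scaling $\Delta$. I would go arc by arc through the same three cases as in Lemma~\ref{lem:adjust}. The case $f_{ij}=\bar f_{ij}+\Delta$ (triggered by $C'_{ij}(\bar f_{ij}+\Delta)<\pi_j-\pi_i$) is literally the same computation: $(2\Delta,F)$-feasibility gives $\pi_j-\pi_i\le C'_{ij}(\bar f_{ij}+2\Delta)$, and the chain of inequalities establishes (\ref{eq:delta-F-feas}) for both $ij$ and its reverse $ji$ (here the reverse arc is present in $E^F_f(\Delta)$ automatically, which only makes the requirement stronger but still satisfied). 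The case $f_{ij}=\bar f_{ij}$ is also unchanged: (\ref{eq:delta-F-feas}) holds for $ij$ by the input hypothesis $\pi_j-\pi_i\le C'_{ij}(\bar f_{ij}+\Delta)$, and for $ji$ we either have $ji\notin E^F_f(\Delta)$ (when $ij\notin F$ and $\bar f_{ij}<\Delta$) or the input inequality $C'_{ij}(\bar f_{ij}-\Delta)\le \pi_j-\pi_i$ applies; note when $ij\in F$ the reverse arc is always residual, but then the second branch of the loop did not fire, so precisely $\pi_j-\pi_i\ge C'_{ij}(\bar f_{ij}-\Delta)$, which is what we need.

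The one case that genuinely uses the modification is $f_{ij}=\bar f_{ij}-\Delta$, which now fires when $\pi_j-\pi_i<C'_{ij}(\bar f_{ij}-\Delta)$ \emph{and} either $\bar f_{ji}\ge\Delta$ or $ij\in F$. The forward direction $ij$ is handled as before via $\pi_j-\pi_i<C'_{ij}(\bar f_{ij}-\Delta)\le C'_{ij}(f_{ij}+\Delta)$. For the reverse arc $ji$: if $ij\in F$, the residual capacity is $\infty$ so $ji\in E^F_f(\Delta)$, and we must show $C'_{ij}(f_{ij}-\Delta)=C'_{ij}(\bar f_{ij}-2\Delta)\le\pi_j-\pi_i$, which is exactly the $(2\Delta,F)$-feasibility inequality applied to $ji\in \obe F\subseteq E^F_{\bar f}(2\Delta)$. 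If instead $ij\notin F$ but $\bar f_{ij}\ge 2\Delta$, the same inequality holds because $ji\in E_{\bar f}(2\Delta)$; and if $\Delta\le \bar f_{ij}<2\Delta$, then $f_{ij}=\bar f_{ij}-\Delta<\Delta$ and $ij\notin F$, so $ji\notin E^F_f(\Delta)$ and there is nothing to check. This exhausts all cases, so $(f,\pi)$ is $(\Delta,F)$-feasible.

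Finally, the excess bound $Ex(f)\le Ex(\bar f)+m_N\Delta$ follows exactly as in Lemma~\ref{lem:adjust}: on every linear arc $C'_{ij}$ is constant, so neither branch fires and $f_{ij}=\bar f_{ij}$; the flow changes by $\pm\Delta$ only on nonlinear arcs, and each such change increases $\rho_f$ at one endpoint by $\Delta$, raising the total positive excess by at most $\Delta$ per nonlinear arc. I do not expect any real obstacle here — the whole point is that adding $\obe F$ to the residual graph and allowing the extra ``$ij\in F$'' trigger in the down-branch are consistent changes, since for $ij\in F$ the reverse arc is always residual and the down-branch fires exactly when the relevant $(2\Delta,F)$-feasibility inequality is available. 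The only place to be slightly careful is keeping straight, in the $f_{ij}=\bar f_{ij}$ case with $ij\in F$, that the loop's failure to take the second branch delivers precisely the inequality needed for the (now always present) reverse arc.
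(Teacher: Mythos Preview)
Your proposal is correct and follows exactly the approach the paper indicates: the paper does not give a separate proof but simply states that Lemma~\ref{lem:adjust-F} ``can be proved by the same argument as Lemma~\ref{lem:adjust}'', and your write-up carries out precisely that adaptation, with the necessary extra care in the $ij\in F$ subcases (using $ji\in\obe F\subseteq E^F_{\bar f}(2\Delta)$ for the down-branch and the failure of the second branch for the unchanged case). There is nothing to add.
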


\medskip

Finally, we say that a set $F\subseteq E$ is {\sl linear acyclic}, if 
$F$ does not contain any undirected cycles of linear arcs (that is, no
cycle in $F$ may consist of linear arcs and their reverse arcs). We
shall maintain that the set of revealed arcs, $F$ is linear acyclic.

This notion is motivated by the following: assume there exists a cycle consisting of linear arcs and their reverses.
Given an $F$-pseudoflow, we could modify it by sending an arbitrary amount of flow around this cycle. Hence we would not be able to derive our proximity result Lemma~\ref{lem:f1f2} and  Lemma~\ref{lem:error-bound} that relies on it. On the other hand, we can pick an arbitrary arc on a cycle of linear arcs, remove it from $F$, an reroute its entire flow on the rest of the cycle.

\subsection{Subroutine assumptions}\label{sec:assump}

Given the set  $F\subseteq F^*$ of revealed arcs, we will try to find
out whether $F$ already contains the support of an optimal solution. This motivates the following definition.
 We say that the (not necessarily nonnegative) vector $x:E\rightarrow
\R$ is {\sl $F$-tight}, 
if $x_{ij}=0$ whenever $ij\notin F$ and there
exists a potential vector $\pi:V\rightarrow \R$ with
\begin{equation}
\pi_j-\pi_i=C'_{ij}(x_{ij}) \quad\forall ij\in F. \label{eq:tight}
\end{equation}
For example, any optimal solution is $F^*$-tight by Lemma~\ref{lem:canon-F}.
Notice that an $F$-tight vector $f$ is not necessarily  $F$-optimal as (\ref{cond:F-opt})
 might be violated for edges in $E_f^F\setminus \ole F$ and also since
$Ex_b(f)>0$ is  allowed (note that $\rho_f\equiv b$ is equivalent to $Ex_b(f)=0$). Conversely, an
$F$-optimal vector is not necessarily $F$-tight as it can be nonzero on
$E\setminus F$.

Given $F$ and some node demands $\hat b:V\rightarrow \R$, we would like to find an $F$-tight $x$ with
$Ex_{\hat b}(x)=0$. This is equivalent to finding a feasible solution $(x,\pi)$
to the following system:
\begin{align}
\pi_j-\pi_i&=C'_{ij}(x_{ij}) \quad\forall ij\in F \notag\\
\sum_{j:ji\in F}x_{ji}-\sum_{j:ij\in E}x_{ij}&=\hat b_i\quad\forall i\in V\label{sys:trial}\\
x_{ij}&=0\quad \forall ij\in E\setminus F\notag
\end{align}
Let us define the {\sl discrepancy} $D_{\hat b}(F)$ of $F$ as the maximum of
$|\sum_{i\in K}\hat b_i|$ over undirected connected components $K$ of $F$.
A trivial necessary condition for solvability is $D_{\hat b}(F)=0$:
indeed, summing up the second set of equalities for a component $K$, we
obtain $0=\sum_{i\in K}\hat b_i$.

\begin{oracle} \label{assump:trial}
Assume we have a subroutine \textsc{Trial}$(F,\hat b)$ so that for any
linear acyclic $F\subseteq E$ and any vector $\hat b:V\rightarrow \R$ satisfying
$D_{\hat b}(F)=0$,
  it  delivers an $F$-tight solution $x$ to (\ref{sys:trial}) with
$\rho_x\equiv {\hat b}$  in strongly polynomial running time $\rho_T(n,m)$.
\end{oracle}
For quadratic cost functions and also for Fisher markets, this
subroutine can be implemented by solving simple systems of equations
(for quadratic, this was already  outlined in Section~\ref{sec:outline-strong}).

\medskip

Consider now an $F$-tight vector $f$, and let
\begin{equation}
err_F(f):=\inf\{\Delta: f\mbox{ is  }(\Delta,F)\mbox{-feasible}\}.\label{def:err}
\end{equation}
Recall the definition (\ref{def:F-delta}) of the edge set
$E_f^F(\Delta)$. 
As $f$ is assumed to be $F$-tight and therefore $f_{ij}>0$ only if
$ij\in F$, we get that $E_f^F(\Delta)=E\cup \overleftarrow
F$. Consequently,  $E_f^F(\Delta)$ is independent of  the value of $\Delta$. 
 Because of continuity, this infimum is
  actually a minimum whenever the set is nonempty. If $f$ is not
  $(\Delta,F)$-feasible
for any $\Delta$, then
let $err_F(f)=\infty$.
$f$ is $F$-optimal if and only if $f$
is a feasible flow 
(that is, $Ex_b(f)=0$) and $err_F(f)=0$.


\begin{oracle} \label{assump:error}
Assume  a subroutine \textsc{Error}($f,F$) is provided, that returns
 $err_F(f)$ 
 for any
$F$-tight vector $f$
in strongly polynomial running time $\rho_E(n,m)$.
Further, if 
$err_\emptyset({\bf 0})=\infty$, then (\ref{probl}) is unbounded.
\end{oracle}
This subroutine seems significantly harder to
implement for the applications: we need to solve a minimum cost-to-time ratio cycle problem
for quadratic costs and all pairs shortest paths for the Fisher markets.

\medskip
Having formulated all necessary assumptions, we are finally in the position to formulate the main result of the paper.

\begin{theorem}\label{thm:main}
Assume Oracles 1-3 are provided and (\ref{assump:lin-nonlin-igazi}) holds for the problem (\ref{probl}) in a network on $n$ nodes and $m$ arcs, $m_N$ among them having nonlinear cost functions.
Let $\rho_T(n,m)$ and $\rho_E(n,m)$ denote the running time of  Oracle~\ref{assump:trial} and Oracle~\ref{assump:error}, and let $\rho_S(n,m)$ be the running time needed for a single shortest path
computation for nonnegative arc lengths. Then an exact optimal solution can be found in  
$O((n+m_N)(\rho_T(n,m)+\rho_E(n,m))+(n+m_N)^2\rho_S(n,m) \log m)$
time.

This gives an $O(m^4\log m)$ algorithm for quadratic convex objectives. For Fisher markets, we obtain
$O(n^4+n^2(m+n\log n)\log n)$ running time for linear and $O(mn^3+m^2(m+n\log
n)\log m)$ for spending constraint utilities. 
\end{theorem}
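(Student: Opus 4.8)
The plan is to build the enhanced algorithm on top of the Basic algorithm of Section~\ref{sec:basic}, augmenting it with the subroutine \textsc{Trial-and-Error} as sketched in Section~\ref{sec:outline-strong}, and to prove two things: (i) correctness — that when the algorithm terminates it has produced an $F$-optimal solution for some $F\subseteq F^*$, from which an exact optimum is recovered via Lemma~\ref{lem:compute-opt}; and (ii) a bound of $O((n+m_N)\log m)$ on the number of scaling phases, which yields the stated running time after accounting for the per-phase cost. For correctness, the invariants to carry through every phase are: the current flow $f$ is an $F$-pseudoflow that is $(\Delta,F)$-feasible; $F\subseteq F^*$; $F$ is linear acyclic; and the excess bound $Ex(f)\le (2n+m_N)\Delta$ holds at the start of each phase's main part. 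Lemma~\ref{lem:basic-alg} (in the $(\Delta,F)$-version to be re-proved in Section~\ref{sec:analysis}) handles the main part: shortest-path augmentations preserve $(\Delta,F)$-feasibility, the main part takes $O(n+m_N)$ augmentations, and ends with $Ex(f)\le n\Delta$; then \textsc{Adjust}$(\Delta/2,f,F)$ (Lemma~\ref{lem:adjust-F}) restores feasibility at scale $\Delta/2$ at the cost of at most $m_N\Delta/2$ excess. The new arcs entering $F$ are those with $f_{ij}>(2n+m+1)\Delta$ at the end of the $\Delta$-phase; Lemma~\ref{lem:f-change} (again in the $F$-version) guarantees $f^*_{ij}>0$ in some optimum for such arcs, and then Lemma~\ref{lem:canon-F} gives $ij\in F^*$, so $F\subseteq F^*$ is maintained.

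The heart of the argument is the phase-count bound, and this splits into two regimes depending on the discrepancy $D_b(F)$. When $D_b(F)$ is ``large'' relative to the current $\Delta$, one argues exactly as in Orlin's analysis: the total excess that must still be routed across some cut separating components of $F$ forces a large flow value on some arc within $O(\log n)$ phases, so $F$ is extended. When $D_b(F)$ is ``small'', we invoke \textsc{Trial}$(F,\hat b)$ with $\hat b$ a small perturbation of $b$ having $D_{\hat b}(F)=0$ (Assumption~\ref{assump:trial}), obtaining an $F$-tight $\hat f$; we then call \textsc{Error}$(\hat f,F)$ (Assumption~\ref{assump:error}) to get $\hat\Delta=err_F(\hat f)$. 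If $\hat\Delta=0$ and $\hat b=b$ (i.e. $F=F^*$ effectively), $\hat f$ is $F$-optimal and we are done; if $\hat\Delta<\Delta/2$ we restart the scaling from $(\hat f,\hat\Delta)$, legitimately since $\hat f$ is $(\hat\Delta,F)$-feasible; otherwise we continue the ordinary scaling. The key lemma — the main obstacle — is a proximity result bounding $\|f-\hat f\|_\infty$ by a small multiple of $\Delta$ (analogous in spirit to Lemma~\ref{lem:f-change} but now between the running pseudoflow and the hypothetical $F$-tight solution), together with the observation that $\hat f$ changes ``slowly'' in $\Delta$: if \textsc{Trial-and-Error} returns the same $\hat f$ after $O(\log n)$ further phases then the running flow would have to be within $O(\Delta)$ of $\hat f$ at a scale so small that $\hat f$ would already be $F$-optimal, contradiction unless $F$ grew. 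Since each extension of $F$ adds an arc and $|F|\le m$, and between consecutive extensions (or between an extension and termination) only $O(\log n)$ phases pass, and moreover the linear-acyclicity of $F$ limits the number of linear arcs in $F$ to $O(n)$ while nonlinear arcs number $m_N$ — so $|F|=O(n+m_N)$ — the total phase count is $O((n+m_N)\log m)$.

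For the running-time bookkeeping: each phase costs $O((n+m_N)\rho_S(n,m))$ for the $O(n+m_N)$ shortest-path computations in its main part, plus $O(m)$ for \textsc{Adjust}; phases where $F$ is extended additionally pay $\rho_T(n,m)+\rho_E(n,m)$ for one \textsc{Trial} and one \textsc{Error} call. Since there are $O(n+m_N)$ extensions and $O((n+m_N)\log m)$ phases total, the overall bound is $O((n+m_N)(\rho_T(n,m)+\rho_E(n,m)) + (n+m_N)^2\rho_S(n,m)\log m)$, matching the theorem. Finally, for the concrete applications we verify Assumptions~\ref{assump:oracle}--\ref{assump:error}: for quadratic costs, \textsc{Trial} solves the linear system~(\ref{sys:quad}) in $O(n^{2.37})$ (Lemma~\ref{lem:quad-solve}) and \textsc{Error} reduces to minimum cost-to-time ratio cycle via Megiddo~\cite{Megiddo79}; plugging the resulting $\rho_T,\rho_E,\rho_S$ and $m_N\le m$ gives $O(m^4\log m)$ (Section~\ref{sec:quad}), and similarly for the two Fisher-market variants using the Shmyrev~\cite{Shmyrev09} and Devanur et al.~\cite{Birnbaum11} formulations (Section~\ref{sec:applications}). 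The main obstacle, as noted, is establishing the proximity lemma that forces progress in the small-discrepancy regime; everything else is careful adaptation of the Basic-algorithm invariants to the $F$-relative setting.
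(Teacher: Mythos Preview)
Your plan is correct and follows essentially the same architecture as the paper: the enhanced algorithm, the two-regime analysis on $D_b(F)$, the $O((n+m_N)\log m)$ phase bound via ``$F$ must grow within $O(\log m)$ phases'', and the per-phase bookkeeping all match. One refinement worth noting: the key proximity lemma in the paper (Lemma~\ref{lem:error-bound}) is phrased as an upper bound on $err_F(\hat f)$ rather than on $\|f-\hat f\|_\infty$ directly, and its proof goes through an intermediate $F$-tight vector $\bar f$ close to $f$, then invokes a separate lemma (Lemma~\ref{lem:f1f2}) showing that any two $F$-tight vectors $x,y$ satisfy $\|x-y\|_\infty\le\|\rho_x-\rho_y\|_1$ --- this last step relies crucially on the linear-acyclicity of $F$ to force the difference graph $D_{x,y}$ to be a DAG, which is the real technical heart of the argument and the place where Assumption~\ref{assump:lin-nonlin-igazi} earns its keep.
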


\subsection{Description of the enhanced algorithm}

\begin{algorithm}[htb]
\begin{tabbing}
xxxxx \= xxx \= xxx \= xxx \= xxx \= \kill
\> \textbf{Algorithm} \textsc{Enhanced Convex Flow}\\
\> \textsc{Error}$({\bf 0},\emptyset)$;\\
\> $f\leftarrow {\bf 0}$; $\Delta\leftarrow \max\{err_\emptyset({\bf 0}),Ex_b({\bf 0})/(2n+m_N)\}$; $F\leftarrow \emptyset$;\\
\> \textbf{repeat} \ \ \ {\sl\small  \slash \slash $\Delta$-phase}\\
\> \> \textbf{do} \ \ \ {\sl\small  \slash \slash main part}\\
\> \> \> $S(\Delta)\leftarrow\{i\in V: \rho_f(i)-b_i\ge \Delta\}$;\\
\> \> \> $T(\Delta)\leftarrow\{i\in V: \rho_f(i)-b_i\le -\Delta\}$;\\
\> \> \> $P\leftarrow$ shortest $s-t$ path in $E^F_f(\Delta)$ for the
cost $C'_{ij}(f_{ij}+\Delta)$ with $s\in S(\Delta)$,
$t\in T(\Delta)$;\\
\> \> \> send $\Delta$ units of flow on $P$ from $s$ to $t$;\\
\> \>  \textbf{while} $S(\Delta),T(\Delta)\neq \emptyset$;\\
\> \> \textsc{Extend}$(\Delta,f,F)$; \\
\> \> \textbf{if} ($F$ was extended) \textsl{and} ($D_b(F)\le \Delta$) \textbf{then} \textsc{Trial-and-Error}$(F)$\\
\> \> \> \> \textbf{else} \textsc{Adjust}$(\Delta/2,f,F)$;\\
\> \> \> \> \> $\Delta\leftarrow\Delta/2$;
\end{tabbing}

\begin{tabbing}
xxxxx \= xxx \= xxx \= xxx \= xxx \= \kill
\> \textbf{Subroutine} \textsc{Extend}$(\Delta,f,F)$\\
\> \textbf{for all} $ij\in E\setminus F$, $f_{ij}>(2n+m+1)\Delta$ \textbf{do} \\
\> \> \textbf{if} $F\cup\{ij\}$ is linear acyclic \textbf{then} $F\leftarrow F\cup\{ij\}$\\
\> \> \textbf{else}\\
\> \> \> $P\leftarrow$ path of linear arcs in $\ole F$ between $i$ and $j$;\\
\> \> \> send $f_{ij}$ units of flow on $P$ from $i$ to $j$;\\
\> \> \> $f_{ij}\leftarrow 0$;
\end{tabbing}
\caption{}\label{alg:enhanced}
\end{algorithm}

Algorithm~\ref{alg:enhanced} starts with  $f={\bf 0}$, $\Delta=\max\{err_\emptyset({\bf 0}), Ex_b({\bf 0})/(2n+m_N)\}$ and $F=\emptyset$.
The algorithm
consists of $\Delta$-phases. In the
$\Delta$-phase, we shall maintain a linear acyclic revealed arc set
$F\subseteq F^*$, and a $(\Delta,F)$-feasible
$F$-pseudoflow $f$. The algorithm will always terminate during the subroutine \textsc{Trial-and-Error}.

The main part of the $\Delta$-phase  is the same as in the Basic
algorithm. 
Let $S(\Delta)=\{i\in V: \rho_f(i)-b_i\ge \Delta\}$ and $T(\Delta)=\{i\in V: \rho_f(i)-b_i\le -\Delta\}$. As long as $S(\Delta)\neq \emptyset$,
$T(\Delta)\neq\emptyset$, send $\Delta$ units of flow from a node
$s\in S(\Delta)$ to a node $t\in T(\Delta)$ on a shortest path in $E_f^F(\Delta)$ with respect to the cost function $C'_{ij}(f_{ij}+\Delta)$.
(The existence of such a path $P$ is guaranteed by our assumption that the graph $G=(V,E)$ is
strongly connected.)

After the main part (the sequence of path augmentations) is finished, the subroutine \textsc{Extend}$(\Delta,f,F)$ 
adds new arcs 
$ij\in E\setminus F$ with 
$f_{ij}>(2n+m+1)\Delta$ to $F$ maintaining the linear acyclic property. This is achieved as follows: we first add all nonlinear such arcs to $F$.
We add a linear arc to $F$ if it does not create any (undirected) cycles in $F$. If adding the linear arc $ij$ would create a cycle, we do not include it in $F$, but reroute the entire flow from $ij$ using the (undirected) path in $F$ between $i$ and $j$.

If no new arc enters $F$, then we perform \textsc{Adjust}$(\Delta/2,f,F)$ and move to the next scaling phase with
the same $f$ and set the scaling factor to $\Delta/2$. This is
done also if $F$ is extended, but it has a high discrepancy: $D_b(F)>\Delta$.

Otherwise, the subroutine \textsc{Trial-and-Error}$(F)$
determines the next $f$ and $\Delta$. Based on the arc set
$F$, we find a new $F$-pseudoflow  $f$ and
scaling factor at most $\Delta/2$. The subroutine may also terminate
with an $F$-optimal solution, which enables us to find an optimal
solution to (\ref{probl}) by a maximum flow computation due to Lemma~\ref{lem:compute-opt}.
Theorem~\ref{thm:main-bound} will show that this is guaranteed to
happen within a strongly polynomial number of steps.

\subsubsection*{The Trial-and-Error subroutine}
The subroutine assumes that 
 the discrepancy of $F$ is small:
$D_b(F)\le \Delta$.

{\bf Step 1.} First,  modify $b$ to $\hat b$: in each (undirected)
component $K$ of $F$, pick a node $j\in K$ and change $b_j$ by
$-\sum_{i\in K} b_{i}$; leave all other $b_i$ values unchanged. Thus
we get a $\hat b$ with $D_{\hat b}(F)=0$.  
\textsc{Trial}$(F,\hat b)$ returns an $F$-tight vector $\hat f$.

{\bf Step 2.} Call the subroutine \textsc{Error}$(\hat f,F)$.
 If $b=\hat b$ and $err_F(\hat f)=0$, then $\hat f$ is $F$-optimal.
An optimal solution to (\ref{probl})
can be found by a single maximum flow computation, as described in the
proof of  Lemma~\ref{lem:compute-opt}.
 In this case, the algorithm terminates.
If $err_F(\hat f)\ge \Delta/2$, then keep the original $f$, perform 
\textsc{Adjust}$(\Delta/2,f,F)$  and go to the next scaling phase
with scaling factor $\Delta/2$. Otherwise, set $f=\hat f$ and define
the next scaling factor as
\begin{equation*}
\Delta_{next}=\max\{err_F(\hat f),Ex_b(\hat f)/(2n+m_N)\}. 
\end{equation*}

\section{Analysis}\label{sec:analysis}
The details how the shortest path computations are performed will be discussed in Section~\ref{sec:shortest}; 
in the following analysis, we assume it can be efficiently implemented.
 At the initialization, 
$err_\emptyset({\bf 0})$ must be finite or the problem is unbounded
as assumed in Oracle~\ref{assump:error}. 

\textsc{Trial-and-Error}
replaces $f$ by $\hat f$ if  $err_F(\hat f)\le \Delta/2$ and keeps the
same $f$ otherwise. The first case will be called a {\sl successful}
trial, the latter is {\sl unsuccessful}. 
The following is (an almost identical) counterpart of
Lemma~\ref{lem:basic-alg}.
\begin{lemma}\label{lem:enhanced}
\begin{enumerate}[(i)]
\item In the main part of the $\Delta$-phase, the $F$-pseudoflow $f$ is an integer
  multiple of $\Delta$ on each arc $ij\in E\setminus F$, and consequently,
  $E_f^F(\Delta)=E_f^F$.
\item $(\Delta,F)$-feasibility is maintained in the main part and in
subroutine \textsc{Extend}$(\Delta,f,F)$. 
\item
 At the beginning of the main part, $Ex(f)\le (2n+m_N)\Delta$, and at
 the end, $Ex(f)\le n\Delta$. 
\item The main part consists of at most
$2n+m_N$ flow augmentation steps.
\item The scaling factor $\Delta$ decreases by at least a factor of 2 between two $\Delta$-phases. 
\end{enumerate}
\end{lemma}
\begin{proof}
For {\em (i)}, $f$ is zero on every arc in $E\setminus F$ at the beginning of the algorithm
and after every successful trial. In every other case, the previous
phase had scaling factor $2\Delta$, and thus by induction, the flow is an integer
multiple of $2\Delta$ at the end of the main part of the $2\Delta$-phase, a property also maintained by \textsc{Extend}$(2\Delta,f,F)$. 
The $2\Delta$-phase finishes with \textsc{Adjust}$(\Delta,f,F)$, possibly modifying the flow on every arc by $\pm \Delta$.
In the main part of the $\Delta$-phase,
the shortest path augmentations also change the flow by
$\pm \Delta$. This implies $E_f^F(\Delta)=E_f^F$.

For {\em (ii)}, $P$ is a shortest path if there exists a potential $\pi$
satisfying (\ref{eq:delta-F-feas}) 
 with $\pi_j-\pi_i=C'_{ij}(f_{ij}+\Delta)$ on each arc $ij\in
P$ (see also Section~\ref{sec:shortest}). We show that when augmenting on the shortest path $P$, 
(\ref{eq:delta-F-feas}) is maintained with the same
$\pi$.
If $ij,ji\notin P$, then it is trivial as the flow is left unchanged on $ij$. Consider now an arc $ij\in P$; the next argument applies both if $ij$ is a forward or a reverse arc. The new flow value will be
$f_{ij}+\Delta$, hence we need $\pi_j-\pi_i\le
C'_{ij}(f_{ij}+2\Delta)$, obvious as $C'_{ij}$ is monotonely
increasing. We next verify  (\ref{eq:delta-F-feas})  for the backward arc $ji\in E_f^F(\Delta)$.
This gives  $\pi_i-\pi_j\le C'_{ji}((f_{ji}-\Delta)+\Delta)$, that is equivalent to  $C'_{ij}(f_{ij})\le \pi_j-\pi_i$, again a consequence of
 monotonicity.


In subroutine \textsc{Extend}, we reroute the flow $f_{ij}$ from a
linear arc $ij$ if $\ole F$ contains a directed path $P$ from $i$ to
$j$. This cannot affect feasibility since the $C'_{ij}$'s are constant
on linear arcs. Also note that arcs in $\ole F$ have infinite residual capacities.

For {\em (iii)}, $Ex(f)\le n\Delta$ as the main part terminates with either
$S(\Delta)=\emptyset$ or $T(\Delta)=\emptyset$. Lemma~\ref{lem:adjust-F} shows that
\textsc{Adjust}$(\Delta/2,f,F)$ increases the excess by at most $m_N\Delta/2$.
Consequently, $Ex(f)\le (2n+m_N)(\Delta/2)$ at the beginning of the $\Delta/2$-phase.

The other possible case is that a successful trial replaces $\Delta$ by
$\Delta_{next}$. By  definition, the new excess is at most
$(2n+m_N)\Delta_{next}$.

Further, {\em (iii)} implies {\em (iv)}, as each flow
augmentation decreases $Ex(f)$ by $\Delta$. Finally {\em (v)} is straightforward if the next value of the scaling factor is set as $\Delta/2$.
This is always the case, except if \textsc{Trial-and-Error} is called and
 $err_F(\hat f)\le \Delta/2$, when the next scaling factor is set as the maximum of $err_F(\hat f)$ and
$Ex_b(\hat f)/(2n+m_N)$. We show that this second term is also at most $\Delta/2$. Indeed, $\hat f$ was obtained by \textsc{Trial}$(F,\hat b)$,
and therefore $\rho_{\hat f}(i)-b_i=\hat b_i - b_i\le \Delta$ due to the definition of $\hat b$ and $D_b(F)\le \Delta$. It follows that $Ex_b(\hat f)\le n\Delta$, and thus $Ex_b(\hat f)/(2n+m_N)<\Delta/2$. 
\end{proof}

\begin{lemma}\label{lem:F-in-F}
 $F\subseteq F^*$ holds in each step of the algorithm.
\end{lemma}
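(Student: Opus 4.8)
The plan is to prove the invariant $F\subseteq F^*$ by induction on the steps of the algorithm, tracking the (at most three) places where $F$ is modified: the two branches of subroutine \textsc{Extend} (adding an arc $ij$ with $f_{ij}>(2n+m+1)\Delta$, versus rerouting flow off a linear arc), and implicitly the initialization $F=\emptyset$, which trivially satisfies the claim. The rerouting branch of \textsc{Extend} does not change $F$ at all, so only the case where a new arc $ij$ is added to $F$ needs argument.

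So suppose the invariant holds at the start of a $\Delta$-phase, let $f$ be the $F$-pseudoflow at the end of the main part, and let $ij\in E\setminus F$ satisfy $f_{ij}>(2n+m+1)\Delta$. I want to conclude $ij\in F^*$. The natural tool is the proximity bound: I would establish (or invoke, if it has been proven already for the enhanced algorithm — it is the $F$-analogue of Lemma~\ref{lem:f-change}) that for the $F$-pseudoflow $f$ at the end of the main part of the $\Delta$-phase and any $F$-pseudoflow $f'$ arising in a later phase, $\|f-f'\|_\infty\le (2n+m+1)\Delta$, so that $f_{ij}>(2n+m+1)\Delta$ forces $f'_{ij}>0$ in every subsequent phase. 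Since the Basic/enhanced scheme, run forever with the revealed set $F$ held fixed, converges to an $F$-optimal solution $\hat f$ with $\hat f_{ij}>0$, and since by Lemma~\ref{lem:compute-opt} any coordinate on which an $F$-optimal solution is strictly positive lies in $F^*$, we get $ij\in F^*$. (One must be slightly careful about the successful-trial steps, which replace $f$ by $\hat f$ and reset $\Delta$; but the proximity bound is stated so as to survive these, as the threshold is rescaled along with $\Delta$, and in a successful trial $\hat f$ is already $F$-tight hence nonzero only on $F$, so no new arc outside $F$ can be abundant there.)

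The main obstacle I anticipate is making precise the limiting argument "run the scheme forever with $F$ fixed and it converges to an $F$-optimal solution," together with the uniform proximity estimate across phase transitions of three different types (ordinary \textsc{Adjust}-transition, successful trial, unsuccessful trial). The cleanest route is probably to observe that all three transitions change the flow by at most $\pm\Delta/2$ per arc on the nonlinear arcs (via Lemma~\ref{lem:adjust-F} for \textsc{Adjust}, and a separate estimate for the trial case using that $\hat f$ is chosen with small excess $Ex_b(\hat f)\le (2n+m_N)\Delta_{next}$ and $err_F(\hat f)\le\Delta/2$), so that summing the geometric series over all later phases yields the $(2n+m+1)\Delta$ bound exactly as in the linear case; then abundance of $ij$ in the $\Delta$-phase propagates forward, and taking the limit gives an $F$-optimal $\hat f$ with $\hat f_{ij}>0$, whence $ij\in F^*$ by Lemma~\ref{lem:compute-opt}. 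I would also double-check that \textsc{Extend} preserves $(\Delta,F)$-feasibility and $F$-pseudoflow-ness before adding $ij$ — but that is exactly Lemma~\ref{lem:enhanced}(ii), so it can be quoted. The only genuinely new content beyond bookkeeping is the proximity inequality spanning the trial transitions, and I expect that to be the technical heart of the argument.
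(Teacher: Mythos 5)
Your overall strategy is the paper's: induction on the steps modifying $F$, the observation that an arc with $f_{ij}>(2n+m+1)\Delta$ at the end of the main part stays positive under a suitable proximity bound, convergence of an infinite continuation to an $F$-optimal limit, and then Lemma~\ref{lem:compute-opt} to conclude $ij\in F^*$. The gap is in how you treat the infinite continuation. You propose to prove the proximity estimate for a run that still performs \textsc{Trial-and-Error}, and you assert that a successful trial changes the flow by at most $\pm\Delta/2$ per (nonlinear) arc. That is not true and not what the paper's machinery gives: a successful trial replaces $f$ wholesale by the $F$-tight vector $\hat f$ returned by \textsc{Trial}, and the only available bound on $\|f-\hat f\|_\infty$ is the one extracted in the proof of Lemma~\ref{lem:error-bound}, namely $O\bigl((n+m)m\Delta\bigr)$ — far larger than $(2n+m+1)\Delta$. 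So the geometric series you want to sum does not close if trial transitions are included, and your parenthetical remark ("the threshold is rescaled along with $\Delta$", "$\hat f$ is nonzero only on $F$") does not repair this: the issue is not whether a new arc becomes abundant at the trial, but whether the already-abundant arc $ij$ survives the jump from $f$ to $\hat f$ with a positive value.

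The paper avoids the problem entirely by making the continuation hypothetical in a stronger sense: from the phase in which $ij$ entered onward, one imagines running the algorithm with $F$ frozen, \emph{never} calling \textsc{Trial-and-Error}, and always halving $\Delta$. Then every phase consists only of at most $2n+m_N$ augmentations of size $\Delta_t$ plus one \textsc{Adjust} of size $\Delta_t/2$, the per-phase change is at most $(2n+m_N+\tfrac12)\Delta_t$, the geometric series gives a total future change of at most $(2n+m+\tfrac12)\Delta$ after the $\Delta$-phase, and the limit is an $F$-optimal solution with $f^*_{ij}>0$. You should replace your "separate estimate for the trial case" by this device; with it, the rest of your argument (including the appeal to Lemma~\ref{lem:enhanced}(ii) for \textsc{Extend} and to Lemma~\ref{lem:compute-opt} for the conclusion) goes through as in the paper. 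One further point worth being explicit about, which you only gesture at: you must verify that the limit actually satisfies condition (\ref{cond:F-opt}) and has zero excess, which follows from $(\Delta_t,F)$-feasibility of the iterates and $Ex_b(f^{(t)})\le n\Delta_t$, respectively.
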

\begin{proof}
The proof is by induction. A new arc $ij$ may enter $F$ if
$f_{ij}>(2n+m+1)\Delta$ after the main part of the $\Delta$-phase.
We shall prove that $f^*_{ij}>0$ for some $F$-optimal solution $f^*$, and thus
 Lemma~\ref{lem:compute-opt} gives $ij\in F^*$.

After the phase when $ij$ entered, let us continue with the following modified algorithm:
do not extend $F$ and do not perform
\textsc{Trial-and-Error} anymore, but 
always choose the next scaling
factor as $\Delta/2$, and keep the algorithm running forever. 
(This is almost the same as the Basic algorithm, with the difference that we have a revealed arc set $F$.)

Let $\Delta_0=\Delta$ and $\Delta_t=\Delta/2^t$ denote the scaling factor in the $t$'th phase of this algorithm (with phase 0 corresponding to the $\Delta$-phase).
Consider any $\Delta_t$-phase ($t\ge 1$).
The flow is modified by at
most $(2n+m_N)\Delta_t$ during the main
part by Lemma~\ref{lem:enhanced}(iv) and by $\Delta_t/2$ in \textsc{Adjust}$(\Delta_t/2,f,F)$, amounting to a total
modification $\le (2n+m_N+\frac12)\Delta_t$. Consequently, 
the total modification in the $\Delta_t$ phase and all later phases is bounded by
$(2n+m_N+\frac12)\sum_{k=t}^\infty \Delta_k\le 2(2n+m+\frac12)\Delta_t$.

We may conclude that when running forever, the flow $f$ converges to an $F$-optimal solution $f^*$.
Indeed, let $f^{(t)}$ denote the $F$-pseudoflow at the end of the $t$'th phase. By the above observation, 
$||f^{(t)}-f^{(t')}||_\infty\le 2(2n+m+\frac12)\Delta_t$ for any $t'\ge t\ge 0$. Consequently, on every arc $ij\in E$, the sequence
$f^{(t)}_{ij}$ converges; let $f^*$ denote the limit. We claim the $f^*$ is $F$-optimal.

Firstly, $f^*$ is clearly an $F$-pseudoflow. Property
(\ref{cond:F-opt}) is equivalent to the property that $E_f^F$ does not contain any negative cycle w.r.t. $C'_{ij}(f_{ij})$.
This follows from the fact that $E_f^F(\Delta_t)$ does not contain any negative cycle w.r.t. $C'_{ij}(f_{ij}^{(t)})$ due to the
$(\Delta_t,F)$-feasibility of $f^{(t)}$.
Finally, $Ex_b(f^*)=\lim_{t\rightarrow\infty} Ex_b(f^{(t)})\le \lim_{t\rightarrow\infty} n\Delta^t=0$, and therefore 
$Ex_b(f^*)=0$.

To finish the proof, we observe that $f^*_{ij}>0$. Indeed, $f_{ij}>(2n+m+1)\Delta$ after the main part of the $\Delta$-phase, and hence
 $f_{ij}>(2n+m+\frac 12)\Delta$ at the end of the $\Delta$-phase (after performing  \textsc{Adjust}$(\Delta/2,f,F)$). By the above argument,
 the total change in all later phases is $\le 2(2n+m+\frac12)\Delta_1=(2n+m+\frac12)\Delta$, yielding the desired conclusion.
\end{proof}

Recall the characterization of arcs to free and restricted. Free arcs are differentiable on the entire $\R$, whereas for a restricted arc $ij$, we have  $C'_{ij}(\alpha)=-\infty$ for $\alpha< 0$. Therefore we have to avoid the flow value becoming negative even if $ij\in F$ for a restricted arc.

\begin{claim}\label{cl:restricted-arcs}
$f_{ij}\ge 0$ holds for every restricted arc $ij$ during the entire algorithm, even if $ij\in F$.
\end{claim}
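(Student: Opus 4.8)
The plan is to show that the flow on a restricted arc $ij$ never drops below zero by examining each operation that can modify $f_{ij}$: the initialization, a shortest-path augmentation in the main part, the rerouting inside \textsc{Extend}, the subroutine \textsc{Adjust}, and the replacement $f\leftarrow\hat f$ after a successful trial. For most of these the key fact is the same: decreasing $f_{ij}$ below its current value requires using the backward arc $ji$, and $ji$ can only be used if it lies in the relevant residual edge set with enough residual capacity; the convention $C'_{ij}(\alpha)=-\infty$ for $\alpha<\ell_{ij}=0$ will be exactly what prevents this when $f_{ij}$ is already $0$.

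First I would handle the main-part augmentations. If $ji\in P$ for a shortest path $P$ in $E_f^F(\Delta)$, then $\pi_i-\pi_j=C'_{ji}(f_{ji}+\Delta)$, equivalently $\pi_j-\pi_i=C'_{ij}(f_{ij}-\Delta)$; if $f_{ij}=0$ this would force $\pi_j-\pi_i=C'_{ij}(-\Delta)=-\infty$, impossible for a finite potential. More carefully, $ji\in E_f^F(\Delta)$ requires $f_{ij}\ge\Delta$ (note that a restricted arc $ij$ is never placed in $F$, since restricted arcs are nonlinear by Assumption~\ref{assump:lin-nonlin-igazi}, and only arcs with $f_{ij}>(2n+m+1)\Delta>0$ are revealed — but I should argue directly that restricted arcs never enter $F$, or alternatively note that even if $ij\in F$ the value stays $\ge 0$ by the same $C'$-argument). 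So an augmentation that uses $ji$ leaves $f_{ij}\ge\Delta-\Delta=0$. Next, \textsc{Adjust}$(\Delta/2,f,F)$: it sets $f_{ij}\leftarrow\bar f_{ij}-\Delta/2$ only when $\bar f_{ji}\ge\Delta/2$ (or $ij\in F$) and $\pi_j-\pi_i<C'_{ij}(\bar f_{ij}-\Delta/2)$; if $\bar f_{ij}=0$ then $C'_{ij}(\bar f_{ij}-\Delta/2)=C'_{ij}(-\Delta/2)=-\infty$, so the strict inequality cannot hold, and $f_{ij}$ is not decreased; and if $\bar f_{ij}\ge\Delta/2$ then the new value is still $\ge 0$. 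For \textsc{Extend}, the rerouting only touches linear arcs, hence never a restricted arc, and the arc whose flow is zeroed out keeps $f_{ij}=0\ge 0$.

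The remaining case is the successful trial, where $f$ is replaced by the $F$-tight vector $\hat f=\textsc{Trial}(F,\hat b)$. Here I would argue that $\hat f_{ij}\ge 0$ for every restricted arc $ij$: since $ij$ is a restricted (hence nonlinear) arc, by the induction hypothesis $ij\notin F$ throughout, so by $F$-tightness $\hat f_{ij}=0$. (If one prefers to allow restricted arcs into $F$, one instead observes that $\hat f$ must satisfy $\pi_j-\pi_i=C'_{ij}(\hat f_{ij})$, which is finite, forcing $\hat f_{ij}\ge 0$ by the convention on $C'_{ij}$ below $0$.) Stringing these cases together by induction on the phases gives the claim.

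The main obstacle I anticipate is bookkeeping rather than depth: I must be careful about whether a restricted arc can ever be a revealed arc, and I should pin this down cleanly — restricted arcs are nonlinear, and \textsc{Extend} only adds arcs with $f_{ij}>(2n+m+1)\Delta>0$, so at the moment of revelation the flow is already strictly positive; combined with the monotonicity argument above this keeps it positive thereafter, so the case $ij\in F$ either does not arise or is harmless. The one genuinely load-bearing point, reused in every step, is that a finite potential vector cannot satisfy any equation or inequality whose right-hand side is $C'_{ij}(\alpha)=-\infty$, i.e.\ that would require $f_{ij}<0$ on a restricted arc; this is precisely the role of the $-\infty$ right-derivative convention, and making it explicit is the heart of the proof.
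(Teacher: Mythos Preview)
Your proof is correct and follows essentially the same approach as the paper: both argue that whenever an operation would push $f_{ij}$ below zero, the requisite condition involves $C'_{ij}(\alpha)$ for some $\alpha<0$, which equals $-\infty$ and is incompatible with a finite potential. You are in fact more thorough than the paper's own proof, which checks only the path-augmentation and \textsc{Adjust} cases and does not explicitly address the replacement $f\leftarrow\hat f$ after a successful trial; your treatment of that case (via $F$-tightness if $ij\notin F$, or via $\pi_j-\pi_i=C'_{ij}(\hat f_{ij})$ finite if $ij\in F$) fills that small gap.
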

\begin{proof}
 $f_{ij}\ge 0$ holds at the initialization; consider the first $\Delta$-phase  when $f_{ij}<0$ is attained. This can happen 
during a path augmentation or in the \textsc{Adjust} subroutine (\textsc{Extend} may not modify $f_{ij}$ as $ij$ is a nonlinear arc).
In case of a path augmentation, $ji$ is contained on the shortest path $P$, and therefore $\pi_j-\pi_i=C'_{ij}(f_{ij}-\Delta)$ must hold for a potential $\pi$ (see the proof of Lemma~\ref{lem:enhanced}). This is a contradiction as $f_{ij}-\Delta<0$ and thus $C'_{ij}(f_{ij}-\Delta)=-\infty$. A similar argument works for 
\textsc{Adjust}. 
\end{proof}

\begin{lemma}\label{lem:error-bound} 
When \textsc{Trial-and-Error}$(F)$ is performed in the $\Delta$-phase,
$err_F(\hat f)\le 6(m+1)^2\Delta$ holds.
\end{lemma}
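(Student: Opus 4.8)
Write $\hat\Delta:=2(2n+m+4)m\Delta$; it is enough to show $\hat f$ is $(\hat\Delta,F)$-feasible, since then $err_F(\hat f)\le\hat\Delta$. As $\hat f$ is $F$-tight it vanishes off $F$, so $E^F_{\hat f}(\hat\Delta)=E\cup\obe F$ regardless of the scaling value, and the plan is to exhibit a single potential $\pi$ with $\pi_j-\pi_i\le C'_{ij}(\hat f_{ij}+\hat\Delta)$ on $E\cup\obe F$. I would take for $\pi$ the potential certifying $(\Delta,F)$-feasibility of the $F$-pseudoflow $f$ passed to \textsc{Trial-and-Error}, i.e.\ the pseudoflow right after \textsc{Extend} in the $\Delta$-phase, which is $(\Delta,F)$-feasible by Lemma~\ref{lem:enhanced}(ii); thus $C'_{ij}(f_{ij}-\Delta)\le\pi_j-\pi_i\le C'_{ij}(f_{ij}+\Delta)$ for $ij\in F$ and $\pi_j-\pi_i\le C'_{ij}(f_{ij}+\Delta)$ for $ij\in E$. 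For $ij\in E\setminus F$ one has $\hat f_{ij}=0$ and, since \textsc{Extend} has just removed from $E\setminus F$ every arc with $f_{ij}>(2n+m+1)\Delta$, also $f_{ij}+\Delta\le(2n+m+2)\Delta\le\hat\Delta$, so monotonicity of $C'_{ij}$ gives $\pi_j-\pi_i\le C'_{ij}(f_{ij}+\Delta)\le C'_{ij}(\hat\Delta)$. For $ij\in F$ (and its reverse in $\obe F$), the two required inequalities $\pi_j-\pi_i\le C'_{ij}(\hat f_{ij}+\hat\Delta)$ and $C'_{ij}(\hat f_{ij}-\hat\Delta)\le\pi_j-\pi_i$ follow from monotonicity of $C'_{ij}$ as soon as $|f_{ij}-\hat f_{ij}|\le\hat\Delta-\Delta$. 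Restricted arcs cause no trouble, since a value $-\infty$ can occur only on the smaller side of such an inequality.

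So the statement reduces to the proximity bound $|f_{ij}-\hat f_{ij}|\le\hat\Delta-\Delta$ for every $ij\in F$, and this is the heart of the matter. The idea is to interpose an $F$-optimal solution $\bar f$: freezing $F$ and running the basic scaling algorithm (with revealed set $F$) from the current configuration forever produces, exactly as in the proof that $F\subseteq F^*$, an $F$-optimal limit $\bar f$ with $\|f-\bar f\|_\infty\le(2n+m+1)\Delta$. Being $F$-optimal, $\bar f$ satisfies (applying the optimality inequality on $ij\in F$ and on $ji\in\obe F$) $\bar\pi_j-\bar\pi_i=C'_{ij}(\bar f_{ij})$ for all $ij\in F$; while $\hat f$, being $F$-tight, satisfies $\hat\pi_j-\hat\pi_i=C'_{ij}(\hat f_{ij})$ on $F$ and vanishes off $F$. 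Moreover $\rho_{\bar f}\equiv b$, $\rho_{\hat f}\equiv\hat b$, and $\|b-\hat b\|_1\le(\#\,\text{components of }F)\cdot D_b(F)\le n\Delta$, using the standing hypothesis $D_b(F)\le\Delta$ of \textsc{Trial-and-Error}.

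Then I would decompose $\bar f-\hat f$ into paths and cycles in the difference graph $D_{\bar f,\hat f}$. The path part carries total flow $\tfrac12\|b-\hat b\|_1\le\tfrac n2\Delta$, hence contributes at most $\tfrac n2\Delta$ to $|\bar f_{ij}-\hat f_{ij}|$ on any arc. No decomposition cycle can lie entirely inside $\ole F$: such a cycle has $C'$-cost zero for $\bar f$ (telescoping with $\bar\pi$) and zero for $\hat f$ (telescoping with $\hat\pi$), and since on a difference-graph arc $\bar f_{ab}>\hat f_{ab}$ forces $C'_{ab}(\bar f_{ab})\ge C'_{ab}(\hat f_{ab})$ (with the reverse sign on backward arcs), comparing the two zero sums termwise forces $C'_{ab}(\bar f_{ab})=C'_{ab}(\hat f_{ab})$ on every arc, hence $\bar f_{ab}=\hat f_{ab}$ on every nonlinear arc — so the cycle would consist of linear arcs only, contradicting that $F$ is linear acyclic. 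Every decomposition cycle therefore traverses a forward arc $ij\in E\setminus F$ (a backward one is impossible, as $\hat f_{ij}=0\le\bar f_{ij}$), on which it carries at most $\bar f_{ij}\le f_{ij}+(2n+m+1)\Delta\le 2(2n+m+1)\Delta$ (using $f_{ij}\le(2n+m+1)\Delta$ on $E\setminus F$ after \textsc{Extend}); summing over $E\setminus F$, the cycles contribute at most $2m(2n+m+1)\Delta$ to any single arc. Hence $|\bar f_{ij}-\hat f_{ij}|\le\tfrac n2\Delta+2m(2n+m+1)\Delta$ for every arc, and adding $\|f-\bar f\|_\infty\le(2n+m+1)\Delta$ gives $|f_{ij}-\hat f_{ij}|\le(2n+m+1)\Delta+\tfrac n2\Delta+2m(2n+m+1)\Delta$, which an elementary estimate (using $n\le m$ and $m\ge 1$) bounds by $\hat\Delta-\Delta$. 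This closes the argument.

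The main obstacle is the proximity bound, and within it the nonlinear arcs of $F$: there a metric argument fails because $C'_{ij}$ can be arbitrarily steep, so the bound has to be combinatorial, and the device that makes it go through is to compare $\hat f$ with the \emph{exactly} tight $F$-optimal solution $\bar f$ (not with $f$ itself), so that strict convexity together with the linear-acyclicity of $F$ eliminates any long flow-discrepancy cycle.
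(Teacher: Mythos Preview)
Your argument is correct, but it takes a genuinely different route from the paper's.

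The paper constructs the auxiliary vector $\bar f$ \emph{directly} as an $F$-tight vector: on each arc $ij\in F$ it uses the intermediate value theorem to pick $\bar f_{ij}\in[f_{ij}-\Delta,f_{ij}+\Delta]$ with $C'_{ij}(\bar f_{ij})=\pi_j-\pi_i$, and sets $\bar f\equiv 0$ off $F$. Since now both $\bar f$ and $\hat f$ are $F$-tight, their difference is supported entirely on $\ole F$, and the cycle-free argument you give (which is the content of the paper's Lemma~\ref{lem:f1f2}) shows the difference graph is a DAG; a topological-order cut argument then yields $\|\bar f-\hat f\|_\infty\le\|\rho_{\bar f}-\rho_{\hat f}\|_1$ directly, with no path/cycle decomposition needed. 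The excess bound $Ex_b(\bar f)\le(2n+m+2)m\Delta$ follows because one only zeroed out at most $m$ arcs, each carrying $\le(2n+m+1)\Delta$. Finally Claim~\ref{cl:feas-bound} (which you essentially re-prove in your first paragraph) converts the $\|\cdot\|_\infty$ bound into the desired $(\hat\Delta,F)$-feasibility of $\hat f$.

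Your $\bar f$ is instead the $F$-optimal limit of the basic algorithm, which is \emph{not} $F$-tight: it may be positive on $E\setminus F$. This forces you to do the full path/cycle decomposition and bound the cycle part through its mandatory passage across $E\setminus F$. This works, and your numerics close, but it costs you the convergence argument from Lemma~5.2 and a more elaborate combinatorial bound. The paper's trick of making $\bar f$ exactly $F$-tight by a one-line IVT step is what collapses the whole cycle discussion into the clean DAG bound of Lemma~\ref{lem:f1f2}, and is worth noting as the simpler device.
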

This lemma is of key importance. Before proving it, we show how it provides the strongly
polynomial bound. The main idea is the following: in \textsc{Trial-and-Error}$(F)$, we replace $f$ by $\hat f$ and $\Delta$ by a new value instead of  $\Delta/2$  in case $err_F(\hat f)<\Delta/2$; otherwise, we ignore $\hat f$ and proceed to the next phase as usual. Whereas $err_F(\hat f)\ge \Delta/2$ is possible, the lemma gives an upper bound in terms of $\Delta$. Note also that the output of the subroutine \textsc{Trial-and-Error}$(F)$ depends only on the revealed arc set $F$.
Consequently, if we had $err_F(\hat f)\ge \Delta/2$, then by the time the scaling factor reduces to a smaller value $\Delta'$ such that $6(m+1)^2\Delta'<\Delta/2$, the set $F$ must have been extended.
\begin{theorem}\label{thm:main-bound}
The enhanced algorithm terminates in at most  $O((n+m_N)\log m)$ scaling phases.
\end{theorem}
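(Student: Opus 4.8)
The plan is to count scaling phases between two consecutive events that extend the revealed set $F$, and show this count is $O(\log m)$; since $|F|$ increases by at least one such extension and $|F|\le m$, while the final optimal solution is recovered once $F=F^*$, this gives $O(m\log m)$ — and in fact $O((n+m_N)\log m)$ once we observe (as in Lemma~\ref{lem:enhanced} and \textsc{Extend}) that every \emph{nonlinear} arc with large flow is immediately revealed, so the linear arcs entering $F$ are controlled by component-merging events, of which there are at most $n-1$. The key quantitative input is Lemma~\ref{lem:error-bound}: whenever \textsc{Trial-and-Error}$(F)$ runs in the $\Delta$-phase, the returned $\hat f$ satisfies $err_F(\hat f)\le 2(2n+m+4)m\Delta$. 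Combined with the halving of $\Delta$ each phase, this caps how long the algorithm can go before something forces progress.

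The argument I would give has three parts. First, \textbf{between two successive calls to \textsc{Trial-and-Error}}: after a call, either the algorithm terminates (done), or it makes an unsuccessful trial and continues with $\Delta/2$ keeping the old $f$, or it makes a successful trial and restarts with $f=\hat f$ and $\Delta_{next}\le\Delta/2$. In the successful case, $\Delta_{next}\ge err_F(\hat f)$ by definition, and by Lemma~\ref{lem:error-bound} applied at the phase where the trial occurred, $err_F(\hat f)$ is only a $\mathrm{poly}(n,m)$ factor below $\Delta$; so $\Delta$ drops by only $O(\log m)$ halvings' worth in one successful trial. Second, \textbf{within a run of unsuccessful trials / ordinary phases}, I would bound how many phases can pass before $F$ must be extended. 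The mechanism is exactly the proximity idea sketched in Section~\ref{sec:outline-strong}: if $F$ were \emph{not} extended for too long, then $f$ would have to stay within $O((2n+m)\Delta)$ of the fixed $F$-tight vector $\hat f$ (by the Basic-algorithm convergence estimate, Lemma~\ref{lem:f-change}-type bound, now relative to $\hat f$), forcing $f$ itself to be $(\Delta',F)$-feasible for $\Delta'$ comparable to $err_F(\hat f)$; but once $\Delta$ has been halved enough that $\Delta<err_F(\hat f)$, the current $f$ cannot be $(\Delta,F)$-feasible unless it has moved far from $\hat f$, which by the proximity bound means some arc carries flow $>(2n+m+1)\Delta$ and hence \textsc{Extend} fires. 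Since $err_F(\hat f)\le 2(2n+m+4)m\Delta$, only $O(\log((2n+m+4)m))=O(\log m)$ halvings separate the trial from the forced extension. Third, \textbf{combining}: each extension of $F$ is preceded by $O(\log m)$ phases, $F$ changes at most $O(n+m_N)$ times (nonlinear arcs immediately, linear ones only when merging components), and at most one final $O(\log m)$-length tail, for a total of $O((n+m_N)\log m)$ phases.

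The main obstacle is the second part: making precise the proximity-to-$\hat f$ argument that converts "$\Delta$ has been halved past $err_F(\hat f)$" into "\textsc{Extend} must fire." One has to run a Basic-style imaginary continuation from the current $F$-pseudoflow, show the limit is an $F$-optimal solution $f^*$ (using $(\Delta,F)$-feasibility passing to the limit exactly as in the proof of Lemma on $F\subseteq F^*$), and compare both $f$ and $\hat f$ to $f^*$; the telescoping $\sum_{k\ge t}\Delta_k\le 2\Delta_t$ estimate from that earlier proof is the right tool, but the bookkeeping of the $(2n+m_N+\tfrac12)$-per-phase change, together with the one-time $m_N\Delta$ jump from \textsc{Adjust} and the re-routing in \textsc{Extend}, needs care — and one must handle the case where \textsc{Trial-and-Error} itself was triggered with high discrepancy $D_b(F)>\Delta$, which is routed through the Orlin-style argument that $F$ is extended within $O(\log n)$ phases rather than through Lemma~\ref{lem:error-bound}. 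I would isolate that high-discrepancy case as a short separate lemma and reserve Lemma~\ref{lem:error-bound} for the $D_b(F)\le\Delta$ branch.
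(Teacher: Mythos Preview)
Your overall architecture is right --- bound the number of extensions of $F$ by $n+m_N-1$ (linear arcs only enter when merging components, so at most $n-1$ of them; all others are nonlinear), and bound the phases between extensions by $O(\log m)$, splitting on whether $D_b(F)$ is large or small. The high-discrepancy branch is indeed the Orlin-style argument you sketch.

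The low-discrepancy branch, however, is where your argument breaks down. You write that ``$f$ would have to stay within $O((2n+m)\Delta)$ of the fixed $F$-tight vector $\hat f$'' by a Lemma~\ref{lem:f-change}-type convergence estimate, and then that once $\Delta<err_F(\hat f)$, $f$ must have ``moved far from $\hat f$,'' which forces some arc to exceed $(2n+m+1)\Delta$. This chain has two problems. First, the convergence estimate compares $f$ to its own future values, not to $\hat f$; there is no a~priori proximity between $f$ and $\hat f$. Second --- and this is the real gap --- even if you did establish that $\|f-\hat f\|_\infty$ is large, this does \emph{not} force any arc outside $F$ to carry large flow: the large discrepancy could live entirely on arcs already in $F$ (where $\hat f$ may be nonzero), so \textsc{Extend} need not fire. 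Your proposed route through an imaginary $F$-optimal limit $f^*$ does not close this gap either, for the same reason.

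The paper's argument is both simpler and avoids this pitfall. The crucial observation is that $\hat f$ depends only on $F$ (and the deterministic modification $b\to\hat b$), so if $F$ is unchanged at a later phase $\Delta_T$, the \emph{same} $\hat f$ is produced. Now Lemma~\ref{lem:error-bound} is applied at the later phase $\Delta_T$ --- the paper notes explicitly that its analysis is valid whenever $D_b(F)\le\Delta_T$, even if the subroutine is not actually called --- yielding $err_F(\hat f)\le 2(2n+m+4)m\Delta_T$. But at the original phase, an unsuccessful trial gave $err_F(\hat f)>\Delta/2$, and a successful one gave $\Delta_{next}=\max\{err_F(\hat f),Ex_b(\hat f)/(2n+m_N)\}$; in either case, once $\Delta_T\le\Delta/T$ with $T=8(2n+m+4)m$, the two bounds on the \emph{fixed} number $err_F(\hat f)$ (or on $\Delta_{next}$) contradict each other. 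No proximity argument between $f$ and $\hat f$ is needed in the main proof; that work is entirely encapsulated inside Lemma~\ref{lem:error-bound}, which you should simply invoke twice at the two phases.
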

\begin{proof}
The set of revealed arcs can be extended at most $m_N+n-1$
times, since there can be at most $(n-1)$ linear arcs because of the
linear acyclic property. We shall  show that after any $\Delta$-phase, a new
arc is revealed within $2\lceil\log_2 T\rceil$ phases, for $T=24(m+1)^2$.

As $\Delta$ decreases by at least a
factor of two between two phases, after $\lceil\log_2 T\rceil$ steps we  have
$\Delta_T\le {\Delta}/{T}$. Assume that in the
$\Delta_T$ phase,  we still have the same revealed arc set $F$ as in the $\Delta$-phase.

\medskip
\noindent{\bf Case I.
 $D_b(F)>\Delta$.} At the end of the main part of the $\Delta_T$-phase, 
$D_b(F)> 24(m+1)^2\Delta_T$. Thus there is an undirected connected component $K$ of
$F$ with $|\sum_{i\in K}b_i|>24(m+1)^2\Delta_T$.
Let $\rho_f(K)$ denote the total $f$ value on arcs entering $K$ minus the value on arcs leaving $K$, that is,
\[
\rho_f(K):=\sum_{ij\in E:i\notin K,j\in K}f_{ij}-\sum_{ij\in E:i\in K,j\notin K}f_{ij}.
\]
We have
\[
|\rho_f(K)|=\left|\sum_{i\in K}\rho_f(i)\right|=\left|\sum_{i\in K}(\rho_f(i)-b_i+b_i)\right|\ge \left|\sum_{i\in K}b_i\right|-Ex_b(f).
\]
The last part is derived from the simple inequality $|\beta+\alpha^++\alpha^-|\ge |\beta|-\gamma$, whenever
$\alpha^+,\alpha^-,\beta,\gamma\in\R$ with $-\gamma\le\alpha^-\le0\le\alpha^+\le\gamma$. In our setting, 
$\beta=\sum_{i\in K}b_i$, $\alpha^+=\sum_{i\in K}\max\{\rho_f(i)-b_i,0\}$, $\alpha^-=\sum_{i\in K}\min\{\rho_f(i)-b_i,0\}$, and
$\gamma=Ex_b(f)$. The conditions hold since 
\[\gamma=Ex_b(f)=\sum_{i\in V}\max\{\rho_f(i)-b_i,0\}=-\sum_{i\in V}\min\{\rho_f(i)-b_i,0\}.\]
For the second equality, note that $\sum_{i\in V}b_i=\sum_{i\in V}\rho_f(i)=0$.
Now we may conclude
\[
\left|\rho_f(K)\right|\ge \left|\sum_{i\in K}b_i\right|-Ex_b(f)> 24(m+1)^2\Delta_T-n\Delta_T> (2n+m+1)m\Delta_T.
\]
Consequently, there must be an arc $ij$ entering or leaving $K$ with
$f_{ij}> (2n+m+1)\Delta_T$, a contradiction as at least one such arc must have been
added to $F$ in \textsc{Extend}$(\Delta_T,f,F)$. Note that the first such arc examined
during  \textsc{Extend}$(\Delta_T,f,F)$ does keep the linear acyclic property as it connects two separate connected components of $F$.

\medskip
\noindent{\bf Case II.
 $D_b(F)\le\Delta$.} 
 We may assume that either we are at
the very beginning of the algorithm with $F=\emptyset$, or in a
phase when
$F$ just has been
extended; otherwise, we could
consider an earlier phase with this property. We can interpret the
initial solution $\bf 0$ and initial $\Delta$ as the output of
\textsc{Trial-and-Error}$(\emptyset)$.

\smallskip
{\bf Case IIa. $D_b(F)>\Delta_T$.}  The argument of Case I, applied for $\Delta_T$ instead of $\Delta$, shows that within 
$\lceil\log_2 T\rceil$ phases after the $\Delta_T$ phase, $F$ shall be extended, showing that a new  arc was revealed within $2\lceil\log_2 T\rceil$ phases after the $\Delta$-phase.

\smallskip
{\bf Case IIb. $D_b(F)\le\Delta_T$.}
Recall the  assumption that $F$ has not changed between phases $\Delta$ and $\Delta_T$, and thus $D_b(F)$ has not changed its value either.
Let us apply the
analysis of the \textsc{Trial-and-Error} subroutine for the $\Delta_T$-phase.
(Even if the subroutine is not actually performed, its analysis is
valid provided that $D_b(F)\le \Delta_T$.)

Let $\hat f$ be the arc set found by \textsc{Trial}$(F,\hat b)$. Let us assume
that $b$ is modified to
$\hat b$ always the same way for the same $F$; with this assumption, the output of the subroutine is the same
whether called in the $\Delta$ or in the $\Delta_T$-phase.
In the event of an unsuccessful trial in the $\Delta$-phase,
$\Delta/2\le err_F(\hat f)$. Using
Lemma~\ref{lem:error-bound} for the $\Delta_T$-phase,
\[
err_F(\hat f)\le 6(m+1)^2\Delta_T\le \Delta/4\le err_F(\hat f)/2,
\]
a contradiction. On the other hand, if we had a successful trial in the $\Delta$-phase, then
$\Delta_T\le 2\Delta_{next}/T$, as $\Delta_T$ is the scaling factor  $T-1$ phases after the $\Delta_{next}$-phase.
 Lemma~\ref{lem:error-bound} and  $Ex_b(\hat f)\le nD_b(F)\le n\Delta_T$ together yield
\[
\Delta_{next}=\max\{err_F(\hat f),Ex_b(\hat f)/(2n+m_N)\}\le 6(m+1)^2\Delta_T\le
\Delta_{next}/2,
\]
a contradiction again.
\end{proof}

Some preparation is needed to prove Lemma~\ref{lem:error-bound}. We note that the linear acyclic property is important due to the following lemma; if $F$ may contains undirected cycles of linear arcs, the claim is not true.
\begin{lemma}\label{lem:f1f2}
For a linear acylic arc set $F\subseteq E$, let $x$ and $y$ be two $F$-tight vectors.
Then 
$||x-y||_\infty\le ||\rho_{x}-\rho_{y}||_1$ holds.
\end{lemma}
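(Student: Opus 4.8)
The plan is to show that the difference $z := x - y$, viewed as a vector on $\ole F$ (so $z_{ij} = x_{ij} - y_{ij}$ on $F$ and $z \equiv 0$ off $F$), decomposes as a circulation plus a ``boundary'' part controlled by $\rho_x - \rho_y$, and then to use the linear acyclicity of $F$ together with the tightness relations to bound each coordinate. First I would observe that $z$ is supported on $F$, since both $x$ and $y$ vanish outside $F$ ($F$-tightness), and that $\rho_z = \rho_x - \rho_y$ by linearity of $\rho$. Writing $d = \rho_x - \rho_y$, I would decompose $z$ into a circulation $z^{\circ}$ on $\ole F$ plus a flow that routes the imbalances $d$: concretely, one can write $z = z^{\circ} + \sum_{i} d_i P_i$ where each $P_i$ is a path in $\ole F$ from a fixed root of $i$'s component to $i$ (components of $F$ are connected, so such paths exist), but the cleaner route is to argue coordinatewise via the structure of $F$.

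The key step is to handle the circulation part: I claim $z^{\circ} = 0$. Suppose not; then $z^{\circ}$, being a nonzero circulation on $\ole F$, is supported on a set of arcs that contains a cycle, so $F$ contains an undirected cycle $C$ on which $z^{\circ}$ (hence, after subtracting the path-flow contributions appropriately, one can arrange that) $z$ itself is nonzero around the cycle. On this cycle, using the tightness potentials $\pi^x, \pi^y$ for $x$ and $y$, we have $\pi^x_j - \pi^x_i = C'_{ij}(x_{ij})$ and $\pi^y_j - \pi^y_i = C'_{ij}(y_{ij})$ for every $ij \in C$, so summing around $C$ gives $\sum_{ij \in C} \bigl(C'_{ij}(x_{ij}) - C'_{ij}(y_{ij})\bigr) = 0$ (telescoping of both potential sums, with reverse arcs handled by the convention $C_{ji}(\alpha) = C_{ij}(-\alpha)$). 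But on a cycle where $z$ is a nonzero circulation, every arc has $x_{ij} \ge y_{ij}$ in the difference-graph orientation with strict inequality somewhere, so by monotonicity of each $C'_{ij}$ (convexity, Assumption~\ref{assump:lin-nonlin-igazi}) each term is $\ge 0$, and the strict term forces the sum to be positive unless every arc on the cycle is linear --- which contradicts linear acyclicity of $F$. (If all arcs were linear we'd still have a contradiction because $F$ contains no undirected cycle of linear arcs.) Hence $z^{\circ} = 0$, and $z$ is exactly the path-routing of the imbalances $d$.

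Given $z^{\circ} = 0$, the bound is immediate: on each connected component $K$ of $F$, fix a spanning tree (the component need not be a tree, but $z$ restricted to $K$ is determined by $d|_K$ once we know it is ``potential-like'', i.e. a gradient flow), and each coordinate $z_{ij}$ for $ij$ in $K$ equals the net flow across the fundamental cut, which is a signed partial sum of the $d_i$'s over one side of the cut; in particular $|z_{ij}| \le \sum_{i \in K} |d_i| \le \|d\|_1 = \|\rho_x - \rho_y\|_1$. Taking the max over all arcs gives $\|x - y\|_\infty \le \|\rho_x - \rho_y\|_1$, as desired. The main obstacle I anticipate is making the ``$z^{\circ} = 0$'' argument fully rigorous when $F$ has cycles that are not entirely linear but where $z$ happens to vanish on part of the cycle: the cleanest fix is to apply Claim~\ref{cl:cycle} to $x$ and $y$ directly --- any arc with $x_{ij} \ne y_{ij}$ lies on a cycle of the difference graph $E_{x,y}$, this cycle lies in $\ole F$ (since $z$ is supported on $F$), and then the telescoping-potentials argument above applies verbatim to that cycle to force all its arcs to be linear, contradicting linear acyclicity; so in fact $x_{ij} = y_{ij}$ on every \emph{nonlinear} arc of $F$, and on the linear arcs $z$ is a gradient flow routing $d$ over a forest, giving the stated $\ell_\infty$/$\ell_1$ bound.
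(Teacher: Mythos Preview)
Your argument has a genuine gap. The ``cleanest fix'' you propose at the end invokes Claim~\ref{cl:cycle} to conclude that every arc with $x_{ij}\ne y_{ij}$ lies on a cycle of $E_{x,y}$, but Claim~\ref{cl:cycle} has the hypothesis $\rho_x=\rho_y$, which you do not have here. Without it the conclusion simply fails, and so does your derived statement that $x_{ij}=y_{ij}$ on every nonlinear arc of $F$. For a concrete counterexample, take $F$ consisting of a single nonlinear arc $ij$: any two values $x_{ij}\ne y_{ij}$ give $F$-tight vectors (just pick the potentials accordingly), and the difference is supported on that nonlinear arc. Your final step, bounding $z$ as a tree flow over the linear arcs, therefore rests on a false premise.

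The paper's proof is close to your middle paragraph but applied to the right object. It does \emph{not} try to show $z$ vanishes on nonlinear arcs; instead it shows directly that the difference graph $E_{x,y}$ (where every arc satisfies $x_{ab}>y_{ab}$) is acyclic. Your telescoping-of-potentials argument is exactly what is needed here: if $C\subseteq E_{x,y}$ were a cycle, then $\sum_{ab\in C}C'_{ab}(x_{ab})=0=\sum_{ab\in C}C'_{ab}(y_{ab})$, and strict monotonicity on nonlinear arcs forces all arcs of $C$ to be linear, contradicting linear acyclicity. Once $E_{x,y}$ is known to be acyclic, the bound follows from a topological-ordering cut argument: take a reverse topological order $v_1,\dots,v_n$, and for any arc $v_{t'}v_t$ with $t'>t$ the set $\{v_1,\dots,v_t\}$ is a directed cut, so $z_{v_{t'}v_t}\le\sum_{p\le t}\rho_z(v_p)\le\|\rho_z\|_1$. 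This replaces your spanning-tree step and works even when the support of $z$ contains nonlinear arcs and (undirected) cycles of $F$.
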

\begin{proof}
First, we claim that the difference graph $D_{x,y}=(V,E_{x,y})$ is
acyclic.
Indeed, if there existed a cycle $C\subseteq E_{x,y}$, then
we get $0=\sum_{ab\in C}C'_{ab}(x_{ab})=\sum_{ab\in
  C}C'_{ab}(y_{ab})$ as in the proof of Lemma~\ref{lem:canon-F}. Since $x_{ab}>y_{ab}$ for every $ab\in C$, this is only possible if all
arcs of $C$ are linear (\ref{assump:lin-nonlin-igazi}), contradicting the linear acyclic property of
$F$. (Note that $E_{x,y}\subseteq \ole F$, since by definition, every $F$-tight vector is supported on $F$).

Define the function $z$ by $z_{ij}=x_{ij}-y_{ij}>0$ for $ij\in
E_{x,y}$ (again with the convention $x_{ji}=-x_{ij}$, $y_{ji}=-y_{ij}$ if
$ij\in E$). $\rho_z\equiv \rho_x-\rho_y$, therefore we have to
prove $z_{ij}\le ||\rho_z||_1$ for $ij\in E_{x,y}$. This property
indeed holds for every positive $z$ with acyclic support.

Consider a reverse topological ordering $v_1,\ldots,v_n$ of
$V$, where $v_av_b\in E_{x,y}$ implies $a>b$. For the arc $ij\in E_{x,y}$,
let $i=v_{t'}$ and $j=v_{t}$ ($t'>t$).
Let $V_t=\{v_1,\ldots,v_t\}$. $V_t$ is a directed cut in $E_{x,y}$, thus
\[
\sum_{p>t\ge q}z_{v_pv_q}=\sum_{p\le
  t}\rho_z(v_p).
\]
As $z$ is positive on all arcs, this implies $z_{v_av_b}\le
\sum_{p\le
  t}\rho_z(v_p)\le ||\rho_{z}||_1$ for all such arcs, in particular, for $ij$.
\end{proof}

\begin{claim}\label{cl:feas-bound}
If $f$ and $\hat f$ are $F$-pseudoflows with $\hat f_{ij}=0$ for
$ij\in E\setminus F$, and $f$ is
$(\Delta,F)$-feasible,
then $\hat f$ is  $(\Delta+||f-\hat f||_\infty,F)$-feasible.
\end{claim}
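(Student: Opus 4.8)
The plan is to exhibit a single potential vector $\pi$ that certifies $(\Delta+\|f-\hat f\|_\infty,F)$-feasibility of $\hat f$, reusing the potential that certifies $(\Delta,F)$-feasibility of $f$. Let $\pi$ be such that $\pi_j-\pi_i\le C'_{ij}(f_{ij}+\Delta)$ for all $ij\in E^F_f(\Delta)$, and write $\delta:=\|f-\hat f\|_\infty$. First I would observe that since $\hat f$ is supported on $F$ (that is, $\hat f_{ij}=0$ for $ij\in E\setminus F$), the relevant residual edge set simplifies: $E^F_{\hat f}(\Delta+\delta)=E\cup\overleftarrow{F}$, and moreover this set is contained in $E^F_f(\Delta)$ because $E\cup\overleftarrow F\subseteq E^F_f(\Delta)$ always holds by definition (\ref{def:F-delta}). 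So it suffices to check the inequality $\pi_j-\pi_i\le C'_{ij}(\hat f_{ij}+\Delta+\delta)$ on every arc of $E\cup\overleftarrow F$.

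The key step is a pointwise comparison of derivatives. For any arc $ij\in E\cup\overleftarrow F$ we have $ij\in E^F_f(\Delta)$, hence $\pi_j-\pi_i\le C'_{ij}(f_{ij}+\Delta)$. Since $\hat f_{ij}+\delta\ge \hat f_{ij}+|f_{ij}-\hat f_{ij}|\ge f_{ij}$, monotonicity of $C'_{ij}$ gives $C'_{ij}(f_{ij}+\Delta)\le C'_{ij}(\hat f_{ij}+\delta+\Delta)$, and chaining the two inequalities yields exactly the required bound. For a backward arc $ji$ with $ij\in E$ the convention $C'_{ji}(\alpha)=C'_{ij}(-\alpha)$ and $\hat f_{ji}=-\hat f_{ij}$ reduces it to the same computation with roles of $i,j$ swapped; for arcs in $\overleftarrow F$ the residual capacity is infinite so no capacity bookkeeping is needed, and one must only take care that on restricted arcs $\hat f_{ij}+\Delta+\delta$ stays in the domain where $C'_{ij}$ is finite — but this is automatic since $\hat f$ itself is a valid $F$-pseudoflow on such arcs (or one invokes Claim~\ref{cl:restricted-arcs} in the algorithmic context; here it follows directly because $\hat f_{ij}\ge\ell_{ij}$ wherever $C'_{ij}$ is finite is part of being an $F$-pseudoflow, and adding a nonnegative quantity keeps us in that range).

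I expect the only mild subtlety — not really an obstacle — to be making precise that $E^F_{\hat f}(\Delta+\delta)$ really does collapse to $E\cup\overleftarrow F$ and sits inside $E^F_f(\Delta)$, so that the borrowed potential $\pi$ is being evaluated only on arcs where it was already constrained; once that set containment is pinned down, the rest is the one-line monotonicity estimate above. Since a valid potential exists, $\hat f$ is $(\Delta+\delta,F)$-feasible, which is the claim.
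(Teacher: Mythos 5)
Your proof is correct and follows essentially the same route as the paper: reuse the potential $\pi$ certifying $(\Delta,F)$-feasibility of $f$, note that $\hat f_{ij}+\|f-\hat f\|_\infty\ge f_{ij}$ so monotonicity of $C'_{ij}$ gives $\pi_j-\pi_i\le C'_{ij}(f_{ij}+\Delta)\le C'_{ij}(\hat f_{ij}+\Delta+\|f-\hat f\|_\infty)$, and conclude via the containment $E^F_{\hat f}(\Delta+\|f-\hat f\|_\infty)=E\cup\overleftarrow F\subseteq E^F_f(\Delta)$. The side remarks about backward arcs and restricted arcs are harmless elaborations of the paper's sign conventions and do not change the argument.
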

\begin{proof}
There is a potential $\pi$ so that $f$
and $\pi$ satisfy (\ref{eq:delta-F-feas}), that is,
$\pi_j-\pi_i\le C'_{ij}(f_{ij}+\Delta)$ if $ij\in E^F_f(\Delta)$.
For $\alpha=||f-\hat f||_\infty$, we have $f_{ij}+\Delta\le \hat f_{ij}+\Delta+\alpha$. Consequently,
(\ref{eq:delta-F-feas}) is satisfied for $(\hat {f}_{ij},\pi)$ and
$\Delta+\alpha$
for every arc in $E^F_f(\Delta)$.

By the assumption that $\hat f$ is zero outside $F$, we have
$E^F_{\hat f}(\Delta+\alpha)=E\cup\obe F\subseteq E^F_f(\Delta)$ and thus the
claim follows.
\end{proof}

\begin{proof}[Proof of Lemma~\ref{lem:error-bound}]
When \textsc{Trial-and-Error} is applied,
$f$ is $(\Delta,F)$-feasible with some potential $\pi$ and $Ex_b(f)\le n\Delta$.
We claim that there is an $F$-tight $\bar f$ so that 
$|\bar f_{ij}-f_{ij}|\le \Delta$ for every $ij\in F$, and $Ex_b(\bar f)\le
(2n+m+2)m\Delta$.

Indeed, $(\Delta,F)$-feasibility gives
\[
C'_{ij}(f_{ij}-\Delta)\le \pi_j-\pi_i\le C'_{ij}(f_{ij}+\Delta)\ \
\forall ij\in F.
\]
If $ij$ is a free arc (that is, differentiable on the entire $\R$), then 
 $C'_{ij}$ is continuous, so there must be a value $f_{ij}-\Delta\le
\beta\le f_{ij}+\Delta$ with $C'_{ij}(\beta)=\pi_j-\pi_i$. 
This also holds if $ij$ is a restricted arc, since by Claim~\ref{cl:restricted-arcs}, 
$f_{ij}\ge 0$ and $C'_{ij}$ is continuous on $(\max\{0,f_{ij}-\Delta\},f_{ij}+\Delta)$, and $C'_{ij}(0)=-\infty$.
Let us set
$\bar f_{ij}=\beta$. 
This increases $Ex_b(f)$ by at most $|F|\Delta$. 

Let us set
$\bar f_{ij}=0$ for $ij\in E\setminus F$. Note that $f_{ij}\le (2n+m+1)\Delta$
if $ij\notin F$ (every arc with $f_{ij}>(2n+m+1)\Delta$ is either added to $F$ or is modified to $f_{ij}=0$ in the subroutine \textsc{Extend}).
Further, $Ex_b(f)\le n\Delta$, and thus we obtain an
$F$-tight $\bar f$ with 
\begin{align*}
Ex_b(\bar f)\le n\Delta+|F|\Delta+(2n+m+1)(m-|F|)\Delta\\
\le (2n+m+2)m\Delta.
\end{align*}
On the other hand,
$Ex_b(\hat f)\le nD_b(F)\le n\Delta$, since $Ex_{\hat b}(\hat f)=0$ and $\hat b$ is obtained from $b$ by modifying certain values by $\le D_b(F)$.
Consequently,
\[
||\rho_{\bar f}-\rho_{\hat f}||_1\le
||\rho_{\bar f}-b||_1+||\rho_{\hat f}-b||_1=2Ex_b(\bar f)+2Ex_b(\hat f)\le
2(2n+m+3)m\Delta\le 6m(m+1)\Delta.
\]

Applying Lemma~\ref{lem:f1f2} for $x=\bar f$ and $y=\hat f$ gives
$||\hat f-\bar f||_\infty\le 6m(m+1)\Delta$.
We also have $||f-\bar f||_\infty\le (2n+m+1)\Delta\le (3m+1)\Delta$ by the construction, and therefore
\[
||f-\hat f||_\infty\le ||f-\bar f||_\infty+||\bar f-\hat f||_\infty< 6(m+1)^2\Delta-\Delta
\]
Applying  Claim~\ref{cl:feas-bound} for $f$ and $\hat f$ we conclude that $\hat f$ is $6(m+1)^2\Delta$-feasible; recall that $f$ was
$(\Delta,F)$ feasible when we applied \textsc{Trial-and-Error}.
\end{proof}
\begin{theorem}\label{thm:running-time-bound}
Let $\rho_S(n,m)$ be the running time needed for one shortest path
computation for nonnegative lengths. Then the running time of the algorithm is bounded by 
\[
O((n+m_N)(\rho_T(n,m)+\rho_E(n,m))+(n+m_N)^2\rho_S(n,m) \log m).
\]
\end{theorem}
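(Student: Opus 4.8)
The plan is to account for the running time by multiplying the per-phase cost by the bound $O((n+m_N)\log m)$ on the number of scaling phases established in Theorem~\ref{thm:main-bound}, and then separately account for the cost of the subroutine calls and the final feasible-circulation computation. First I would observe that by Theorem~\ref{thm:main-bound} there are $O((n+m_N)\log m)$ scaling phases, and by Lemma~\ref{lem:enhanced}(iv) the main part of each phase performs at most $2n+m_N = O(n+m_N)$ shortest-path augmentations, each costing $\rho_S(n,m)$; hence the total cost of all shortest-path computations across all phases is $O((n+m_N)^2 \rho_S(n,m)\log m)$. Next I would bound the cost of \textsc{Adjust} and \textsc{Extend}: each is called at most once per phase and runs in $O(m)$ time (a single sweep over the arcs, with \textsc{Extend} possibly routing flow along an $\ole F$-path of length $O(n)$), so these contribute only $O((n+m_N)m\log m)$, which is dominated by the shortest-path term since $\rho_S(n,m) = \Omega(m)$.

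Then I would handle the \textsc{Trial-and-Error} calls. The key point is that \textsc{Trial-and-Error}$(F)$ is invoked only when $F$ has just been extended (and $D_b(F)\le\Delta$), and $F$ is extended at most $m_N+n-1 = O(n+m_N)$ times by the linear-acyclic property (as used in the proof of Theorem~\ref{thm:main-bound}). Each such call performs one \textsc{Trial} (cost $\rho_T(n,m)$ by Assumption~\ref{assump:trial}), one \textsc{Error} (cost $\rho_E(n,m)$ by Assumption~\ref{assump:error}), the construction of $\hat b$ in $O(m)$ time, and possibly a final maximum-flow / feasible-circulation computation when the algorithm terminates with an $F$-optimal solution; the latter happens only once and is dominated by the other terms. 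Also the initial call \textsc{Error}$({\bf 0},\emptyset)$ costs $\rho_E(n,m)$ once. Summing over all extensions gives $O((n+m_N)(\rho_T(n,m)+\rho_E(n,m)))$.

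Finally I would combine the two contributions: the total running time is
\[
O\bigl((n+m_N)(\rho_T(n,m)+\rho_E(n,m)) + (n+m_N)^2 \rho_S(n,m)\log m\bigr),
\]
as claimed. I expect the only mildly delicate point to be confirming that \textsc{Trial-and-Error} is genuinely executed $O(n+m_N)$ times rather than once per phase — this needs the observation (already present in the proof of Theorem~\ref{thm:main-bound}) that each execution with small discrepancy either terminates the algorithm or is preceded by an extension of $F$, and that a phase where $F$ is extended but $D_b(F)>\Delta$ falls through to \textsc{Adjust} instead. Everything else is routine bookkeeping, using $\rho_S(n,m)=\Omega(m)=\Omega(n)$ to absorb the $O(m)$ overheads of \textsc{Adjust}, \textsc{Extend}, and the $\hat b$-construction into the shortest-path term.
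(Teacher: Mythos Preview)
Your proposal is correct and follows essentially the same approach as the paper's own proof: count $O((n+m_N)\log m)$ phases via Theorem~\ref{thm:main-bound}, bound the per-phase work by $O(n+m_N)$ shortest-path computations via Lemma~\ref{lem:enhanced}(iv), and observe that \textsc{Trial-and-Error} is triggered only upon an extension of $F$, hence at most $O(n+m_N)$ times. The paper's proof is a terse two-sentence sketch; you supply the additional bookkeeping (\textsc{Adjust}, \textsc{Extend}, the initial \textsc{Error} call, the final circulation), which is all correctly absorbed into the dominant terms.
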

\begin{proof}
By Theorem~\ref{thm:main-bound}, there are at most $(n+m_N)\log m$ scaling
phases, each dominated by $O(n+m_N)$ shortest path computations. The
subroutine \textsc{Trial-and-Error} is performed only when $F$ is
extended, that is, at most $n+m_N$ times, and comprises the subroutines
\textsc{Trial} and \textsc{Error}.
\end{proof}


\subsection{Shortest path computations}\label{sec:shortest}
For the sake of efficiency, we shall maintain a potential vector $\pi$ during the entire algorithm such that
$(f,\pi)$ satisfies the condition (\ref{eq:delta-F-feas}) on $(\Delta,F)$-feasibility. 

For the initial $\Delta$ value, $\Delta\ge err_\emptyset({\bf 0})$, and the latter value is computed by \textsc{Error}$({\bf 0},\emptyset)$.
This means that $f={\bf 0}$  is $(\Delta,\emptyset)$-feasible. Similarly, after every successful trial we have a new flow $\hat f$ computed by \textsc{Error}$(f,F)$ and new scaling factor value  
$\Delta_{next}\ge err_F(\hat f)$.  In the applications, this subroutine will also return a potential vector $\pi$ such that $(f,\pi)$ satisfies (\ref{eq:delta-F-feas}).

Alternatively, such a potential vector may be obtained by the standard label correcting
algorithm (see \cite[Chapter 5.5]{amo}), since it is a dual proof of the fact that the graph $E_f^F(\Delta)$ contains no negative cycles with respect to the cost function $C'_{ij}(f_{ij}+\Delta)$; we have access to these values via the value oracle (Oracle~\ref{assump:oracle}).

In the main part of the $\Delta$-phase, we may apply a variant of Dijkstra's
algorithm (see \cite[Chapter 4.5]{amo}) to compute shortest
paths. This needs a nonnegative cost function, but instead of the
original $C'_{ij}(f_{ij}+\Delta)$ that may take negative values, we shall use 
$C'_{ij}(f_{ij}+\Delta)-\pi_j+\pi_i$, a nonnegative function by
(\ref{eq:delta-F-feas}); the set of shortest paths is identical  for the two costs.
This subroutine can be implemented by updating the potentials $\pi$, so that  $(\Delta,F)$-feasibility is maintained, and we obtain $C'_{ij}(f_{ij}+\Delta)=\pi_j-\pi_i$ on every arc of 
every shortest path. For the sake of completeness, we describe this subroutine in the Appendix.

As shown in the proof of Lemma~\ref{lem:enhanced}(ii), once we have a potential $\pi$ such that $C'_{ij}(f_{ij}+\Delta)=\pi_j-\pi_i$ on every arc of 
a shortest path $P$, then sending $\Delta$-units of flow on $P$ maintains (\ref{eq:delta-F-feas}) for $(f,\pi)$. It is also maintained in \textsc{Extend}$(\Delta,f,F)$ since flow values are modified only on arcs with $C'_{ij}$ constant. Finally,
\textsc{Adjust}$(\Delta/2,f,F)$ modifies the flow so that   (\ref{eq:delta-F-feas}) is maintained for the same $\pi$ and $\Delta/2$ by Lemma~\ref{lem:adjust-F}.

Let us now explore the relation to Oracle~\ref{assump:oracle}.
In both applications, we shall verify that the subroutine \textsc{Trial-and-error} returns a rational flow vector $f$ and a rational value $\Delta$.
Since flow will always be modified in units of $\Delta$ in all other parts of the algorithm, we may conclude that a rational $f$ will be maintained in all other parts. Under 
Oracle~\ref{assump:oracle}(a) (i.e., quadratic objectives), we shall maintain a rational potential vector $\pi$,
while under Oracle~\ref{assump:oracle}(b) (i.e., Fisher markets), we shall maintain the rationality of the $e^{\pi_i}$ values;
during the computations, we shall use the representation of these values instead of the original $\pi$. For this aim, we will use a multiplicative
variant of Dijkstra's algorithm, also described in the Appendix. We shall also verify that in the corresponding applications, 
the subroutine \textsc{Error}$(f,F)$ returns a potential vector $\pi$ so that  $(f,\pi)$ satisfies (\ref{eq:delta-F-feas}), with the $\pi_i$ or the $e^{\pi_i}$ values being rational, respectively.

Finally, it is easy to verify that whereas we are working on a transformed uncapacitated instance, we may use the complexity bound of the original instance, as summarized in the following remark.
\begin{remark}
\label{remark:shortest}
{\em
A shortest path computation can be performed in time
$\rho_S(n,m)=O(m+n\log n)$ using Fibonacci heaps, see \cite{Fredman87}. Recall that the
original problem instance was on $n'$ nodes and $m'$ arcs, and it was
transformed to an uncapacitated instance on $n=n'+m'$ nodes and $m=2m'$
arcs. However, as in Orlin's algorithm \cite{Orlin93}, we can use the
bound $O(m'+n'\log n')$ instead of $O(m'+m'\log n')$ because shortest
path computations can be essentially performed on the original network.}
\end{remark}

\section{Applications}\label{sec:applications}
\subsection{Quadratic convex costs}\label{sec:quad}
Assume that $C_{ij}(\alpha)=c_{ij}\alpha^2+d_{ij}\alpha$ for each $ij\in E$,
with $c_{ij}\ge 0$. This clearly satisfies the assumption in
Oracle~\ref{assump:oracle}(i) since
$C'_{ij}(\alpha)=2c_{ij}\alpha+d_{ij}$. Also,
(\ref{assump:lin-nonlin-igazi})
is satisfied: $ij$ is linear if $c_{ij}=0$.

The subroutine \textsc{Trial}$(F,b)$ can be implemented by solving a system
of linear equations. 
\begin{align}
\pi_j-\pi_i&=2c_{ij}x_{ij}+d_{ij} \quad\forall ij\in F\notag \\
\sum_{j:ji\in F} x_{ji}-\sum_{j:ij\in F} x_{ij}&=b_i  \quad\forall
i\in V\label{sys:eq}\\
x_{ij}&=0\quad \forall ij\in E\setminus F \notag
\end{align}
The conditions in Oracle~\ref{assump:trial} is verified by the next claim.

\begin{lemma}\label{lem:quad-solve}
Let $F$ be linear acyclic (that is, there is no undirected cycle of arcs with $c_{ij}=0$) with $D_b(F)=0$. 
Then (\ref{sys:eq}) is feasible and a solution can be
found in $\rho_T(n,m)=O(n^{2.37}+m)$ time.
\end{lemma}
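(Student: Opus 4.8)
The plan is to read (\ref{sys:eq}) as the Karush--Kuhn--Tucker system of a convex quadratic program supported on $F$, prove feasibility through that program, and then observe that the system is small enough to be solved by fast linear algebra.

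For feasibility, consider
\[
\min\ \sum_{ij\in F}\bigl(c_{ij}x_{ij}^2+d_{ij}x_{ij}\bigr)\qquad\text{over }x\in\R^F\text{ with }\rho_x(i)=b_i\ \ \forall i\in V,
\]
extended by $x_{ij}=0$ on $E\setminus F$. First I would observe that this polyhedron is nonempty exactly when $D_b(F)=0$: a vector supported on $F$ with prescribed divergence $b$ exists iff the divergences sum to zero on every undirected component of $F$. Next I would argue that the objective attains its minimum over this polyhedron, via the Frank--Wolfe theorem, for which it is enough to check that the objective is bounded below on the polyhedron. If it were not, a routine argument would supply a nonzero recession direction $z$ --- that is, a nonzero circulation supported on $F$ --- along which the quadratic part $\sum_{ij}c_{ij}z_{ij}^2$ vanishes; then $z$ is zero on every nonlinear arc and is therefore a nonzero circulation supported on the linear arcs of $F$. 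Since $F$ is linear acyclic, those arcs carry no undirected cycle and support no nonzero circulation, a contradiction. Thus a minimizer $x$ exists; taking multipliers $\pi$ for the divergence constraints, the KKT stationarity condition in the variable $x_{ij}$ reads $2c_{ij}x_{ij}+d_{ij}=\pi_j-\pi_i$, which for a linear arc ($c_{ij}=0$) degenerates precisely to $\pi_j-\pi_i=d_{ij}$, so $(x,\pi)$ solves (\ref{sys:eq}). (Alternatively, one can build a solution directly by contracting the linear-arc forest into supernodes and solving a weighted graph-Laplacian system in the supernode potentials, whose right-hand side lies in the Laplacian's range precisely because $D_b(F)=0$.)

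For the running time, note that on each nonlinear arc $ij\in F$ we have $c_{ij}>0$, so its equation in (\ref{sys:eq}) determines $x_{ij}=(\pi_j-\pi_i-d_{ij})/(2c_{ij})$; substituting these out leaves a linear system in the unknowns $\pi\in\R^V$ together with the values $x_{ij}$ on the \emph{linear} arcs of $F$. Because $F$ is linear acyclic it contains at most $n-1$ linear arcs, so the reduced system has fewer than $2n$ unknowns and the same number of equations, and it is solvable by the feasibility argument above. Hence a solution is computed in $O(n^{2.37})$ arithmetic operations --- the cost of solving an $O(n)\times O(n)$ linear system by fast matrix multiplication --- plus $O(m)$ operations to assemble the system from the derivative oracle of Assumption~\ref{assump:oracle}(a); back-substitution recovers the nonlinear-arc values, completing a solution of (\ref{sys:eq}).

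The crux I expect is the feasibility step, and within it the claim that the quadratic program attains its minimum: this is exactly where linear acyclicity of $F$ is indispensable, since otherwise a negative-cost undirected cycle of linear arcs would let the purely linear part of the objective run to $-\infty$ along a circulation. The hypothesis $D_b(F)=0$ enters only to guarantee nonemptiness of the feasible polyhedron, and once feasibility is in hand the complexity bound is routine.
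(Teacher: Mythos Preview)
Your proof is correct, but the route differs from the paper's. You establish feasibility abstractly: the system (\ref{sys:eq}) is the KKT system of a convex quadratic program on the affine space $\{x:\rho_x=b,\ \mathrm{supp}(x)\subseteq F\}$, and you invoke Frank--Wolfe after showing that unboundedness would force a nonzero circulation on the linear arcs of $F$, contradicting linear acyclicity. The paper instead proceeds constructively: it works component by component, first solving the pure linear-tree case directly, then eliminating nonlinear arcs whose endpoints lie in a common linear-arc component (their $x_{ij}$ is determined by the already-fixed potential drop), then contracting each linear-arc tree to a single node, and finally reducing the all-nonlinear residual problem to a weighted Laplacian system $L\pi=b'$, whose solvability follows from the standard rank property of $L$ together with $\sum_i b'_i=0$.

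For complexity, both arguments coincide in spirit: you substitute out the nonlinear-arc variables and count that at most $n$ potentials and at most $n-1$ linear-arc flows remain, giving an $O(n)\times O(n)$ linear system; the paper arrives at an $n\times n$ Laplacian after contraction. Your argument is shorter and more conceptual; the paper's is more explicit and makes the Laplacian structure visible, which is what your parenthetical alternative sketches. Either way, the two hypotheses enter exactly where you say: $D_b(F)=0$ for nonemptiness of the feasible set, and linear acyclicity both to rule out unbounded descent directions and to bound the number of linear-arc unknowns by $n-1$.
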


\begin{proof}
Clearly, we can solve the system separately on different undirected connected
components of $F$. In the sequel, let us focus on a single connected
component; for simplicity of notation, assume this component is the entire $V$.

Consider first the case when  all arcs are linear. Then we can solve
the equalities corresponding to edges and nodes separately. 
As $F$ is assumed to be linear acyclic, it forms a tree. If we fix one $\pi_j$ value
arbitrarily, it determines all other $\pi_i$ values  by
moving along the edges in the tree. The $x_{ij}$'s can be found by
solving a flow problem on the same tree with the demands $b_i$. This is clearly feasible
by the assumption $D_b(F)=0$, that is, $\sum_{i\in V} b_i=0$ (note that we do not have nonnegativity constraints on the arcs). Both
tasks can be performed in linear time.

Assume next both linear and nonlinear arcs are present, and let $T$ be
an undirected connected component of linear arcs. As above, all $\pi_j-\pi_i$
values for $i,j\in T$ are uniquely determined. If there is a nonlinear
arc $ij\in F$
with $i,j\in T$, then $x_{ij}=(\pi_j-\pi_i-d_{ij})/(2c_{ij})=\alpha$ is also uniquely determined. We
can remove this edge by replacing $b_i$ by $b_i+\alpha$ and $b_j$
by $b_j-\alpha$. 
Hence we may assume that the components of linear arcs span no
nonlinear arcs. 

Next, we can contract each such component $T$ to a single
node $t$ by setting $b_t=\sum_{i\in T}b_i$ and
modifying the $d_{ij}$ values on incident arcs as follows. 
Let $t$ correspond to a fixed node in $T$, and consider  an arc with $i\in T$, $j\notin T$.
Let $\alpha$ denote the sum of $d_{ab}$ values on the $t-i$ path in $T$; let us  add $\alpha$ to $d_{ij}$.
Similarly for an arc $ij$ entering $T$ we must subtract the sum of the costs on the $t-j$ path from $d_{ij}$.
A
solution to the contracted problem can be easily extended to the
original instance.

For the rest, we can assume all arcs are nonlinear, that is,
$c_{ij}>0$ for all $ij\in F$. 
Let $A$ be the node-arc incidence matrix of $F$:
$A_{i,ij}=-1$, $A_{i,ji}=1$ for all $ij\in F$, and all other entries
are 0.  Let $C$ be the $|F|\times|F|$ diagonal matrix with
$C_{ij,ij}=-2c_{ij}$. (\ref{sys:eq}) can be written in the form
\[
\left(
\begin{array}{cc}
A^T& C\\
0 & A
\end{array}
 \right) (\pi,x)^T=
\left(
\begin{array}{c}
d\\
b
\end{array}
\right).
\]
This can be transformed into 
\[
\left(
\begin{array}{cc}
A^T& C\\
L & 0
\end{array}
 \right)(\pi,x)^T =
\left(
\begin{array}{c}
d\\
b'
\end{array}
\right),
\]
where $L$ is the weighted Laplacian matrix with $L_{ii}=\sum_{j:ij\in
  \ole F}\frac{1}{2c_{ij}}$, $L_{ij}=L_{ji}=-\frac{1}{2c_{ij}}$ if $ij\in
F$ and $L_{ij}=0$ otherwise, and $b'$ is an appropriate vector with
$\sum_{i\in V}b'_i=0$.

The main task is to solve the system $L\pi=b'$. It is well-know (recall that $V$ is assumed to be a single
connected component) that $L$ has rank $|V|-1$ and the system
is always feasible whenever $\sum_{i\in V}b'_i=0$.
A solution can be found in  $O(n^{2.37})$ time \cite{Coppersmith90}. All
previously described operations (eliminating nonlinear arcs spanned in components of linear arcs,
contracting components of linear arcs) can be done in $O(m)$ time, hence 
the bound $\rho_T(n,m)=O(n^{2.37}+m)$. 
%
\end{proof}

 To implement \textsc{Error}$(f,F)$, we have an $F$-tight vector $f$, and we need 
to find the minimum
 $\Delta$-value such that there exists a $\pi$ potential with
\begin{equation}
\pi_j-\pi_i\le (2c_{ij}f_{ij}+d_{ij})+2c_{ij}\Delta \ \ \forall ij\in
E\cup \overleftarrow F.\label{eq:error}
\end{equation}
We show that this can be reduced to the minimum-cost-to-time ratio
cycle problem, defined as follows (see \cite[Chapter
5.7]{amo}). In a directed graph, there is a cost function $p_{ij}$ and a
time $\tau_{ij}\ge 0$ associated with each arc. The aim is to find a
cycle $C$ minimizing $(\sum_{ij\in C} p_{ij})/(\sum_{ij\in C}
\tau_{ij})$. A strongly polynomial algorithm was given by Megiddo
\cite{Megiddo79,Megiddo83}
that solves the problem in $\min\{O(n^3\log^2
n),$ $O(n\log n (n^2+m\log\log n))\}$ time.
The problem can be equivalently formulated as
\begin{align}\label{eq:param}
\min \mu \mbox{ s. t. there are no negative cycles}\notag\\
\mbox{ for the cost function }p_{ij}+\mu\tau_{ij}.
\end{align}
Our problem fits into this framework with
$p_{ij}=2c_{ij}f_{ij}+d_{ij}$ and $\tau_{ij}=2c_{ij}$. In 
(\ref{eq:param}), the optimal $\mu$ value is $-\Delta$.
 However, \cite{Megiddo79} defines the minimum ratio cycle problem
 with $\tau_{ij}>0$ for every $ij\in E$.
This property is not essential for Megiddo's algorithm, which uses a
parametric search method for $\mu$ to solve 
 (\ref{eq:param}) under the
only (implicit) restriction that the problem is feasible. 

In our setting $\tau_{ij}>0$ holds for nonlinear arcs, but
$\tau_{ij}=0$ for linear arcs.
Also, there can be cycles $C$ with $\sum_{ij\in
  C}\tau_{ij}=0$. (This can happen even if $F$ is linear acyclic, as
$C$ can be any  cycle in $E\cup\obe F$.) If we have such a cycle $C$
with $\sum_{ij\in  C}p_{ij}<0$, then (\ref{eq:param}) is
infeasible. In every other case, the problem is feasible and thus
Megiddo's algorithm can be applied.


For this reason, we first check whether there is a negative cycle with
respect to the $p_{ij}$'s in  the set of linear arcs in
$E\cup\overleftarrow F$. This can be done via the label correcting algorithm in $O(nm)$ time (\cite[Chapter 5.5]{amo}).
If there exists one, then (\ref{eq:error}) is infeasible, thus 
$err_F(f)=\Delta=\infty$, and (\ref{probl}) is
unbounded as we can send arbitrary flow around this cycle.
Otherwise, we have  $\sum_{ij\in C}\tau_{ij}>0$ for every cycle with
$\sum_{ij\in C}p_{ij}<0$, and consequently, there exists a finite $\Delta$
satisfying (\ref{eq:error}).

Consequently, $\rho_T(n,m)=\min\{O(n^3\log^2
n), O(n\log n (n^2 +m\log\log n))\}$.
Theorem~\ref{thm:running-time-bound} gives the following running time bound.
\begin{theorem}\label{thm:quadratic-run}
For convex quadratic objectives on an uncapacitated instance on $n$
nodes and $m$ arcs, the algorithm finds an optimal solution in
$O(m(n^3\log^2 n+ m\log m(m+n\log n)))$ time. For a capacitated instance, the running time can be
bounded by $O(m^4\log m)$.
\end{theorem}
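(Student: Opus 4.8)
The plan is simply to instantiate the general running-time estimate of Theorem~\ref{thm:running-time-bound}, namely $O((n+m_N)(\rho_T(n,m)+\rho_E(n,m))+(n+m_N)^2\rho_S(n,m)\log m)$, with the quadratic-specific costs of the three primitives. Lemma~\ref{lem:quad-solve} gives $\rho_T(n,m)=O(n^{2.37})$ for \textsc{Trial}. For \textsc{Error}, the reduction to a minimum-cost-to-time ratio cycle problem (preceded by an $O(nm)$ label-correcting pass that either detects unboundedness or certifies feasibility of the parametric problem) combined with Megiddo's algorithm gives $\rho_E(n,m)=\min\{O(n^3\log^2 n),\,O(n\log n(n^2+m\log\log n))\}$. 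A single shortest path computation costs $\rho_S(n,m)=O(m+n\log n)$ by Remark~\ref{remark:shortest}. Finally, for quadratic costs every arc is free, so there are no restricted arcs, and under the standing assumptions $m_N\le m$ and $n\le m$, whence $n+m_N=O(m)$.

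For the uncapacitated bound I would substitute these directly. Since $n^{2.37}$ and the $O(nm)$ feasibility check are both dominated by $n^3\log^2 n$, the subroutine term is $O(m\cdot n^3\log^2 n)$ (bounding $\rho_E$ by the first of Megiddo's two expressions), while the shortest-path term is $O(m^2\log m\,(m+n\log n))$. Adding these yields $O(m(n^3\log^2 n+m\log m(m+n\log n)))$, as claimed.

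For a capacitated instance on $n$ nodes and $m$ arcs (with $n\le m$), I would invoke the reduction of Section~\ref{sec:preliminaries}: splitting every arc produces an equivalent uncapacitated instance with $\Theta(m)$ nodes, $\Theta(m)$ arcs, and $m_N=O(m)$, so it suffices to evaluate the uncapacitated bound at $n=m_N=\Theta(m)$. Then the shortest-path term is $O(m^3\log^2 m)$ and the \textsc{Trial} contribution $O(m^{3.37})$, both dominated once \textsc{Error} is charged using the \emph{second} of Megiddo's bounds: with the two parameters of the same order, $\rho_E=O(n\log n(n^2+m\log\log n))=O(m^3\log m)$, so $(n+m_N)\rho_E=O(m^4\log m)$ dominates and gives the stated bound. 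The derivation is essentially bookkeeping; the only point that needs care is exactly this last choice — using the first Megiddo bound in the dense regime would cost an extra logarithmic factor ($O(m^4\log^2 m)$), so one must switch to the second. (Remark~\ref{remark:shortest} moreover lets the shortest path computations run on the original network, which can only help.)
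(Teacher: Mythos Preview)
Your proposal is correct and follows exactly the paper's approach: instantiate the general bound of Theorem~\ref{thm:running-time-bound} with $\rho_T=O(n^{2.37})$, $\rho_E=\min\{O(n^3\log^2 n),\,O(n\log n(n^2+m\log\log n))\}$, $\rho_S=O(m+n\log n)$, and $n+m_N=O(m)$. In fact you are more explicit than the paper on one point---the paper merely remarks that ``the bottleneck is clearly the $m$ minimum-cost-to-time computations'' without spelling out that for the capacitated instance one must take the \emph{second} of Megiddo's two bounds to avoid an extra $\log m$ factor; you identify this correctly.
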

The bottleneck is clearly the $m$ minimum-cost-to-time
computations. 
As in Remark~\ref{remark:shortest}, it is
likely that one can get the same running time $O(m(n^3\log^2 n+ m\log m(m+n\log n)))$ for
capacitated instances via a deeper analysis of Megiddo's algorithm.

\medskip

Let us verify that the algorithm is strongly polynomial. It uses elementary arithmetic operations only, and the running time is polynomial in $n$ and $m$, according to the above theorem. It is left to verify requirement (iii) on strongly polynomial algorithms (see the Introduction): if all numbers in the input are rational,
then every number occurring in the computations is rational and is of size polynomially bounded in the size of the input.

At the initialization and in every successful trial, we compute a new flow $f$ by solving (\ref{sys:eq}) as described in
Lemma~\ref{lem:quad-solve}, and compute the new $\Delta$ and $\pi$ values by Megiddo's algorithm. These are strongly polynomial subroutines and
return rational values of size polynomially bounded in the input. Namely, solving (\ref{sys:eq}) requires first contracting components of linear arcs and modifying costs and demands by additive terms. In the contracted instance, we need to solve a system of linear equations by exact arithmetics. This can be done by maintaining that the sizes of numbers in the output are polynomially bounded in the input size, see e.g. \cite[Chapter 3]{Schrijver}. The new $\Delta$ and $\pi$ are obtained using Megiddo's strongly polynomial parametric search algorithm. It is immediate that $\Delta$ will be of polynomial encoding size, since it equals the cost-to-time ratio of a certain cycle, with both costs and times of polynomial encoding size.

Consider now the phases between any two successful trials (or between the initialization and the first successful trial); the bound on the number of such phases is $O(\log m)$. The value of $\Delta$  decreases by a factor of 2 at the end of each phases, and the value of $f$ is modified by $\pm \Delta$ in path augmentations and by $\pm\Delta/2$ in the \textsc{Adjust} subroutine. Consequently, the flow remains an integer multiple of $\Delta$ on the arcs $ij\in E\setminus F$ up to the \textsc{Adjust} subroutine(see also  Lemma~\ref{lem:enhanced}(i)). On arcs $ij\in F$, it will be the sum of the value returned by \textsc{Trial-and-Error}, plus an integer multiple of $\Delta$. The bound $O(n+m_N)$ on the number of path augmentations, and the bound $O(\log m)$ on the number of phases guarantees that the numerators also remain polynomially bounded.
%

\subsection{Fisher's market with linear utilities}\label{sec:market}
In the {\sl linear  Fisher market model}, we are given a set $B$ of buyers and a set $G$ of goods. Buyer $i$ has a budget $m_i$, and there is one divisible unit of each good to be sold.
 For each buyer $i\in B$ and good $j\in G$, $U_{ij}\ge 0$ is the utility accrued by buyer $i$ for one unit of good $j$.
Let $n=|B|+|G|$; let $E$ be the set of pairs $(i,j)$ with $U_{ij}>0$
and let $m=|E|$. We assume that there is at least one edge in $E$ incident to every buyer and to every good.

An equilibrium solution consist of prices $p_j$ of the goods and 
allocations $x_{ij}$, so that {\em (i)} all goods are sold, {\em (ii)} all money of
the buyers is spent, and {\em (iii)} each buyer $i$ buys a best bundle of
goods, that is, goods $j$ maximizing $U_{ij}/p_j$. 

The classical convex programming formulation of this problem was given
by Eisenberg and Gale \cite{Eisenberg59}. Recently, Shmyrev
\cite{Shmyrev09} gave the following alternative formulation. The
variable $f_{ij}$ represents the money paid by buyer $i$ for product $j$.
\begin{align*}
\min \sum_{j\in G}p_j(\log p_j-1) &-\sum_{ij\in E}f_{ij}\log U_{ij} \\
\sum_{j\in G}f_{ij}&=m_i \quad\forall i\in B\\
\sum_{i\in B}f_{ij}&=p_j \quad\forall j\in G\\
f_{ij}&\ge 0 \quad\forall ij\in E
\end{align*}
Let us construct a network on node set $B\cup G\cup \{t\}$ as follows.
Add an arc $ij$ for every $ij\in E$, and an
arc $jt$ for every $j\in G$.
Set $b_i=-m_i$ for $i\in B$, $b_j=0$ for $j\in G$ and
$b_t=\sum_{i\in B}m_i$. Let all lower arc capacities be 0 and upper
arc capacities $\infty$. With $p_j$ representing the flow on arc $jt$, and $f_{ij}$ the flow on arc $ij$,
the above formulation is a minimum-cost flow problem with separable
convex objective. 
(The arc $jt$ is restricted, with extending the 
functions $p_j(\log p_j-1)$ to take value 0 in 0 and  $\infty$ on $(-\infty,0)$. All other arcs are free; indeed, they are linear.) 
In this section, the convention $p_j=f_{jt}$ shall
be used for some pseudoflow $f$ in the above problem.

Let us justify that an optimal solution gives a market
equilibrium. Let $f$ be an optimal solution that satisfies
(\ref{cond:opt}) with $\pi: B\cup G\cup \{t\}\rightarrow \mathbb{R}$. We may assume
$\pi_t=0$. $C'_{jt}(\alpha)=\log\alpha$ implies $\pi_j=-\log
p_j$. On each $ij\in E$ we have $\pi_j-\pi_i\le -\log U_{ij}$ with
equality if $f_{ij}>0$. With
$\beta_i=e^{\pi_i}$, this is equivalent to $U_{ij}/p_j\le \beta_i$, 
verifying that every buyer receives a best bundle of goods.

\medskip
Oracle~\ref{assump:oracle}(b) is a valid assumption, since the derivatives on arcs $ij$ between buyers and goods are
$-\log U_{ij}$, while on an arc $jt$ it is $\log f_{jt}$. The property (\ref{assump:lin-nonlin-igazi}) is straightforward.

Let us turn to Oracle~\ref{assump:trial}.
When the subroutine \textsc{Trial} is called, we transform $b$ to
$\hat b$ by changing the value at one node of each component $K$ of $F$. For
simplicity, let us always modify $b_t$
if $t\in K$, and on an arbitrary node for the other components. We shall verify the assumptions in Oracle~\ref{assump:trial} only
for such $\hat b$'s; the argument can easily be extended to arbitrary
$\hat b$ (although it is not necessary for the algorithm).
Let us call the component $K$ containing $t$ the {\sl large component}.

In \textsc{Trial}$(F)$, we want to find a potential $\pi:B\cup
G\cup\{t\}\rightarrow \R\cup\{\infty\}$, money allocations $f_{ij}$ for $ij\in F$, $i\in B$, $j\in
G$, and prices $p_j=f_{jt}$ for $jt\in F$ such that
\begin{align*}
\pi_j-\pi_i&=-\log U_{ij} \quad\forall ij\in F, i\in B, j\in G\\
\pi_t-\pi_j&=\log p_j\quad\forall jt\in F\\
\sum_{j\in G, ij\in F}f_{ij}&=\hat b_i \quad\forall i\in B\\
\sum_{i\in B, ij\in F} f_{ij}&=p_j\quad\forall jt\in F\\
\sum_{i\in B, ij\in F} f_{ij}&=\hat b_j\quad\forall jt\in E\setminus F
\end{align*}
We may again assume $\pi_t=0$. Let $P_j=e^{-\pi_j}$ for $j\in G$ and
$\beta_i=e^{\pi_i}$ for $i\in B$.  With this
  notation, $U_{ij}/P_j=\beta_i$ for $ij\in F$.
 If $jt\in F$, then $P_j=p_j$. 

Finding $f$ and $\pi$ can be done independently on the
different components of $F$.
For any component  different from the large one, all edges are
linear. Therefore we only need to find a feasible flow on a tree, and
independently, $P_j$ and $\beta_i$ values satisfying $U_{ij}/P_j=\beta_i$ on
arcs $ij$ in this component. Both of these can be performed in linear
 time in the number of edges in the tree. Note that  multiplying
 each $P_j$ by a constant $\alpha>0$ and dividing each $\beta_i$ by the same $\alpha$
yields another feasible solution.

Let  $T_1,\ldots,T_k$ be the  components of the
large component after deleting $t$. If $T_\ell$ contains a single good
$j$, then we  set $p_j=P_j=0$ ($\pi_j=\infty$). If $T_\ell$ is nonsingular, then $F$
restricted to $T_\ell$ forms a spanning tree. The equalities $U_{ij}/P_j=\beta_i$
uniquely define the ratio $P_j/P_{j'}$ for any $j,j'\in G\cap T_\ell$. We have
that $p_j=P_j$ and $\sum_{i\in B\cap T_\ell}m_i=\sum_{j\in G\cap T_\ell} p_j$ by the constraints
on the buyers in $B\cap T_\ell$ and goods in $G\cap T_\ell$; note that $\hat b_i=-m_i$ for all buyers in  $B\cap T_\ell$.
 Hence the prices in $T_\ell$ are uniquely
 determined. Then the edges in $F$
simply provide the allocations $f_{ij}$. All these computations can be
performed in $\rho_T(n,m)=O(m)$ time.

\medskip
For Oracle~\ref{assump:error}, we show that
\textsc{Error}$(f,F)$ can be implemented based on the
 Floyd-Warshall algorithm (see \cite[Chapter 5.6]{amo}). Let $\pi$ be the potential
 witnessing that $f$ is $(\Delta,F)$-feasible. Assuming $\pi_t=0$, and
 using again the notation $P_j=e^{-\pi_j}$ for $j\in G$ and
 $\beta_i=e^{\pi_i}$ for $i\in B$, we get 
\begin{equation}\label{eq:fisher-opt}
U_{ij}/P_j\le \beta_i\mbox{ if }i\in B, j\in G, ij\in E,\mbox{ with equality if }ji\in E^F_f.
\end{equation}
Furthermore, we have $p_j-\Delta\le P_j\le p_j+\Delta$ if $p_j>0$ and 
$P_j\le \Delta$ if $p_j=0$.

Let us now define $\gamma:G\times G\rightarrow \R$ as 
\[
\gamma_{jj'}=\max\left\{\frac{U_{ij'}}{U_{ij}}: i\in B, ji,ij'\in E^F_f\right\}.
\]
If no such $i$ exists, define
$\gamma_{jj'}=0$; let $\gamma_{jj}=1$ for every $j\in G$. 

\begin{claim}\label{cl:jo-p}
Assume we are given some $P_j$ values, $j\in G$.
There exists $\beta_i$ values ($i\in B$) satisfying (\ref{eq:fisher-opt}) if and only if 
$P_{j'}\ge P_j\gamma_{jj'}$ holds for every $j,j'\in G$.
\end{claim}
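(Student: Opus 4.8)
The plan is to prove both directions by tracking how the defining inequalities of (\ref{eq:fisher-opt}) constrain the ratios among the $P_j$'s. Recall that (\ref{eq:fisher-opt}) requires $R_i \ge U_{ij}/P_j$ for every buyer $i$ and good $j$ with $ij\in E$, with equality whenever $ji\in E^F_f$. First I would observe that for a fixed buyer $i$, the collection of arcs $ji\in E^F_f$ forces $R_i = U_{ij}/P_j$ simultaneously for all such $j$; hence if buyer $i$ has both $ji\in E^F_f$ and $ij'\in E^F_f$, then $U_{ij}/P_j = U_{ij'}/P_{j'}$, i.e.\ $P_{j'} = P_j\,U_{ij'}/U_{ij}$. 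Taking the worst case over all such $i$ gives exactly the lower bound $P_{j'}\ge P_j\beta_{jj'}$. This shows the forward direction: if a valid $R$ exists, then $P_{j'}\ge P_j\beta_{jj'}$ for all $j,j'\in G$ (the inequality is vacuous, since $\beta_{jj'}=0$, when no witnessing buyer exists, and trivial when $j=j'$).

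For the converse, suppose the $P_j$ values satisfy $P_{j'}\ge P_j\beta_{jj'}$ for all $j,j'$. I would define, for each buyer $i$,
\[
R_i := \max\left\{\frac{U_{ij}}{P_j} : j\in G,\ ij\in E\right\}.
\]
By construction $R_i\ge U_{ij}/P_j$ for every $ij\in E$, so the inequality part of (\ref{eq:fisher-opt}) holds automatically. It remains to check the equality part: for every arc $ji\in E^F_f$ (which in particular means $ij\in E$), we must have $R_i = U_{ij}/P_j$; equivalently, $U_{ij}/P_j \ge U_{ij'}/P_{j'}$ for every $j'$ with $ij'\in E$. Fix such $i$ and $j$ with $ji\in E^F_f$, and let $j'$ be any good with $ij'\in E$. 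Then $i$ witnesses $\beta_{jj'}\ge U_{ij'}/U_{ij}$ (since $ji, ij'\in E^F_f$ — note $ij'\in E\subseteq E^F_f$). The hypothesis $P_{j'}\ge P_j\beta_{jj'}\ge P_j\, U_{ij'}/U_{ij}$ rearranges to $U_{ij}/P_j \ge U_{ij'}/P_{j'}$, as required. Hence the maximum defining $R_i$ is attained at $j$, giving $R_i = U_{ij}/P_j$, and (\ref{eq:fisher-opt}) holds.

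The only subtlety I anticipate is bookkeeping around degenerate cases: goods $j$ with $P_j=0$ (these arise exactly when the component of $j$ in $F$ is a singleton good), and making sure the definition of $\beta_{jj'}$ via $E^F_f$ — rather than $E$ — is used consistently, since the equality constraint in (\ref{eq:fisher-opt}) is only imposed on residual arcs $ji\in E^F_f$. For $P_j=0$ one checks the claim holds vacuously on the relevant side: if $P_j=0$ then no $R_i$ can satisfy $U_{ij}/P_j\le R_i$ for an arc $ij\in E$ with $U_{ij}>0$, so such a $j$ must have no incident buyer arc in $E$ forcing a finite bound, which is consistent with the singleton-component structure already established in the \textsc{Trial} analysis; alternatively one interprets $U_{ij}/0 = \infty$ and the statement degenerates appropriately. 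I would handle this by noting at the outset that we may restrict attention to goods $j$ with $P_j>0$, since goods with $P_j=0$ have no arcs $ij\in E$ at all in the relevant subinstance. Modulo this remark, the argument above is a direct two-sided matching of inequalities and requires no further machinery.
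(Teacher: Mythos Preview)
Your argument is essentially the same as the paper's: for necessity you unwind the definition of $\beta_{jj'}$, and for sufficiency you set $R_i=\max_{j:\,ij\in E} U_{ij}/P_j$ and check both the inequality and equality parts of (\ref{eq:fisher-opt}). The paper does exactly this, only much more tersely.

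One small slip worth fixing: in the forward direction you write that $ji\in E^F_f$ and $ij'\in E^F_f$ together imply $U_{ij}/P_j = U_{ij'}/P_{j'}$. That equality is not justified---the forward arc $ij'\in E^F_f$ only yields the inequality $U_{ij'}/P_{j'}\le R_i$, not equality (equality in (\ref{eq:fisher-opt}) is triggered by the \emph{backward} arc $j'i\in E^F_f$, which you are not assuming). What you actually get is $U_{ij'}/P_{j'}\le R_i = U_{ij}/P_j$, hence $P_{j'}\ge P_j\,U_{ij'}/U_{ij}$, and taking the maximum over witnessing $i$ gives $P_{j'}\ge P_j\beta_{jj'}$ as desired. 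So the conclusion is right; just replace the two ``$=$'' signs by ``$\ge$'' there. Your discussion of the degenerate $P_j=0$ case goes beyond what the paper records, which simply asserts the claim without addressing it.
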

\begin{proof}
The condition is clearly necessary by the definition of $\gamma_{jj'}$. 
Conversely, if this condition holds, setting $\beta_i=\max_{j\in G} U_{ij}/P_j$ does satisfy (\ref{eq:fisher-opt}).
\end{proof}

If there is a directed cycle $C$ with $\Pi_{ab\in C}\gamma_{ab}>1$,
then $f$ cannot be $(\Delta,F)$-feasible for any $\Delta$.
Otherwise, we may compute $\tilde\gamma_{jj'}$ as the maximum of
$\Pi_{ab\in P}\gamma_{ab}$ over all directed paths $P$ in $E_f^F$ from $j$ to $j'$
(setting the value 0 again if no such path exists). This can be done
by the multiplicative version of the Floyd-Warshall algorithm in
$O(n^3)$ time (note that this is equivalent to finding all-pair
shortest paths for $-\log\gamma_{ab}$).

For $(\Delta,F)$-feasibility, we clearly need to satisfy
\[
(p_j-\Delta)\tilde\gamma_{jj'}\le P_j \tilde\gamma_{jj'}\le P_{j'}\le p_{j'}+\Delta.
\]
Let us define $\Delta$ as the smallest value satisfying all
these inequalities, that is,
\begin{equation}
\Delta=\max\left\{0,\max_{j,j'\in G}\frac{p_{j}\tilde\gamma_{jj'}-p_{j'}}{\tilde\gamma_{jj'}+1}\right\}.\label{eq:delta-def}
\end{equation}
We claim that $f$ is $(\Delta,F)$-feasible with the above choice.
For each $j\in G$, let $P_{j}=\max_{h\in
  G}\tilde\gamma_{hj}(p_h-\Delta)$. It is easy to verify that these $P$
values satisfy $P_{j'}\ge P_j\gamma_{jj'}$, and $p_j-\Delta\le P_j\le
p_j+\Delta$. The condition (\ref{eq:fisher-opt}) follows by Claim~\ref{cl:jo-p}.

The complexity of \textsc{Error}$(f,F)$ is dominated by the Floyd-Warshall
algorithm, $O(n^3)$ \cite{floyd62}. The problem is defined on an
uncapacitated network, with the number of nonlinear arcs $m_N=|G|<n$.
Thus Theorem~\ref{thm:running-time-bound} gives the following.
\begin{theorem}\label{thm:fisher-run}
For Fisher's market with linear utilities, the algorithm finds an optimal solution in
$O(n^4+n^2(m+n\log n)\log n)$. 
\end{theorem}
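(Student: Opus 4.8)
The plan is to derive Theorem~\ref{thm:fisher-run} by instantiating the general bound of Theorem~\ref{thm:running-time-bound}. Correctness is not the issue here: the discussion preceding the statement already verifies all four subroutine assumptions for the linear Fisher market --- Assumption~\ref{assump:oracle}(b) since the arc derivatives are $-\log U_{ij}$ and $\log f_{jt}$; Assumption~\ref{assump:lin-nonlin-igazi} trivially; Assumption~\ref{assump:trial} via the price-propagation/spanning-tree construction; and Assumption~\ref{assump:error} via the multiplicative Floyd--Warshall computation of the $\tilde\beta_{jj'}$ together with formula~(\ref{eq:delta-def}) --- and the earlier remark that an optimal flow yields a market equilibrium shows the output is what we want. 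So the entire content of the proof is a substitution of parameters into the running-time bound.

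First I would record the parameters of the instance. The network built above is already uncapacitated, has $|B|+|G|+1=O(n)$ nodes and $m+|G|=O(m)$ arcs, and its only nonlinear arcs are the $|G|$ arcs $jt$; hence $m_N=|G|<n$ and $n+m_N=O(n)$. By the cited implementations, $\rho_T(n,m)=O(m)$ and $\rho_E(n,m)=O(n^3)$, while a single shortest-path computation costs $\rho_S(n,m)=O(m+n\log n)$ with Fibonacci heaps (Remark~\ref{remark:shortest}; the transformation caveat there does not arise, the instance being already uncapacitated). Since $E$ is a set of distinct pairs, the graph has no parallel arcs, so $m\le |B|\cdot|G|<n^2$ and therefore $\log m=O(\log n)$ --- this is what lets $\log n$ rather than $\log m$ appear in the final bound.

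Next I would plug these into $O\big((n+m_N)(\rho_T(n,m)+\rho_E(n,m))+(n+m_N)^2\rho_S(n,m)\log m\big)$. The first summand is $O\big(n(m+n^3)\big)=O(n^4)$, using $m<n^2$, and the second is $O\big(n^2(m+n\log n)\log n\big)$; adding them gives the claimed $O(n^4+n^2(m+n\log n)\log n)$.

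The only point that needs genuine care --- rather than being a real obstacle --- is strong polynomiality: on rational input every intermediate number must stay rational of polynomially bounded size. For this I would appeal to Section~\ref{sec:shortest}: one carries the values $e^{\pi_i}$ (the $P_j$ and $R_i$) instead of $\pi$ and updates them by the multiplicative Dijkstra of the Appendix; \textsc{Trial} outputs rational $P_j$, $R_i$, $f_{ij}$ from the tree computation, and \textsc{Error} returns a rational $\Delta$ and rational $e^{\pi_i}$ from~(\ref{eq:delta-def}); and between two successive successful trials there are only $O(\log n)$ scaling phases, in each of which $f$ changes by $\pm\Delta$ and the $e^{\pi_i}$ are rescaled by rational factors, so all sizes remain polynomially bounded.
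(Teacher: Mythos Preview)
Your proof is correct and follows exactly the paper's approach: instantiate the general running-time bound of Theorem~\ref{thm:running-time-bound} with $n+m_N=O(n)$, $\rho_T=O(m)$, $\rho_E=O(n^3)$, $\rho_S=O(m+n\log n)$, and $\log m=O(\log n)$. You are in fact more explicit than the paper, which simply states the theorem after recording these parameters and then handles strong polynomiality separately, just as you do.
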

The algorithm of Orlin \cite{Orlin10} runs in $O(n^4\log n)$ time,
assuming  $m=O(n^2)$. Under this assumption, we get the same running
time bound.

\medskip
To prove that the algorithm is strongly polynomial, let us verify the nontrivial requirement (iii) (see the Introduction).
As discussed in Section~\ref{sec:shortest}, if the input is rational, we shall maintain that 
$f$, $\Delta$ and the $e^{\pi_i}$ values are rational; the latter are used in the computations instead of the $\pi_i$'s.
At the initialization and in every successful trial, the subroutines described above are strongly polynomial and therefore return rational $f$, $\Delta$ and  $e^{\pi_i}$ values,
of size polynomially bounded in the input (note that the  $e^{\pi_i}$ values above are denoted by $P_i$ for $i\in G$ and $\beta_i$ for $i\in B$, and $e^{\pi_t}=1$). Between two successful trials, we can use the same argument as in Section~\ref{sec:quad} for quadratic costs: there are $O(\log m)$ such iterations, $\Delta$ is  divided by two at the end of every phase,  the path augmentations change $f$ by $\pm \Delta$ and \textsc{Adjust} by $\pm\Delta/2$.
The multiplicative Dijkstra algorithm described in the Appendix also maintains rational $e^{\pi_i}$ values of polynomial encoding length. 

\subsection{Fisher's market with spending constraint utilities}
The {\sl spending constraint utility} extension of linear Fisher
markets was defined by Vazirani \cite{Vazirani10spending}. In this
model, the utility of a buyer decreases as the function of the money
spent on the good. Formally, for each pair $i$ and $j$ there is a
sequence
$U_{ij}^1>U_{ij}^2>\ldots>U_{ij}^{\ell_{ij}}> 0$ of utilities with numbers
$L_{ij}^1,\ldots,L_{ij}^{\ell_j}>0$. Buyer $i$ accrues utility
$U_{ij}^1$ for every unit of $j$ he purchased by spending the first 
$L_{ij}^1$ dollars on good $j$, $U_{ij}^2$ for spending the next $L_{ij}^2$ dollars, etc.
These $\ell_{ij}$ intervals corresponding to the pair $ij$ are called
{\sl segments}. $\ell_{ij}=0$ is allowed, but we assume $\sum_{j\in G} \ell_{ij}>0$ for all $i\in B$ and
$\sum_{i\in B} \ell_{ij}>0$ for all $j\in G$. 
Let $n=|B|+|G|$ denote the total number of buyers and goods, and $m$ denote the total number of segments. Note that $m>{n}^2$ is also possible.

No extension of the Eisenberg-Gale convex program is known to capture
this problem. 
The existence of a convex programming formulation is left as an open
question in \cite{Vazirani10spending}. This was settled by Devanur et al. \cite{Birnbaum11},  giving a convex program based on  Shmyrev's formulation.
Let $f^k_{ij}$ represent the money paid by buyer $i$ for the $k$'th
segment of product $j$, $1\le k\le \ell_{ij}$.
\begin{align*}
\min \sum_{i\in G}p_j(\log p_j-1) &-\sum_{i\in B,j\in G,1\le k\le \ell_{ij}}f^k_{ij}\log U^k_{ij} \\
\sum_{j\in G,1\le k\le \ell_{ij}}f_{ij}^k&=m_i \quad\forall i\in B\\
\sum_{i\in B, 1\le k\le \ell_{ij}}f_{ij}^k&=p_j \quad\forall j\in G\\
0\le f_{ij}^k&\le L_{ij}^k \quad\forall ij\in E.
\end{align*}

This gives a convex cost flow problem again on the node set $B\cup
G\cup \{t\}$, by adding $\ell_{ij}$ parallel arcs from $i\in B$ to
$j\in G$, and arcs $jt$ for each $j\in G$. The upper capacity on the
$k$'th segment for the pair $ij$ is $L_{ij}^k$.
To apply our method, we first need to transform it to an equivalent
problem without 
upper capacities. This is done by replacing the arc
representing the $k$'th segment of $ij$ by a new node $(ij,k)$ and
two arcs $i(ij,k)$ and $j(ij,k)$. The node demand on the new node
is set to $L_{ij}^k$, while on the good $j$, we replace the demand 0
by $-\sum_{i,k}L_{ij}^k$, the negative of the sum of capacities of all
incident segments. The cost function on $i(ij,k)$ is $-\log U_{ij}^k\alpha$, while
the cost of $j(ij,k)$ is $0$.
Let $S$ denote the set of the new $(ij,k)$ nodes. This modified graph
has $n'=n+m+1$ nodes and $m'=2m+|G|$ arcs.

 Assumption (\ref{assump:lin-nonlin-igazi}) is clearly valid. Oracle~\ref{assump:oracle}(b) is
satisfied the same way as for linear Fisher markets, using an oracle
for the $e^{C'_{ij}(\alpha)}$ values.

In \textsc{Trial}$(F)$, we want to find an $F$-tight flow $f'$ on the
extended network, witnessed by the  potential $\pi:B\cup
 S \cup G\cup\{t\}\rightarrow \R$.
We may assume $\pi_t=0$. Let $P_j=e^{-\pi_j}$ for $j\in G$ and
$\beta_i=e^{\pi_i}$ for $i\in B$ and $S_{ij}^k=e^{-\pi_{(ij,k)}}$.
For the $k$'th segment of $ij$, $U_{ij}^k/S_{ij}^k= \beta_i$ if $i(ij,k)\in F$ and
$S_{ij}^k= P_j$ if $j(ij,k)\in F$.
 
As for linear Fisher markets, if a component of $F$ does not contain
$t$, we can simply compute all potentials and flows as $F$ is a
spanning tree of linear edges in this component. 

For the component $K$ with $t\in K$, let $T_{\ell}$ be a component of
$K-t$. $F$ is a spanning tree of linear edges in $T_{\ell}$ as well,
therefore the ratio $P_j/P_{j'}$ is uniquely defined for any $j,j'\in
G\cap T_{\ell}$.
On the other hand, we must have $P_j=p_j$, and we know that
$\sum_{j\in G\cap T_{\ell}}p_j=-\sum_{v\in T_{\ell}}b_v$ by flow
conservation. These determine the $P_j=p_j$ values, and thus all other
$\beta_i$ and $S_{ij}^k$ values in the component as well. The support of
the flow $f_{ij}$ is a tree and hence it can
 also easily computed. The running time of \textsc{Trial} is again
linear, $\rho_T(n',m')=O(m')=O(m)$.

\medskip

\textsc{Error}$(f,F)$ can be implemented the same way as for the linear
Fisher market. We shall define the values $\gamma:G\times G\rightarrow
\R$ so that $P_{j'}\ge P_j\gamma_{jj'}$ must hold, and conversely,
given $P_j$ prices satisfying these conditions, we can  define the
$\beta_i$ and $S_{ij}^k$ values feasibly.
Let
\begin{align*}
\gamma_{jj'}=\max\Bigl\{&\frac{U_{ij'}^{k'}}{U_{ij}^{k}}: i\in B,\\
  &j(ij,k),(ij,k)i,i(ij',k'),(ij',k')j'\in E^F_f\Bigr\}.
\end{align*}
 Given these $\gamma_{jj'}$
values, the $\tilde \gamma_{jj'}$ values can be computed by the Floyd-Warshall
algorithm and the optimal $\Delta$ obtained by (\ref{eq:delta-def}) as
for the linear case.

Finding the $\gamma_{jj'}$ values can be done in $O(m')$ time, and the
Floyd-Warshall algorithm runs in $O(|G|^3)$. This gives
$\rho_E(n',m')=O(m'+|G|^3)=O(m+n^3)$. 
From
Theorem~\ref{thm:running-time-bound}, together with
Remark~\ref{remark:shortest}, we obtain:
\begin{theorem}\label{thm:spending-run}
For an instance of Fisher's market with spending constraint utilities
with $n=|B|+|G|$ and $m$ segments, the running time can be bounded by
$O(mn^3 +m^2(m+n\log n)\log m)$.
\end{theorem}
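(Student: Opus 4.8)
The plan is to apply Theorem~\ref{thm:running-time-bound} to the uncapacitated network constructed above, after pinning down the parameters of that instance. First I would record the sizes: the transformed graph has $n'=n+m+1$ nodes and $m'=2m+|G|$ arcs, and among these the only nonlinear arcs are the $|G|$ arcs $jt$, so $m_N=|G|$. Since each buyer and each good is required to be incident to at least one segment, $m\ge\max\{|B|,|G|\}\ge n/2$, hence $n=O(m)$; this gives $n'=n+m+1=O(m)$, $m_N=|G|\le n=O(m)$, and $m'=O(m)$, so in particular $n'+m_N=O(m)$ and $\log m'=O(\log m)$. It is this bound $m_N=|G|=O(n)$, rather than the trivial $m_N=O(m')$, that keeps the number of scaling phases at $O(m\log m)$ and the number of \textsc{Trial-and-Error} calls at $O(m)$.

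Next I would collect the three subroutine costs already derived. The \textsc{Trial} subroutine runs in $\rho_T(n',m')=O(m')=O(m)$ time. The \textsc{Error} subroutine, whose bottleneck is the Floyd--Warshall computation on $G$, runs in $\rho_E(n',m')=O(m'+|G|^3)=O(m+n^3)$ time. For the per-path cost I would, instead of the naive $O(m'+n'\log n')=O(m\log m)$ bound, invoke Remark~\ref{remark:shortest}: the shortest path computations can be carried out on the network before the uncapacitating transformation, i.e.\ on the graph with node set $B\cup G\cup\{t\}$ ($n+1$ nodes), the $m$ segment arcs, and the $|G|$ arcs $jt$; hence $\rho_S=O\big((m+|G|)+(n+1)\log(n+1)\big)=O(m+n\log n)$.

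Finally I would substitute these into the bound of Theorem~\ref{thm:running-time-bound}, using $n'$, $m'$, $m_N$ in place of $n$, $m$, $m_N$:
\[
O\big((n'+m_N)(\rho_T+\rho_E)+(n'+m_N)^2\,\rho_S\,\log m'\big)=O\big(m(m+n^3)+m^2(m+n\log n)\log m\big).
\]
The summand $m\cdot m=m^2$ is dominated by $m^2(m+n\log n)\log m$, so the whole expression simplifies to $O\big(mn^3+m^2(m+n\log n)\log m\big)$, which is exactly the claimed bound. Note that the $mn^3$ term cannot in general be absorbed into the rest (for instance when $m=\Theta(n)$), which is why it survives in the statement.

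I do not expect any genuinely hard step here; the argument is pure bookkeeping over the subroutine complexities. The only two points needing care are the ones flagged above: using Remark~\ref{remark:shortest} so that a single shortest path computation costs $O(m+n\log n)$ rather than roughly $O(m\log m)$ on the enlarged node set, and keeping track that $m_N=|G|=O(n)$ so that neither the phase count nor the \textsc{Trial-and-Error} count degrades. Everything else is a direct appeal to Theorem~\ref{thm:running-time-bound} together with the bounds on $\rho_T$ and $\rho_E$ established for the spending-constraint model in the preceding paragraphs.
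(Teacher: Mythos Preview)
Your proposal is correct and follows essentially the same approach as the paper: the paper simply states that from Theorem~\ref{thm:running-time-bound} together with Remark~\ref{remark:shortest} and the bounds $\rho_T(n',m')=O(m)$, $\rho_E(n',m')=O(m+n^3)$ one obtains the claimed running time, and you have filled in the bookkeeping in detail. Your observations that $m_N=|G|=O(n)$ and $n=O(m)$ (via the assumption that every buyer and every good has an incident segment) are exactly the facts needed to make the substitution work.
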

It can be verified that the algorithm is strongly polynomial the same way as for the linear case.

\section{Discussion}\label{sec:discussion}
We have given strongly polynomial algorithms for a class of
minimum-cost flow problems with separable convex objectives. This
gives the first strongly polynomial algorithms for quadratic convex
cost functions and for Fisher's market with spending constraint
utilities. For Fisher's market with linear utilities, we get the same
complexity as in \cite{Orlin10}. 

The bottleneck in complexity of all applications is the subroutine
\textsc{Trial}. However, the exact value of $err_F(f)$ is not needed:
a constant approximation would also yield the same complexity bounds.
Unfortunately, no such algorithm is known for the minimum cost-to-time
ratio cycle problem that would have significantly better, strongly
polynomial running time. Finding such an algorithm would immediately
improve the running time for quadratic costs.

A natural future direction could be to develop strongly polynomial
algorithms for quadratic objectives and const\-raint matrices with
bounded subdeterminants. This would be a counterpart of Tardos's result
\cite{Tardos86} for linear programs. Such an extension could be
possible by extending our techniques to the setting of Hochbaum and
Shantikumar \cite{Hochbaum90}.

The recent paper \cite{Vegh11} shows that linear Fisher market, along with several extension,
can be captured by a concave extension of the generalized flow model.
A natural question is if
there is any direct connection between the
concave generalized flow model and the convex minimum cost flow model
studied in this paper.
Despite certain similarities, no reduction is known in any
direction. Indeed, no such reduction is
 known even between the linear special cases, that is, generalized flows and minimum-cost flows.
The perfect price discrimination model 
\cite{Goel10}, and the Arrow-Debreu Nash-bargaining problem
\cite{Vazirani11}, are  instances of  the concave generalized flow
model, but they are not known to be reducible to convex cost flows. On
the other hand, the spending constraint utility model investigated in
this paper is not known to be reducible to concave generalized flows.

The algorithm in \cite{Vegh11} is not strongly
polynomial. 
Even for linear generalized flows, the first strongly polynomial algorithm was only given very recently \cite{vegh-strongly}.
One could try to extend
this to a class of concave generalized flows in a similar manner as in the current paper, i.e. assuming certain oracles.
This could lead to strongly polynomial algorithms for the market problems
that fit into this model.

A related problem is finding a strongly polynomial algorithm for
 minimizing a separable convex objective over a
submodular polyhedron. Fujishige \cite{Fujishige84} showed that for
separable convex quadratic costs, this is essentially equivalent to
 submodular function minimization.
Submodular utility allocation
markets by Jain and Vazirani \cite{Jain10} also fall into this class,
and are solvable in strongly polynomial time; see also
Nagano \cite{Nagano07}. Other strongly polynomially solvable special cases
are given by Hochbaum and Hong \cite{Hochbaum95}.

A common generalization of this problem and ours is
minimizing a separable convex objective over a submodular flow
polyhedron.
Weakly polynomial algorithms were given by Iwata \cite{iwata97} and by
Iwata, McCormick and Shigeno \cite{iwata03}. One might try to
develop strongly polynomial algorithms for 
some class of separable convex objectives; in
 particular, for separable
 convex quadratic functions.

\subsection*{Acknowledgment}
The author is grateful to an anonymous referee for several suggestions that helped to improve the presentation.
\bibliographystyle{abbrv}
\bibliography{quadratic}

\section*{Appendix}
In this Appendix we describe two variants of Dijkstra's algorithm that are used for the shortest path computations in our algorithm. This is an equivalent description of the well-known algorithm, see e.g. \cite[Chapter 4.5]{amo}.
The first, standard version is shown in Algorithm~\ref{fig:dijkstra}. We start from a cost function $c$ on a digraph $D=(V,A)$ and a potential vector
$\pi$ with $c_{ij}-\pi_j+\pi_i\ge 0$ for every arc, and two designated subsets $S$ and $T$. The set $R$ is initialized as $R=S$, and denotes in every iteration the set of nodes that can be reached from $S$ on a tight path, that is, all arcs of the path satisfying $c_{ij}-\pi_j+\pi_i=0$. Every iteration increases the potential on $V\setminus R$ until some new tight arcs enter. We terminate once $R$ contains a node in $T$; a shortest path between $S$ and $T$ can be recovered using the pointers $pred(i)$.

In our algorithm, this subroutine will be applied if Oracle~\ref{assump:oracle}(a) holds. In the $\Delta$-phase, we apply it for the digraph $E_f^F(\Delta)$ and the cost function $c_{ij}=C'_{ij}(f_{ij}+\Delta)$, and the potential $\pi$ as in the algorithm. Note that if the initial $\pi$ is rational, and all $c_{ij}$ values are rational, the algorithm terminates with a $\pi$ that is also rational. Oracle~\ref{assump:oracle}(a) guarantees that if $f_{ij}$ and $\Delta$ are rational numbers, then so is $c_{ij}$.

\medskip

Algorithm~\ref{fig:dijkstra-2} shows a multiplicative variant of the previous algorithm; they are identical  after substituting
$c_{ij}=\log \gamma_{ij}$ and $\pi_i=\log \mu_i$. This variant shall be applied under  Oracle~\ref{assump:oracle}(b).
We shall assume that every $e^{\pi_i}$ value is rational, and set $\mu_i=e^{\pi_i}$, and $\gamma_{ij}=e^{C'_{ij}(f_{ij}+\Delta)}$.
The assumption guarantees that if $f_{ij}$ and $\Delta$ are rational numbers, then so is $\gamma_{ij}$. Consequently, the rationality of the  $e^{\pi_i}$ values is maintained during the computations.

\begin{algorithm}[p]
\begin{tabbing}
xxxxx \= xxx \= xxx \= xxx \= xxx \= \kill
\> \textbf{Subroutine} \textsc{Shortest Paths}\\
\> \textbf{INPUT} A digraph $D=(V,A)$, disjoint subsets $S,T\subseteq V$, a cost function $c:A\rightarrow \R$\\
\> \> and a potential vector $\pi:V\rightarrow R$ with $c_{ij}-\pi_j+\pi_i\ge 0$ for every $ij\in A$.\\
\> \textbf{OUTPUT} A shortest path $P$ between a node in $S$ and a node in $T$ and a $\pi':V\rightarrow R$\\
\> \> with $c_{ij}-\pi'_j+\pi'_i\ge 0$ for every $ij\in A$, and equality on every arc of $P$.\\
\> $R\leftarrow S$;\\
\> \textbf{for} $i\in S$ \textbf{do} $pred(i)\leftarrow NULL$;\\
\> \textbf{while} $R\cap T=\emptyset$ \textbf{do} \\
\> \> $\alpha\leftarrow \min \{c_{ij}-\pi_j+\pi_i: ij\in A, i\in R, j\in V\setminus R\}$;\\
\> \> \textbf{for} $j\in V\setminus R$ \textbf{do} $\pi_j \leftarrow \pi_j+\alpha$;\\
\> \> $Z\leftarrow\{ j\in V\setminus R :\exists ij\in A, i\in R \mbox{ such that } c_{ij}-\pi_j+\pi_i=0\}$;\\
\> \> \textbf{for} $j\in Z$ \textbf{do}\\
\> \> \> $pred(j)\leftarrow  i\in R \mbox{ such that }\exists ij\in A: c_{ij}-\pi_j+\pi_i=0$;\\
\> \> $R\leftarrow R\cup Z$;\\
\> $\pi'\leftarrow \pi$;\\
\end{tabbing}
\caption{}\label{fig:dijkstra}
\end{algorithm}

\begin{algorithm}[p]
\begin{tabbing}
xxxxx \= xxx \= xxx \= xxx \= xxx \= \kill
\> \textbf{Subroutine} \textsc{Multiplicative Shortest Paths}\\
\> \textbf{INPUT} A digraph $D=(V,A)$, disjoint subsets $S,T\subseteq V$, a cost function $\gamma:A\rightarrow \R$\\
\> \> and a potential vector $\mu:V\rightarrow R$ with $\gamma_{ij}\frac{\mu_i}{\mu_j}\ge 1$ for every $ij\in A$.\\
\> \textbf{OUTPUT} A shortest path $P$ between a node in $S$ and a node in $T$ and a $\mu':V\rightarrow R$\\
\> \> with  $\gamma_{ij}\frac{\mu'_i}{\mu'_j}\ge 1$  for every $ij\in A$, and equality on every arc of $P$.\\
\> $R\leftarrow S$;\\
\> \textbf{for} $i\in S$ \textbf{do} $pred(i)\leftarrow NULL$;\\
\> \textbf{while} $R\cap T=\emptyset$ \textbf{do} \\
\> \> $\alpha\leftarrow \min \{\gamma_{ij}\frac{\mu_i}{\mu_j}: ij\in A, i\in R, j\in V\setminus R\}$;\\
\> \> \textbf{for} $j\in V\setminus R$ \textbf{do} $\mu_j \leftarrow \alpha\mu_j$;\\
\> \> $Z\leftarrow\{ j\in V\setminus R :\exists ij\in A, i\in R \mbox{ such that } \gamma_{ij}\frac{\mu_i}{\mu_j}=1\}$;\\
\> \> \textbf{for} $j\in Z$ \textbf{do}\\
\> \> \> $pred(j)\leftarrow  i\in R \mbox{ such that }\exists ij\in A: \gamma_{ij}\frac{\mu_i}{\mu_j}=1$;\\
\> \> $R\leftarrow R\cup Z$;\\
\> $\mu'\leftarrow \mu$;\\
\end{tabbing}
\caption{}\label{fig:dijkstra-2}
\end{algorithm}

\newpage

\subsection*{Table of notation and concepts}

\vskip 0.5cm

\begin{tabular}{l|l|l}
Notation/concept & Section & Description\\
\hline
$m_L$, $m_N$ & Sec~\ref{sec:preliminaries}, after (\ref{assump:lin-nonlin-igazi}) & number of linear/nonlinear arcs\\
$\rho_f(i)$, $Ex(f)$ & Sec~\ref{sec:preliminaries}, (\ref{def:rho}) & net flow amount in node $i$/total excess \\
$E_f$, $E_f(\Delta)$ & Sec~\ref{sec:opt}, above (\ref{cond:opt})/(\ref{eq:delta-feas}) & residual graph/ $\Delta$-residual graph\\
$E_f^F$, $E_f^F(\Delta)$ & Sec~\ref{sec:revealed}, (\ref{eq:EfF})/(\ref{def:F-delta})  & $F$-residual graph/ $(\Delta,F)$-residual graph\\
$F^*$ & Sec~\ref{sec:revealed}, (\ref{def:Fcs}) & set of arcs tight in every optimal solution\\
$err_F(f)$ & Sec~\ref{sec:assump}, (\ref{def:err}) & ``error measure''\\
\hline
free/restricted arcs & Sec~\ref{sec:preliminaries} & \\
linear/nonlinear arcs &  Sec~\ref{sec:preliminaries}, after (\ref{assump:lin-nonlin-igazi}) & \\
linear acyclic arc set & end of Sec~\ref{sec:revealed}  & \\
pseudoflow & Sec~\ref{sec:preliminaries}, above (\ref{def:rho}) & \\
$F$-pseudoflow &  Sec~\ref{sec:revealed}, above (\ref{eq:EfF}) &\\
$\Delta$-feasible & Sec~\ref{sec:opt}, (\ref{eq:delta-feas})  & \\ 
$(\Delta,F)$-feasible & Sec~\ref{sec:revealed}, (\ref{eq:delta-F-feas}) & \\
$F$-optimal & Sec~\ref{sec:revealed}, (\ref{cond:F-opt}) & \\
$F$-tight & Sec~\ref{sec:assump}, (\ref{eq:tight}) & 
\end{tabular}

\end{document}